\newtheorem{theorem}{Theorem}[section]
\newtheorem{corollary}[theorem]{Corollary}
\newtheorem{lemma}[theorem]{Lemma}
\newtheorem{fact}[theorem]{Fact}
\newtheorem{observation}[theorem]{Observation}
\newcommand{\mbbR}{\mathbb{R}}
\newcommand{\mcO}{\mathcal{O}}
\newcommand{\OPT}{\mathrm{OPT}}
\newcommand{\OPTS}{\ensuremath{\operatorname{OPT}}}
\newcommand{\CORE}{\ensuremath{\operatorname{core}}}
\newcommand{\core}{\ensuremath{\operatorname{core}}}
\newcommand{\CMove}{\ensuremath{\operatorname{cost}_{\operatorname{move}}}}
\newcommand{\CMerge}{\ensuremath{\operatorname{cost}_{\operatorname{merge}}}}
\newcommand{\CMono}{\ensuremath{\operatorname{cost}_{\operatorname{mono}}}}
\newcommand{\dbar}{\bar{\delta}}
\newcommand{\vio}{D}
\DeclarePairedDelimiter\norm{\lVert}{\rVert}
\def\smin{\mathrm{smin}}
\def\fmin{\mathrm{smin}}
\def\deltabar{\bar{\delta}}
\def\cost{\operatorname{cost}}
\def\costhit{\operatorname{cost}_{\operatorname{hit}}}
\def\costmove{\operatorname{cost}_{\operatorname{move}}}
\def\costmerge{\operatorname{cost}_{\operatorname{merge}}}
\def\costmono{\operatorname{cost}_{\operatorname{mono}}}
\def\costmig{\operatorname{cost}_{\operatorname{mig}}}
\long\def\harry #1{{\color{red}\bfseries\mathversion{bold}[#1 -- Harry]\mathversion{normal}}}
\long\def\ruslan #1{{\color{blue}\bfseries\mathversion{bold}[#1 -- Ruslan]\mathversion{normal}}}
\long\def\stefan #1{{\color{orange}\bfseries\mathversion{bold}[#1 -- Stefan]\mathversion{normal}}}
\long\def\harry #1{}
\long\def\ruslan #1{}
\long\def\stefan #1{}
\begin{document}

\title{Polylog-Competitive Algorithms for Dynamic Balanced Graph Partitioning for Ring Demands}


\author[1]{Harald R\"acke}
\author[2]{Stefan Schmid}
\author[1]{Ruslan Zabrodin}
\affil[1]{Technical University of Munich}
\affil[2]{Technical University of Berlin}




\date{}
\maketitle
\begin{abstract}
The performance of many large-scale and data-intensive 
distributed systems critically depends 
on the capacity of the interconnecting network. This paper is motivated by the vision of 
self-adjusting infrastructures whose resources can be adjusted according to the workload they currently serve,
in a demand-aware manner. 
Such dynamic adjustments can be exploited to improve network utilization and hence performance, 
by dynamically moving frequently interacting communication partners closer, e.g., collocating them in the same server 
or datacenter rack. 

In particular, we revisit the online balanced graph partitioning problem
which captures the fundamental tradeoff
between the benefits and costs of dynamically collocating communication partners.
The demand is modelled as a sequence $\sigma$ (revealed in an online manner) of communication requests between 
$n$ processes, each of which is running on one of the $\ell$ servers. Each server has
capacity $k=n/\ell$, hence, the processes have to be scheduled in a balanced manner across the servers.
A request incurs cost $1$, if the requested processes are located on different servers, otherwise
the cost is 0. 
A process can be migrated to a different server at cost $1$.

This paper presents the first online algorithm for online balanced graph partitioning achieving a
polylogarithmic competitive ratio for the fundamental case of ring communication patterns.
Specifically, our main contribution is a $O(\log^3 n)$-competitive randomized online algorithm for this problem.
We further present a randomized online algorithm which is $O(\log^2 n)$-competitive when compared to 
a static optimal solution. Our two results rely on different algorithms and techniques and hence are
of independent interest.
\end{abstract}

\section{Introduction}
\label{sec:intro}
Data-centric applications, including distributed machine learning, batch processing, scale-out databases, or streaming, produce a significant amount of communication traffic and their performance often critically depends on the underlying datacenter network \cite{mogul2012we}.
In particular, large flows (also known as elephant flows) may require significant network resources if communicated across multiple hops, resulting in a high ``bandwidth tax'' and consuming valuable resources which would otherwise be available for additional flows~\cite{mellette2017rotornet,griner2021cerberus}.

An intriguing approach to reduce communication overheads and make a more efficient use of the available bandwidth capacity, is to leverage the resource allocation flexibilities available in modern distributed systems (e.g., using virtualization), and render the infrastructures self-adjusting: by collocating two processes which currently exchange much data on the same server or datacenter rack, in a demand-aware manner, communication can be kept local and resources saved.    
When and how to collocate processes however is an algorithmically challenging problem, as it introduces a tradeoff: as migrating a process to different server comes with overheads, it should not be performed too frequently and only when the migration cost can be amortized by an improved communication later. Devising good migration strategies is particularly difficult in the realm of online algorithms and competitive analysis, where the demand is not known ahead of time. 

The fundamental algorithmic problem underlying such self-adjusting infrastructures is known as dynamic balanced graph (re-)partitioning and has recently been studied intensively \cite{soda21,apocs21repartition,sidma19,infocom21repartitioning,disc16,computing18,sigmetrics19learn,netys17learn}, see also the recent SIGACT News article on the problem \cite{sigact}. 
In its basic form, the demand is modelled as a sequence $\sigma$ (revealed in an online manner) of communication requests between $n$ processes, each of which is running (i.e., scheduled) on one of the $\ell$ servers. Each server has
capacity $k=n/\ell$, hence, the processes have to be scheduled in a balanced manner across the servers. A request incurs cost $1$, if both requested processes are located on different servers, otherwise the cost is 0.  A process can be migrated to a different server at cost $1$.
The goal is to design online algorithms which do not know $\sigma$ ahead of time, yet, they are competitive against an optimal offline algorithm with complete knowledge of the demand. 

Unfortunately, deterministic online algorithms cannot achieve a low competitive ratio: Avin et al.~\cite{disc16} (DISC 2016) presented a lower bound of $\Omega(k)$ for any deterministic algorithm for this problem, even
if the communication requests are sampled from a ring graph, and even in a resource augmentation model. Accordingly, most related work revolves around deterministic algorithms with polynomial competitive ratios 
\cite{apocs21repartition,sidma19,infocom21repartitioning,disc16,netys17learn}.
Hardly anything is known about the competitive ratio achievable by randomized online algorithms, except for a ``learning variant'' introduced by Henzinger et al.~in \cite{sigmetrics19learn} (SIGMETRICS 2019) and later studied by Henzinger et al.~in \cite{soda21} (SODA 2021): in this learning model, it is guaranteed that the communication requests can be perfectly partitioned, that is, the requests in $\sigma$ are drawn from a graph whose connected components can be assigned to servers such that no connected component needs to be distributed across multiple servers. For this learning variant, the authors presented a polynomial-time randomized algorithm achieving a polylogarithmic competitive
ratio of $O(\log \ell + \log k)$ which is asymptotically optimal. 
Unfortunately, however, these results on the learning variant are not applicable to communication patterns which do not perfectly fit into the servers, but which continuously require inter-server communication and/or migrations. 

Motivated by this gap, we in this paper study the design of randomized online algorithms for the balanced graph partitioning problem. 
We consider two models: 
\begin{itemize}
 \item \emph{Static:} The performance of the online algorithm is compared to an optimal static solution (for the given demand). This static model can be seen as a natural generalization of the learning variant discussed in prior work: while in prior work, it is assumed that a static solution exists which does not accrue any communication cost, in our static model, we do not make such an assumption on the communication pattern.
 \item \emph{Dynamic:} The performance of the online algorithm is compared to an optimal dynamic solution, which may also perform migrations over time.
\end{itemize}

As a first step, we consider a most fundamental setting where the demand $\sigma$ is chosen from a ring communication pattern. This is not only interesting because a ring pattern cannot be solved by existing algorithms designed for the learning variant (the ring is a large connected component which does not fit into a server) and because of the high lower bound for deterministic algorithms mentioned above, but also because of its practical relevance: machine learning workloads often 
 exhibit ring like traffic patterns~\cite{horovod, all-reduce, tree-reduce,griner2021cerberus}.

\subsection{Our Contributions}

We present the first online algorithm for online balanced graph partitioning achieving a
polylogarithmic competitive ratio for the fundamental case of ring communication patterns.
We first present a randomized online algorithm which is $O(\log^2 n)$-competitive when compared to 
an optimal static solution; this ratio is strict, i.e., without any additional additive terms.
Our second and main contribution is a $O(\log^3 n)$-competitive randomized online algorithm for this problem
when compared to an optimal dynamic algorithm.
Our two results rely on different algorithms and techniques and hence are
of independent interest.

\subsection{Related Work}
\label{onl:sec:related}

The dynamic balanced graph partitioning problem was introduced by Avin et
al. \cite{disc16,sidma19}. Besides the lower bound mentioned above (which even holds for ring graphs), they also present a deterministic online
algorithm which achieves a competitive ratio of $O(k \log k)$. Their algorithm  however relies on expensive repartitioning
operations and has a super-polynomial runtime. Forner et al.~\cite{apocs21repartition} later showed that a  competitive ratio of $O(k \log k)$
can also be achieved with a polynomial-time online algorithm which monitors the
connectivity of communication requests over time, rather than the density.
Pacut et al.~\cite{infocom21repartitioning} further contributed an
$O(\ell)$-competitive online algorithm for a scenario without resource
augmentation and the case where $k = 3$.
The dynamic graph partitioning problem has also been studied from an offline perspective by Räcke et al.~who presented a polynomial-time $O(\log n)$-approximation algorithm \cite{racke2022approximate}, using LP relaxation and Bartal’s clustering algorithm to round~it.

Deterministic online algorithms for the ring communication pattern have also been studied already, in a model where the adversary needs to
generate the communication sequence from a random distribution in
an \emph{i.i.d.} manner~\cite{obr-ring,netys17learn}: in this scenario, it has been shown that even deterministic algorithms can achieve a polylogarithmic competitive ratio. The problem is however very different from ours and the corresponding algorithms and techniques are not applicable in our setting. 

So far, to the best of our knowledge, randomized online algorithms have only been studied in the learning variant introduced by Henzinger et al.~\cite{sigmetrics19learn,soda21}. In their first paper on the learning variant, Henzinger et al.~\cite{sigmetrics19learn} still only studied deterministic algorithms and presented a deterministic exponential-time algorithm with competitive ratio $O(\ell \log \ell \log k)$ as well as a lower bound of
$\Omega(\log k)$ on the competitive ratio of any deterministic online algorithm.
While their derived bounds are tight for $\ell=O(1)$ servers, 
there remains a gap of factor $O(\ell \log \ell)$ between upper and lower bound
for the scenario of $\ell=\omega(1)$.
In~\cite{soda21}, Henzinger et al.~present deterministic and randomized algorithms which achieve (almost)
tight bounds for the learning variant. In particular, a
polynomial-time randomized algorithm is described which achieves a polylogarithmic competitive
ratio of $O(\log \ell + \log k)$; it is proved that no randomized online
algorithm can achieve a lower competitive ratio. Their approach establishes and
exploits a connection to generalized online scheduling, in particular, the
works by Hochbaum and Shmoys~\cite{hochbaum87using} and Sanders et
al.~\cite{sanders09online}. 

More generally, our model is related to 
dynamic bin packing problems which allow for limited \emph{repacking}~\cite{FeldkordFGGKRW18}:
this model can be seen as a variant of our problem where pieces (resp.~items)
can both be dynamically inserted and deleted, and it is also possible to open new
servers (i.e., bins); the goal is to use only an (almost) minimal number of servers, and
to minimize the number of piece (resp.~item) moves.
However, the techniques of~\cite{FeldkordFGGKRW18} do not extend to our problem.

\subsection{Organization}

The remainder of this paper is organized as follows.
We introduce our model and given an overview of our results in Section~\ref{sec:model}.
The dynamic model is studied in Section~\ref{sec:dyn} and the static model in  Section~\ref{sec:static}. We conclude our paper in  Section~\ref{sec:conclusion}. 

\section{Model and Results}\label{sec:model}

We formally define the dynamic balanced graph partitioning problem for ring demands, as
follows. Let $\ell$ denote the number of servers, and $k$ the \emph{capacity}
of a server, i.e., the maximum number of processes that can be scheduled on a
single machine. We use $P=\{p_0, p_1, \dots, p_{n-1}\}$ with $n\le\ell k$ to
denote the set of processes. Naming of processes is done modulo $n$, i.e.,
$p_i$ with $i \ge n$ refers to process $p_{i\bmod n}$.

In each time step $t$ we receive a request $\sigma_t=\{p_j, p_{j+1}\}$ with
$p_j,p_{j+1} \in P$, which means that these two processes communicate. Notice,
that this restricts the communication pattern to a cycle.

Serving a communication request incurs cost of exactly $1$, if both requested processes
are located on different servers, otherwise~0. We call this the
\emph{communication cost} of the request.
After the communication an online algorithm may (additionally) decide to
perform an arbitrary number of migrations. Each migration of a process to
another server induces a cost of $1$, and contributes to the
\emph{migration cost} of the request.

In the end after performing all migrations each server should obey its capacity
constraint, i.e., it should have at most $k$ processes assigned to it. The goal
is to find an online scheduling of processes to servers for each time step that
minimizes the sum of migration and communication cost and obeys the capacity
constraints.

We will compare our online algorithms to optimum offline algorithms that on the
one hand are more powerful as they know the whole request sequence in advance,
but on the other hand are more restricted in the migrations that they are
allowed to perform.
Firstly, we use resource
augmentation. This means  the offline algorithms have to strictly obey the
capacity constraint, i.e., they can schedule at most $k$ processes on any
server, while the online algorithm may schedule $\alpha k$ processes on any
server, for some factor $\alpha > 1$. Then we say that the online algorithm
uses resource augmentation $\alpha$.

In this model we obtain the following result.
\begin{restatable}{theorem}{thmone} 
There is a randomized algorithm that solves the dynamic balanced graph partitioning
problem for ring demands with expected cost $O(\frac{1}{\epsilon}\log^3 k)\OPT + c$ and resource
augmentation $2+\epsilon$, where $\OPT$ is the cost of an optimal dynamic algorithm and $c$ is a constant
not dependent on the request sequence.
\end{restatable}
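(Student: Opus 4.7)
My plan is to combine a random hierarchical decomposition of the ring with an online subroutine at each scale, so that the overall cost decomposes additively across $O(\log k)$ levels of a laminar hierarchy.

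As a first step, I would cut the ring at a uniformly random point to turn it into a path. Since any fixed pair of adjacent processes is separated by the cut with probability $1/n$, the resulting extra expected communication cost is negligible compared to $\OPT$. On the resulting path, I would build a random laminar hierarchy of arcs: level $i$ consists of consecutive blocks of length $2^i$ with a random offset, for $i = 0, 1, \ldots, O(\log k)$. The crucial property is that any fixed request $\{p_j, p_{j+1}\}$ is separated by a level-$i$ boundary with probability $O(1/2^i)$, so in expectation each request is ``charged'' to only a logarithmic number of levels, and a request that $\OPT$ serves across servers must also be separated at some coarsest level.

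At each level $i$, the problem reduces to deciding which of the arcs at that scale should share servers. I would handle this via an online subroutine inspired by the randomized approach of Henzinger et al.\ for the learning variant, since within a single level the structure resembles a capacitated online assignment problem on a short ring of super-nodes of size $2^i$. Each such instance should be $O(\log k)$-competitive against a per-level optimum. The resource augmentation $2 + \epsilon$ plays two roles: it allows me to keep two copies of a process briefly during a migration (a factor of $2$ in space) and to absorb the slack introduced by random offsets, which can cause arcs to straddle scale boundaries so that the natural ``rounded'' assignment slightly exceeds $k$; the $1/\epsilon$ dependence then arises from the standard slack-based analysis of an online paging-like subroutine with augmentation.

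The final step is to combine the per-level competitive ratios with a charging argument against $\OPT$. The hard part will be showing that $\OPT$'s \emph{dynamic} migrations simultaneously lower-bound costs at every level: a careful potential-function argument is required because a single $\OPT$ migration can save cost at multiple scales at once, and I must avoid over-charging. Combining the $O(\log k)$ levels, each handled by an $O(\frac{1}{\epsilon}\log k)$-competitive randomized subroutine, together with an additional $O(\log k)$ factor from rounding the laminar hierarchy to actual server assignments (in the spirit of Bartal-style probabilistic decompositions), yields the claimed $O(\frac{1}{\epsilon}\log^3 k)\,\OPT + c$ bound, where the additive constant $c$ is the initial cost of moving processes into the algorithm's starting configuration.
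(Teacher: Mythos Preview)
Your proposal is not the paper's approach, and as written it has a genuine gap rather than merely being a different route.

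The paper does \emph{not} use a multi-level laminar hierarchy. It uses a single layer of intervals of length roughly $(1+\epsilon)k$, placed with one random shift $R$. Inside each interval the algorithm maintains a single cut-edge by running a black-box Metrical Task Systems algorithm on the line metric of that interval; this immediately gives $E[\ONL_R]\le\alpha(k)\cdot\OPT_R+c$ with $\alpha(k)=O(\log^2 k)$. The remaining $O(\frac{1}{\epsilon}\log k)$ factor comes from a single potential-function lemma (the ``well-behaved clustering'' lemma) that shows an intermediate strategy with segments of size at most $(1+\epsilon)k$ can track the dynamic optimum at cost $O(\frac{1}{\epsilon}\log k)\cdot\OPT$; the random shift then relates $\OPT_R$ to this intermediate strategy up to a constant. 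So the three logarithms in the paper are: two from MTS on a line, one from the potential argument comparing well-behaved to $\OPT$.

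In your plan, two load-bearing pieces are missing. First, the per-level subroutine is not specified: saying it is ``inspired by the learning variant'' is not enough, because the learning-variant algorithms of Henzinger et al.\ rely on the promise that a perfect static partition exists, and at no level of your hierarchy is that true for a ring demand. You would need to exhibit a concrete online problem at each level and an algorithm with the claimed $O(\frac{1}{\epsilon}\log k)$ ratio; neither is given. Second, you yourself flag the crux---charging the dynamic $\OPT$ simultaneously at all levels without overcounting---and then do not supply it. This is precisely the step that the paper handles with its well-behaved-strategy potential function, and it is not a detail one can defer: a single $\OPT$ migration can indeed affect $\Theta(\log k)$ levels, so a naive per-level lower bound would lose an extra $\log k$ factor, and you have not explained how to avoid this. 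Finally, the justification for the $2+\epsilon$ augmentation (``two copies of a process'') is not how the model works; augmentation here means each server may hold up to $(2+\epsilon)k$ processes, and in the paper it arises simply because a slice spans at most two consecutive intervals of length $(1+\epsilon)k$.
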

One disadvantage of this result is the additive constant $c$, which means that
the algorithm is not strictly competitive. Note that there may exist very long
request sequences that have a very low optimum cost. Then the above theorem
would not give good guarantees. 

In a second model we compare the performance of an online algorithm (that again
uses resource augmentation) to that of
an optimal \emph{static} algorithm. Such an algorithm is only allowed to
perform migrations in the beginning before the first request arrives. In this
model our goal is to be \emph{strictly competitive}.
\harry{Why is it trivial otherwise?} We show the following
theorem.\harry{should we add more justification for the model here?}

\begin{restatable}{theorem}{thmtwo}
There is a randomized algorithm that solves the dynamic balanced graph partitioning
problem for ring demands with expected cost $O(\frac{1}{\epsilon^2}\log^2 k)\OPTS$ and resource
augmentation $3+\epsilon$, where
$\OPTS$ is the cost of an optimal static algorithm.
\end{restatable}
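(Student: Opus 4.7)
My plan is to design a randomized online algorithm that combines three ingredients: (i) an outer doubling scheme to handle the unknown value of $\OPTS$, (ii) a randomly shifted hierarchical partition of the ring that controls communication cost, and (iii) an amortization charging migrations against prior communication cost. Since strict competitiveness is required (no additive term), I would wrap the algorithm in a doubling guess $B \in \{1, 2, 4, \ldots\}$ for $\OPTS$: whenever the cost accrued in a phase exceeds $c \cdot B \cdot \epsilon^{-2} \log^2 k$ for a suitable constant $c$, the algorithm doubles $B$ and restarts with a fresh random configuration. A standard geometric-sum argument bounds the total cost by $O(1)$ times the cost of the final phase, reducing the task to designing a ``budget-$B$'' subroutine whose expected cost is $O(\epsilon^{-2} \log^2 k) \cdot B$ whenever $\OPTS \le B$.

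For a fixed budget $B$, the subroutine samples a uniformly random shift $s \in \{0, \ldots, k-1\}$ and builds $\Theta(\log k)$ nested partitions of the ring: at level $i$, the ring is cut at positions $s + j \cdot k/2^i \pmod{n}$, giving arcs of length $k/2^i$. Processes are initially assigned according to the coarsest partition, one arc per server; the $3+\epsilon$ resource augmentation supplies the slack required to absorb both the wrap-around caused by the random shift and the transient overflows created by migrations. Upon each request, the algorithm identifies the finest level $i^\ast$ at which the requested edge is cut; it either pays the communication cost alone, or additionally triggers a migration that consolidates the two endpoints into the same server, charged against credit accumulated by previous communication events within the same level-$i^\ast$ arc.

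The analysis would decompose both the algorithm's cost and $\OPTS$'s cost into per-level contributions. The random shift makes each ring edge a cut at level $i$ with probability $O(2^i/k)$; any edge cut by $\OPTS$ is thus ``witnessed'' by the algorithm at each level with a controlled probability, summing to $O(\log k)$ expected communication charges per unit of $\OPTS$ cost. A second $O(\log k / \epsilon)$ factor arises from the migration amortization, where each credit is fully spent within $\Theta(\log k / \epsilon)$ level-crossings in the worst case, producing the final $O(\epsilon^{-2} \log^2 k)$ ratio. The principal obstacle is controlling cascaded migrations: a consolidation at level $i^\ast$ can force reconfiguration at coarser levels whose capacities must remain respected. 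I expect the $3+\epsilon$ augmentation to enter precisely here, so that each server can simultaneously host its base assignment, its current migration overflows, and the level-$i^\ast$ arc being consolidated, with the $\epsilon$-slack making the amortization multiplicative rather than additive. A secondary technical issue is the cyclic wrap-around of the ring, which I would handle by an independent random ``cut'' of the ring at the start of each doubling phase, reducing the analysis to a line segment while losing only constant factors.
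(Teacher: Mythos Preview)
Your proposal has a genuine gap that breaks strict competitiveness. The subroutine for budget $B$ begins by sampling a random shift and migrating all processes to match the coarsest partition (``Processes are initially assigned according to the coarsest partition, one arc per server''). This initial migration alone can cost $\Theta(n)$, independent of $\OPTS$. But $\OPTS$ can be arbitrarily small: if no request ever crosses an initial server boundary, $\OPTS=0$, yet your algorithm already pays $\Theta(n)$ in the first phase. The doubling wrapper does not help, since each restart again pays a setup cost unrelated to $\OPTS$; you will burn through $\Theta(\log n)$ phases each costing $\Theta(n)$ before $B$ is large enough to absorb the setup, and the total cost is not bounded by any function of $\OPTS$. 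More generally, any scheme that is oblivious to the \emph{initial placement} of processes cannot be strictly competitive against a static optimum that may exploit that placement.

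The paper's approach is built precisely around this obstacle and is quite different from yours. It anchors the algorithm at the initial configuration: a cut-edge (and a length-$2$ interval) is created at every edge whose endpoints start on different servers, and these initial intervals have cost $0$. Each interval runs a multiplicative-weights (soft-min) hitting game that moves its cut-edge and \emph{grows} the interval only when the minimum request count inside reaches $(1-\bar\delta)|I|$. The key is that growth simultaneously produces a lower bound on $\OPTS$ restricted to $I$: either $\OPTS$ keeps a cut inside $I'=\core(I)$ and pays that many hits, or it recolored at least $(1-\bar\delta)|I'|$ processes because $I'$ is not $\bar\delta$-monochromatic. Intervals are deactivated when they become $\bar\delta$-monochromatic (with respect to \emph{initial} colors) or dominated by a larger interval; structural lemmas then show slices between active cut-edges stay small and that every process lies in $O(\log k)$ intervals, which yields $\sum_I \OPT(I)\le O(\log k)\,\OPTS$. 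None of this uses a budget guess, a random global shift, or a restart; the cost of the online algorithm grows only as $\OPTS$ does.
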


\section{The Dynamic Model}
\label{sec:dyn}
In this section we present an online polylog-competitive algorithm against a dynamic optimal algorithm. The main idea of our online algorithm is the reduction of the balanced graph partitioning problem for ring demands to the metrical task systems problem on a line. Metrical task system (MTS) is an online minimization problem that was first introduced in~\cite{borodin_mts} and has been extensively studied with various results for different types of the underlying metric~\cite{borodin_mts, bartal_mts_log6, fiat_unfair_mts_2003, bartal_mts_2006,bubeck_mts_log2}. There exist a tight $(2n-1)$ competitive deterministic algorithm~\cite{borodin_mts} and a tight $\mcO(\log^2 n)$ competitive randomized algorithm~\cite{bubeck_mts_log2,kserver_mts_lower_bound_log2}, if $n$ is the number of states in the system. 
\ruslan{say here something about the connection to MTS, cite?}

\subsubsection*{Notation}\ruslan{without subsection?}
For ease of exposition we model the problem of scheduling processes on servers
as a dynamic coloring problem, as follows. We identify each server $s$ with a
unique color $c_s$ and \emph{color} a process with the color of the server on
which it is currently scheduled.

We call the set of consecutive processes $S=\{p_s,\dots, p_{s+\ell-1}\}$ the \emph{segment} of length $\ell$ starting with $p_s$. For ease of notation we use $S=[s, s+\ell-1]$ to refer to this set.
We refer to a process pair $\{p_i,p_{i+1}\}$ as an \emph{edge} of the
cycle. To simplify the notation we denote such an edge with $(i,i+1)$.

For an algorithm $\mathrm{ALG}$ we use $\mathrm{ALG}(\sigma)$ to denote the cost
of the algorithm $\mathrm{ALG}$ on $\sigma$. However, usually $\sigma$ will be
clear from the context. Then we use $\mathrm{ALG}$ to denote both, the algorithm and
its cost.

\subsection{Algorithm}
Our strategy is to maintain a set of cut-edges which partition the cycle into
\emph{slices}. We ensure that we have at most $\ell$ slices, each of size at most
$(2+\epsilon)k$, such that we can map the slices directly to servers with
resource augmentation $(2+\epsilon)$. Each cut-edge is contained in an interval
$I$, i.e., its position is constrained to the interval. The problem of
choosing the cut-edge in some interval $I$ reduces to the metrical task systems
problem (MTS). The idea is to run a black box MTS algorithm for each interval
independently.
\harry{what is MTS?}
\stefan{
maybe mention the challenge here, so that it does not sound too simple? Like: such a specific MTS however first has to be developed?}

\def\ONL{\mathrm{ONL}}
Formally, let $k':=\left\lceil(1+\epsilon)k\right\rceil$ and $\ell'=\left\lceil\frac{n}{k'}\right\rceil$.
The algorithm $\ONL_R$ uses a \emph{shift parameter} $R\in\{0,\dots,k'-1\}$
that gives rise to intervals $I_1,\dots,I_{\ell'}$, where $\{I_i=[R+(i-1)k',R+ik']\}$.
Observe, that successive intervals share one vertex and intervals $I_{\ell'}$
and $I_1$ may also share edges. Each vertex is contained in at most two intervals.

\paragraph*{Reduction to MTS}\harry{define MTS}
A metrical task systems problem is defined as follows. We are given a metric
$(S, d)$ with $|S|=n$ states and a starting state $s_0 \in S$. Each time step
we receive a task $\sigma_i$ and a cost vector $T_i$, with $T_i(s)$ describing
the cost of processing the task $\sigma_i$ in state $s$. In response to
$\sigma_i$ an algorithm for the metrical task systems must choose a state
$s_i \in S$ and pay the cost $d(s_{i-1}, s_i) + T_i(s_i)$. The goal is to
minimize the overall incurred cost.

The online algorithm starts an MTS-instance for every interval. This instance
is responsible for choosing an edge inside the interval and is defined as
follows. Let $\sigma_I$ be the restriction of the request sequence to the
requests in $I$. For each interval $I$ we start an instance $M_I$ of an MTS
algorithm, where the states are the edges of the interval. On a request
$e \in \sigma_I$ we generate a cost vector $T$ with $T(e') = 1$, if $e'=e$ and
$T(e')=0$ otherwise. We forward the cost vector $T$ to the MTS instance $M_I$
and observe the new state $e'' \in I$. Subsequently, we move our cut-edge to $e''$.

\paragraph*{Server Mapping} The cut-edges inside the intervals induce a mapping
of processes to servers as follows.
Let $e_i=(a, a+1)$ and $e_{i+1}=(b,b+1)$ be the cut-edges chosen in intervals $I_i$
and $I_{i+1}$, respectively. We schedule the segment $[a+1, b]$ on the server
$s_i$. Observe, that a server receives exactly one slice. However, the slice
formed between $e_{\ell'}$ and $e_1$ could be empty because of the overlap
between intervals $I_{\ell'}$ and $I_1$.
\stefan{
bin nicht ganz sicher ob der ganze Algorithmus klar wird, dh. wie die Teile zusammenwirken. Könnte es Sinn machen, am Schluss nochmals einen Pseudocode zu geben?}
\subsection{Analysis}
We first argue that the schedule produced by the online algorithm roughly balances the
processes among the servers.
\begin{lemma}\label{lemma:dyn_load}
The load of each server is at most $2(1+\epsilon)k$.
\end{lemma}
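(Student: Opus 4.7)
The plan is to directly bound the size of each slice assigned to a server using the fact that each cut-edge is constrained to lie in its associated interval of length $k' = \lceil(1+\epsilon)k\rceil$. Since server $s_i$ gets exactly the slice that lies between the cut-edge chosen in $I_i$ and the cut-edge chosen in $I_{i+1}$, bounding the server load reduces to bounding the distance between two consecutive cut-edges.

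First I would unfold the server-mapping: if the cut-edge in $I_i$ is $(a,a+1)$ and the cut-edge in $I_{i+1}$ is $(b,b+1)$, then server $s_i$ is assigned the segment $[a+1,b]$, whose size is exactly $b-a$. By the construction of the MTS instance, the cut-edge $(a,a+1)$ lies inside $I_i = [R+(i-1)k',\,R+ik']$, so $a \in [R+(i-1)k',\,R+ik'-1]$; analogously, $b \in [R+ik',\,R+(i+1)k'-1]$.

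Subtracting the extreme values yields $b - a \le 2k' - 1 \le 2\lceil(1+\epsilon)k\rceil$, which is at most $2(1+\epsilon)k$ up to the rounding in the definition of $k'$ (this small slack can be absorbed by slightly shrinking $\epsilon$ in the statement of the algorithm, or by redefining $k'$ so that equality holds exactly). The same argument handles the wrap-around slice between the cut-edges chosen in $I_{\ell'}$ and $I_1$: these two intervals may share edges, so the corresponding slice can only be smaller (possibly even empty), and the bound still holds.

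I do not anticipate a real obstacle here; the lemma is essentially a structural consequence of how the intervals are defined. The only subtlety to watch for is the boundary case caused by the overlap between $I_{\ell'}$ and $I_1$, which must be accounted for separately so that no slice is double-counted and the wrap-around server is still bounded by $2(1+\epsilon)k$.
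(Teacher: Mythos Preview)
Your proposal is correct and follows essentially the same approach as the paper: both arguments bound the slice size by observing that the cut-edges lie in two consecutive intervals of length~$k'$, so the slice has at most (roughly) $2k'$ processes. Your version is just more explicit about the endpoint arithmetic and the ceiling in $k'$, whereas the paper states the bound in one sentence and tacitly absorbs the rounding.
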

\begin{proof}
A slice is the set of processes between the two cut-edges in two consecutive
intervals. Since the intervals contain at most $(1+\epsilon)k+1$ processes,
a slice can contain at most $2(1+\epsilon)k$ processes.
\end{proof}

The overall cost of the algorithm consist of two parts. The first part is
the \emph{communication cost}, which we denote by $\costhit$. The second part
is the \emph{migration cost}, which we denote by $\costmig$. 
\def\ALG{\mathrm{ALG}}
\subsubsection*{Interval Based Strategy}
For the analysis we introduce different types of (optimum) algorithms under
various restrictions and compare them to one another. The first concept is the
concept of an \emph{interval based strategy}. This is an algorithm that has to
choose a cut-edge in each interval and pays $1$ if the cut-edge is requested, and
$d$ if it moves the cut-edge by $d$ positions. For an interval $I$ and an
interval based strategy $\ALG$ we define
$\costhit^\ALG(I)$ as the total hit cost experienced by algorithm $\ALG$ on the
cut-edge maintained in interval $I$. Similarly, we define $\costmove^\ALG(I)$
as the cost for moving the cut-edge. Observe that our online algorithm $\ONL$
is an interval based strategy. \harry{ich weiss nicht ob das jetzt wirklich
  sauber ist...}
\begin{observation}
The communication cost and migration cost of the online algorithm
can be bounded by the interval cost. This means
\begin{itemize}
\item $\costhit \le \sum_I\costhit^\ONL(I)$
\item $\costmig \le \sum_I\costmove^\ONL(I)$
\end{itemize}
\end{observation}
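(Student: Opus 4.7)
The plan is to show each bound separately by unpacking how the server mapping is determined by the cut-edges maintained by the MTS instances.

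For the communication-cost bound, I would first observe that the online algorithm's mapping places processes $p_j$ and $p_{j+1}$ on distinct servers if and only if the edge $(j,j+1)$ is one of the cut-edges $e_1,\dots,e_{\ell'}$ chosen by the MTS instances (this is immediate from the construction ``we schedule the segment $[a+1,b]$ on server $s_i$''). Fix a request $\sigma_t=\{p_j,p_{j+1}\}$ that contributes $1$ to $\costhit$. By the above characterization, $(j,j+1)=e_i$ for some $i$, where $e_i$ is the current state of the MTS instance $M_{I_i}$ for the interval $I_i$ containing this edge. Since the request is forwarded to $M_{I_i}$ with a cost vector assigning $1$ to precisely this edge, $M_{I_i}$ pays hit cost $1$ at step $t$, so $\costhit^\ONL(I_i)$ grows by at least $1$. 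Summing over all requests (and noting that edges shared between two intervals only make the right-hand side larger) yields $\costhit\le\sum_I\costhit^\ONL(I)$.

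For the migration-cost bound, the key step is to tie cut-edge movements inside a single interval to process migrations between the two servers adjacent to that interval. Suppose the MTS instance $M_{I_i}$ moves its cut-edge from $(a,a+1)$ to $(b,b+1)$, incurring move cost $|b-a|$. By the server-mapping rule, this changes exactly the endpoints of the slices assigned to servers $s_{i-1}$ and $s_i$: the processes in the symmetric difference of the two segments, namely $\{p_{\min(a,b)+1},\dots,p_{\max(a,b)}\}$, change their server assignment. This is $|b-a|$ migrations, matching the MTS move cost exactly.

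The only thing that looks like a genuine hazard is the possibility of double-counting when two cut-edges shift in the same step: could a process be migrated by both $M_{I_i}$ and $M_{I_{i+1}}$? I would rule this out by noting that $M_{I_i}$'s movements only shuffle processes across the boundary $e_i$ (between $s_{i-1}$ and $s_i$), while $M_{I_{i+1}}$'s movements only shuffle processes across $e_{i+1}$ (between $s_i$ and $s_{i+1}$); as long as cut-edges stay inside their own intervals (which share only a vertex, not edges, except possibly for the wrap-around pair $I_{\ell'}$/$I_1$), the two sets of migrated processes are disjoint. Consequently each real migration is charged to exactly one $\costmove^\ONL(I_i)$, yielding $\costmig\le\sum_I\costmove^\ONL(I)$.
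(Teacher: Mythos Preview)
Your proof is correct and follows essentially the same approach as the paper, which argues more tersely that each unit move of a cut-edge changes the incident slices by at most one process (hence at most one migration) and each paid communication request lies on some interval's current cut-edge. Note that the ``hazard'' you identify is not actually a hazard for the stated inequality: since you are proving an \emph{upper} bound, any overlap between intervals (in particular at the wrap-around pair $I_{\ell'}/I_1$, where your disjointness claim can fail) only makes the right-hand side larger, so the disjointness argument is unnecessary.
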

\begin{proof}
If an edge is moved in an interval (by one position) the incident slices may change by at most
one process. This means at most one process has to be migrated to a different
server (note that due to the overlap it could happen that no slice changes when
an edge is moved). Similarly, whenever two processes located on different
servers want to communicate this communication takes place along a cut-edge.
This is a cut-edge in at least one interval (perhaps in two due to overlap) and
therefore this communication cost is counted in the interval cost.
\end{proof}
We define $\ONL_R:=\sum_I\costhit^\ONL(I)+\sum_I\costmove^\ONL(I)$ and use it as
a proxy for the cost of our online algorithm when using shift parameter $R$
(observe that the real cost could be lower due to the above observation).
Let $\OPT_R:=
\sum_I\costhit^{\OPT}(I)+
\sum_I\costmove^{\OPT}(I)$ denote the cost of
an optimal interval based strategy (that uses shift parameter $R$).

\begin{lemma}\label{lemma:cost_r}
For any $R$ we have $E[\ONL_R]\le\alpha(k)\cdot\OPT_R+c$, where $\alpha(k)$
is the competitive ratio of the underlying MTS algorithm on $k$ states, and $c$
a constant, that is independent of the request sequence.
\end{lemma}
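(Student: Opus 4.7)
The plan is to reduce the claim to a per-interval application of the MTS competitive ratio, exploiting that both $\ONL_R$ and $\OPT_R$ decompose as sums over intervals.

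First I would observe that an interval based strategy is completely decoupled across intervals: the hit cost in interval $I$ only depends on requests falling in $I$ and the position of the cut-edge maintained in $I$, and the move cost in $I$ only depends on movements of that same cut-edge. Hence an optimal interval based strategy can optimize each interval separately, and we may write
\[
\OPT_R \;=\; \sum_I \bigl(\costhit^{\OPT}(I) + \costmove^{\OPT}(I)\bigr) \;=\; \sum_I \OPT(I),
\]
where $\OPT(I)$ is the offline optimum for the single-interval problem on $I$.

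Second I would match the single-interval problem to an MTS instance. The states are the (at most $k'$) edges of $I$, the metric is the shortest-path distance on the path formed by these edges (which is a $1$-dimensional line metric), and the task vector at time $t$ is the indicator of the requested edge if $\sigma_t \in I$ and the all-zero vector otherwise. By construction of $\ONL_R$, the cost it pays inside $I$ is exactly the cost of the underlying MTS algorithm on this instance, and $\OPT(I)$ is exactly the offline MTS optimum. Therefore the competitive guarantee of the black-box MTS algorithm gives, for each interval $I$,
\[
E\bigl[\costhit^\ONL(I) + \costmove^\ONL(I)\bigr] \;\le\; \alpha(k)\cdot \OPT(I) + c_I,
\]
where $c_I$ is a constant depending only on the initial cut-edge location and the diameter of the path metric on $I$ (and not on $\sigma$).

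Finally I would sum over intervals and use linearity of expectation:
\[
E[\ONL_R] \;=\; \sum_I E\bigl[\costhit^\ONL(I) + \costmove^\ONL(I)\bigr] \;\le\; \alpha(k)\cdot \sum_I \OPT(I) + \sum_I c_I \;=\; \alpha(k)\cdot \OPT_R + c,
\]
with $c := \sum_I c_I$. Since there are at most $\ell'$ intervals and each $c_I$ is bounded in terms of $k'$ only (the diameter of the path metric), $c$ is a constant independent of the request sequence, as claimed.

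The only subtle point, and the one I would write out carefully, is the decomposition of $\OPT_R$: one must verify that the definition of an interval based strategy really does permit independent optimization per interval (in particular that the resource augmentation and capacity constraints are absorbed into the interval lengths $k'$ and play no further role in the cost accounting). Once that is established, everything else is a direct invocation of the MTS black box plus linearity of expectation.
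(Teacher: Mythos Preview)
Your proposal is correct and follows essentially the same argument as the paper: identify each interval's subproblem as an MTS instance on a line with $k'$ states, apply the $\alpha(k)$-competitive guarantee per interval to get $E[\ONL_R(I)]\le\alpha(k)\cdot\OPT_{\text{MTS}}(I)+c'$, and sum over the $\ell'$ intervals to obtain the additive constant $c=\ell' c'$. The ``subtle point'' you flag about the decomposition of $\OPT_R$ is handled in the paper by simply noting that $\OPT_{\text{MTS}}(I)$ equals $\costhit^{\OPT_R}(I)+\costmove^{\OPT_R}(I)$ by definition of the interval based strategy, which is exactly the independence observation you make.
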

\begin{proof}
Each subproblem on $I$ is essentially a metrical task systems problem on a line
metric. An interval $I$ consist of $k'$ edges that form the states. The
cut-edge $e$ is the current state. On a request $e \in I$, we generate the
cost vector $T$, where $T(e') = 1$ if $e'=e$ and 0 otherwise. Let $\sigma_{I}$
be the request sequence, constrained only to edges in $I$. Let
$\OPT_{\text{MTS}}(I)$ be the cost of an MTS algorithm (on the line $I$), that
serves $\sigma_{I}$ optimally. By definition this is equal to $\costhit^{\OPT_R}(I)+
\sum_I\costmove^{\OPT_R}(I)$.

Because we are using an $\alpha(k)$-competitive algorithm in each interval we get
$E[\ONL_R(I)]$ $\le\alpha(k)\cdot\OPT_{\text{MTS}}(I)+c'$ for some constant $c$ that does not
depend on the request sequence. Summing over all $I$ gives
$\ONL_R\le\alpha(k)\cdot\OPT_R+\ell c'$.
\end{proof}


\subsubsection*{Well Behaved Strategy}
Now we introduce a so-called \emph{well behaved} clustering strategy and
analyze its cost. We first show that a well behaved strategy can simulate the
real optimum with a small loss and then we show that an interval based strategy
with a random choice of $R$ can simulate the best well behaved strategy with a
small loss. Then the final result follows from Lemma~\ref{lemma:cost_r}.

We define a well behaved clustering strategy $W$ as a strategy that maintains a
set of cut edges $E_W=\{e_1, \dots, e_m\}$ which partition the cycle into
segments $S_1, \dots, S_m$. It can perform two operations:
\begin{itemize}
\item \emph{Move}: The cut edge $e=(i,i+1)$ is moved to $e'=(j,j+1)$ which
induces cost $|j-i|$. A \emph{merge} of two segments is simulated by a move
operation, where we move a cut edge $e$ to the position of another cut edge
$e'$ and remove $e$ from~$E_W$. 

\item \emph{Split}: The segment $S_i$ is split into several subsegments by
introducing new cut edges in $S_i$. The split operation induces no cost.
\ruslan{motivate here why the cost is zero?}
\end{itemize}
On a request $\sigma_i$, if $\sigma_i \in E_W$, then the algorithm has cost 1
(\emph{hitting cost}) or it moves the corresponding cut edge and pays the moved
distance (\emph{moving cost}). Furthermore, the size of a segment $S_i$ is
limited to $(1+\epsilon)k$. For technical reasons we assume
$\epsilon \le \frac{1}{4}$.

\def\W{\mathrm{W}}
The following crucial lemma shows that an optimal well behaved clustering
strategy is at most an $\mathcal{O}(\log k)$ factor away from $\OPT$.
\begin{lemma}\label{lemma:well_behaved}
	There exists a well behaved clustering strategy $\W$ with cost at most
        $\frac{4}{\epsilon}\log k\cdot\OPT + 2n\log k$.
        \end{lemma}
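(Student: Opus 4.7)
The plan is to exhibit a specific $W$ built directly from $\OPT$ and then charge $W$'s cost against $\OPT$'s cost via a hierarchical argument.

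High-level construction: at every time step $W$ will arrange its cut-edges to align with the inter-server boundaries of $\OPT$'s current partition on the ring. Since each $\OPT$ server holds at most $k$ processes, every maximal arc of a single $\OPT$ server has size at most $k\le(1+\epsilon)k$, so such a $W$ respects the segment-size constraint. Because a request is hit by $W$ iff it crosses a cut-edge, and a request incurs communication cost for $\OPT$ iff it crosses an inter-server boundary, the hitting cost of $W$ equals the communication cost of $\OPT$ exactly; no loss occurs on that side.

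The core difficulty is the \emph{moving cost}. When $\OPT$ migrates a process, an inter-server boundary on the ring either appears (handled by a free split) or disappears. A disappearing cut-edge must be removed by moving it to merge with another cut-edge, and the naive choice of merging with the nearest neighbour can cost up to $k$ per $\OPT$-migration and can also push the resulting segment above $(1+\epsilon)k$ in size. I would resolve both issues by maintaining a hierarchy of $\log k$ scales: at scale $j$, $W$ maintains marker positions spaced roughly $2^{j}$ apart, and a vanishing cut-edge is moved at the smallest scale at which a legal merge is possible (i.e.\ where the resulting segment remains within $(1+\epsilon)k$). Each $\OPT$-migration then contributes at most $2^j$ to the move cost at level $j$; careful amortization keeps the charge per level bounded by a constant times $\OPT$'s migration cost. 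The $\epsilon$-slack in the capacity affords a ``lazy'' rebalancing that yields the $1/\epsilon$ factor, while summing over $\log k$ scales yields the $\log k$ factor. The free split operation allows $W$ to reintroduce cuts whenever $\OPT$ reintroduces boundaries, so only the disappearance direction is charged.

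The main obstacle will be the capacity bookkeeping: one must show that the hierarchical merge procedure never produces a segment of size exceeding $(1+\epsilon)k$ while still keeping the total move cost within the claimed bound. I anticipate a potential-function argument, in which the potential tracks the surplus capacity currently unused inside each segment at each scale: an $\OPT$-migration can raise the potential by at most a constant per scale, and $W$ pays real move cost only when the potential is released. Finally, the additive $2n\log k$ term arises naturally from the initial setup: placing cut-edges at all $\log k$ scales around the $n$-vertex ring costs at most $2n\log k$ before any request is served, and this one-time cost cannot be amortized into $\OPT$.
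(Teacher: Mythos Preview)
Your plan contains a structural gap. You commit $W$ to aligning its cut-edges \emph{exactly} with $\OPT$'s inter-server boundaries at every step---which does equate hitting costs---but then propose to tame the moving cost by a ``hierarchy of $\log k$ scales'' with lazy merging. These two commitments pull against each other: if $W$ tracks $\OPT$ exactly, no laziness is available, and a single pair of $\OPT$-migrations can force $W$ to pay $\Theta(k)$. Concretely, let $\OPT$ migrate a process $p_i$ out of the interior of a monochromatic arc of length close to $k$ and then migrate it back; $\OPT$ pays $2$, while exact tracking makes $W$ first split (free) and then delete both new cut-edges, the second of which must travel $\Theta(k)$ to the nearest surviving boundary. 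Your hierarchy offers no intermediate cut-edge to merge with in this instance, and in any case the hierarchy is never actually defined---you write that you ``anticipate a potential-function argument'' without supplying one. Your account of the additive $2n\log k$ is also off: splits are free in the well-behaved model, so ``placing cut-edges at all $\log k$ scales'' costs nothing and cannot be the source of that term.

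The paper avoids this tension by \emph{not} aligning $W$ with $\OPT$. It maintains only the inclusion $E_W\subseteq E_O$ together with the invariant that every $W$-segment is $\delta$-monochromatic for $\OPT$'s \emph{current} coloring, where $\delta=\tfrac{1}{1+\epsilon}$ (this is what forces $|S|\le(1+\epsilon)k$). Every process that $\OPT$ migrates is \emph{marked}, and the potential
\[
\Phi=\tfrac{1+\epsilon}{\epsilon}\log(k')\cdot M+\sum_{S}|S|\log\bigl(\tfrac{k'}{|S|}\bigr),
\]
with $M$ the current number of marks, pays for all of $W$'s moves. Each $\OPT$-migration raises $\Phi$ by $\tfrac{1+\epsilon}{\epsilon}\log(k')$; whenever an edge of $E_W$ drops out of $E_O$, $W$ performs one of three local repairs (Merge, Move, Cut-out), each shown to have non-positive amortized cost by releasing marks and/or the entropy term; a free Split restores $\delta$-monochromaticity when needed, again at non-positive amortized cost. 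The additive $2n\log k$ is simply the initial potential $\Phi_0$.
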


\begin{proof}
We develop a well behaved clustering algorithm $\W$ with the knowledge of
$\OPT$s choices. Let $c(p)$ be the color of the server where $\OPT$ places $p$
at the current time step. We denote by $E_O=\{e=(i,i+1) |~c(i) \neq c(i+1)\}$
the cut-edges of the optimal algorithm.

        \def\IH{\ensuremath{(\mathit{IH})}}
	\def\IM{\ensuremath{(\mathit{IM})}}
	\def\IS{\ensuremath{(\mathit{IS})}}

Let $\mathcal{S}_\W$ be the set of segments algorithm $\W$ produces. Segment
$S\in \mathcal{S}_\W$ is \emph{$\delta$-monochromatic}, if at least $\delta|S|$
processes in $S$ have the same color $c$. We fix
$\delta := \frac{1}{1+\epsilon}$. Each time $\OPT$ migrates a process, we mark
it. Marked processes will be a source of potential for our algorithm.
At some point we will remove marks from processes. We maintain the following invariants:
	\begin{itemize}
		\item[$\IH$] $E_\W \subseteq E_O$. 
		\item[$\IM$] All segments $S$ of algorithm $\W$ are $\delta$-monochromatic.
		\item[$\IS$] All processes of a segment $S$ with non-majority color are marked.
	\end{itemize}
	
	Due to invariant $\IH$ the hitting cost of $\W$ is at
        most the hitting cost of $\OPT$. Due to invariant $\IM$ the size of
        each segment $S$ is at most $(1+\epsilon)k$, since the number of
        processes with the majority color in $S$ is at least $\delta|S|$ and
        at most $k$. This gives $|S| \le (1+\epsilon)k$.

        At the start, the segments of $\W$ are essentially 1-monochromatic
        segments of the initial distribution; $E_W = E_O$ and there
        are no marked processes. On a new request $\sigma_t$ we mark all
        processes $\OPT$ migrated in time step $t$. Let $M_t$ be the number of
        marked processes after time step $t$ and let $k'=(1+\epsilon)k$.
	We introduce the potential function
        \begin{equation*}
	  \Phi_t=\frac{1+\epsilon}{\epsilon}\log(k') M_t + \sum_{S \in
          \mathcal{S}_W} |S|\log(\tfrac{k'}{|S|})\enspace.
        \end{equation*}

    Now, we analyze the amortized costs that we incur due to the movements of $\OPT$~and our adjustments. 
	
    \paragraph*{OPT Movement.} Let
    $o_t$ denote the number of newly marked processes in time step $t$. Then,
    the potential increases by
    $\Delta \Phi_{\mathrm{opt}} = \frac{1+\epsilon}{\epsilon}\log(k') o_t$. $\OPT$'s moving cost in this time step is at least $\sum_t o_t$.

    \bigskip
    After $\OPT$ performed its movements, we have to maintain the invariants.
    Let $e_j \in E_\W\setminus E_O$. The edge $e_j$ separates two segments, $L$
    and $R$, in for algorithm $\W$. Prior to request $\sigma_t$, the segments $L$ and $R$ were
    $\delta$-monochromatic due to invariant $\IM$. Let $c_L$ and $c_R$ be their
    majority colors, respectively. For each $e_j \in E_\W\setminus E_O$ we
    perform a merge, a move, or a cut-out operation.
	
    \paragraph*{Merge Operation.}If $c_L=c_R$, we merge the segments $L$ and $R$ (wlog.
    $|L| \le |R|$) and pay $c_{\textit{merge}}=|L|$. The new segment $S=L\cup R$ has the
    size $|S|=|L|+|R|$.
	Notice, that the invariant $\IS$ is maintained, we do not remove any marks. Let $|L|=l$ and $|R|=r$. Our change in the potential function is then:
        \begin{equation*}
        \begin{split}
	\Delta \Phi_{\textit{merge}} &= |S|\log(\tfrac{k'}{|S|}) - r\log(\tfrac{k'}{r}) - l\log(\tfrac{k'}{l})\\
	&=  l(\log(\tfrac{k'}{l + r}) - \log(\tfrac{k'}{l})) + r(\log(\tfrac{k'}{l + r}) - \log(\tfrac{k'}{r}))\\
	&= l\log(\tfrac{l}{l + r}) + r\log(\tfrac{r}{l + r})\\
	&= l\log(1-\tfrac{r}{l + r}) + r\log(1-\tfrac{l}{l + r})\\
	&\le l(-\tfrac{r}{l + r}) + r(-\tfrac{l}{l + r})\\
	&= -l(\tfrac{2r}{l + r})\\
	&\le -l
        \end{split}
\end{equation*}
The first inequation is due to the well known fact $\log(1+x) \le x$. The second inequality holds because $l \le r$.
	Thus, the amortized cost of the merge operation is $\tilde{c}_{\textit{merge}}=c_{\textit{merge}} + \Delta \Phi_{\textit{merge}} \le 0$.

	\bigskip
        Now suppose that $c_L\neq c_R$. Let $e_l=(l, l+1) \in E_O$ be the
        nearest cut edge of $\OPT$ "left" of $e_j$, and $e_r=(r, r+1) \in E_O$
        the nearest cut edge of $\OPT$ "right" of $e_j$, respectively. Let
        $F=[l+1, r]$. By construction $F$ contains only processes with the same
        color $c$. We differentiate whether one of the colors $c_L, c_R$ is
        equal to $c$ or not. We need the following fact for the analysis.

	\begin{fact}\label[]{fact:f_d}
		Let $f(d) = (s-d)\log(\frac{s}{s-d})$ with constants $s\ge 2$, $1\le d < s$.
		Then, $f(d) \le d$.
	\end{fact}
	\begin{proof} We have
			$f(d) = (s-d)\log(\frac{s}{s-d})
			= (s-d)\log(1+\frac{d}{s-d})
			\le (s-d)(\frac{d}{s-d})
			= d$,  where
		the inequality is due to the well known fact $\log(1+x) \le x$.
	\end{proof}
	
	\paragraph*{Move Operation.} If $c_L=c$, all processes in $F\cap R$ must be marked
        due to invariant $\IS$. Then, we move $e_j$ to $e_r$, a distance
        $d=|F\cap R|$ and remove the marks of $F\cap R$. If $c_R=c$, we act
        analogously. Our actual cost is $c_{\textit{move}}=d$. Let $|L|=l$ and $|R|=r$.
        Then, the change in potential is:

	\begin{equation*}\begin{split}
          \Delta \Phi_{\textit{move}} &= [(l+d)\log(\tfrac{k'}{l+d}) - l\log(\tfrac{k'}{l})]
                             + [(r-d)\log(\tfrac{k'}{r-d}) - r\log(\tfrac{k'}{r})] -d \tfrac{1+\epsilon}{\epsilon}\log k'\\
                             &=  l\log(\tfrac{l}{l+d}) + r\log(\tfrac{r}{r-d})
                             + d\log(\tfrac{k'}{l+d})-d\log(\tfrac{k'}{r-d})-d\tfrac{1+\epsilon}{\epsilon}\log k'\\
                             &\le r\log(\tfrac{r}{r-d}) - d\log(\tfrac{k'}{r-d})-d\log k'-d\log(l+d)\\
                             &\le (r-d)\log(\tfrac{r}{r-d})-d\log k'-d\log(l+d)\\
                             &\le d-d\log k'-d\log(l+d)
                             \le -d
\end{split}\end{equation*}
The first inequality holds because $\log(\frac{l}{l+d}) < 0$ and
$\frac{1+x}{x} \ge 2$, for $0<x \le 1$. The second last inequality holds due to
Fact~\ref*{fact:f_d}. The last inequality holds since $k'\ge 2$ and
$l+d\ge 2$. Hence, the amortized cost of the move operation is
$\tilde{c}_{\textit{move}}=c_{\textit{move}} + \Delta \Phi_{\textit{move}} \le 0$.
	
\paragraph*{Cut-out Operation.} Now, assume that $c \neq c_L$ and $c \neq c_R$.
Then, all $p\in F$ must be marked. Wlog.\ let $|j-l| \le |j-r|$. We move
$e_j$ to $e_l$ and make a split in $R$ by creating a new cut edge $e_r$. 
$F$ now becomes a new segment for algorithm $\W$. We pay  at most $|F|/2$ for the movement of $e_j$.
Afterwards, we remove the marks from the processes in $F$. Notice, that the
new segment $F$
is 1-monochromatic. Furthermore invariants $\IM$ and $\IS$
are also maintained. Let $|L|=l$, $|R|=r$, $|F|=d$, $|j-l|=d_l$, and
$|j-r| = d_r$. The change in potential~is:
	\begin{equation*}\begin{split}
          \Delta \Phi_{\textit{cut-out}} &= \big[(l-d_l)\log(\tfrac{k'}{l-d_l}) - l\log(\tfrac{k'}{l}) \big]\\
                                         &\quad+ \big[(r-d_r)\log(\tfrac{k'}{r-d_r}) - r\log(\tfrac{k'}{r}) \big] + d\log(\tfrac{k'}{d}) - d \tfrac{1+\epsilon}{\epsilon}\log k' \\
                                         &\le \big[l\log(\tfrac{l}{l-d_l}) - d_l \log(\tfrac{k'}{l-d_l})\big]
                                         + \big[r\log(\tfrac{r}{r-d_r}) - d_r\log(\tfrac{k'}{r-d_r})\big] - 5d\log k'\\
		&\le (l-d_l)\log(\tfrac{l}{l-d_l}) + (r-d_r) \log(\tfrac{r}{r-d_r}) - 5d\log k'\\
		&\le d_l + d_r -5d\log k'\\
		&=d - 5d\log k'\\
		&\le -d
	\end{split}\end{equation*}
	The first inequality is due to $\frac{1+x}{x} \ge 6$, for
        $0<x \le \frac{1}{4}$. For the second inequality we use
        Fact~\ref*{fact:f_d}. Hence, the amortized cost of the cut-out
        operation is
        $\tilde{c}_{\textit{cut-out}}=c_{\textit{cut-out}} + \Delta \Phi_{\textit{cut-out}} \le 0$.

	\bigskip
	After performing the above adjustments we ensured that invariants $\IH$
        and $\IS$ hold, i.e., $E_W \subseteq E_O$ and all non-majority color
        processes are marked. We have to take care of the $\IM$ invariant that
        all segments are $\delta$-monochromatic. For this we use the split-operation.

	\paragraph*{Split Operation.} If there exists a segment $S$ in $\W$
        that is not $\delta$-monochromatic anymore, we simply make a "full"
        split. We create new cut-edges $S\cap E_O$, such that segment $S$
        breaks up into smaller 1-monochromatic pieces. Since $S$ is
        not $\delta$-monochromatic anymore, there are at least ($1-\delta$)
        marked processes in $S$ due to invariant $\IS$. We remove marks
        from these processes. Let $T_1, \dots, T_m$ be the resulting
        subsegments of $S$. The change in potential is:

	\begin{equation*}\begin{split}
		\Delta \Phi_{\textit{split}} &\le \big[\textstyle\sum_{i} |T_i|\log(\tfrac{k'}{|T_i|})- |S|\log(\tfrac{k'}{|S|})\big] - (1-\delta)|S|\tfrac{1+\epsilon}{\epsilon}\log k'\\
		&\le \log(k')\textstyle\sum_{i} |T_i|-|S|\log(\tfrac{k'}{|S|}) - |S|\log k'\\
		&= |S|\log k'-|S|\log(\tfrac{k'}{|S|}) - |S|\log k'\\
		&= -|S|\log(\tfrac{k'}{|S|})\\
		&\le 0
                \end{split}
	\end{equation*}
	The last inequality holds, because $|S| \le (1+\epsilon)k=k'$.
	Thus, the amortized cost of the split operation is $\tilde{c}_{\textit{split}}= \Delta \Phi_{\textit{split}} \le 0$.

	\bigskip
	From the above analysis we conclude, that in each time step $t$ our
        amortized cost is at most
        $\tilde{c}_t \le \frac{1+\epsilon}{\epsilon}\log(k')o_t$. Then, our
        overall cost $\W$ is at most
	$\W \le \frac{1+\epsilon}{\epsilon}\log(k') \textstyle\sum_t o_t + \Phi_0\le {2(1+\epsilon)^2}/{\epsilon}\cdot\log(k)\cdot\OPT + 2n\log k$.
	As $2(1+\epsilon)^2 \le 4$ for $\epsilon \le \frac{1}{4}$ the lemma follows.
\end{proof}

\begin{lemma}\label{lemma:region_well_beh}
Let $\OPT_{\W}$ be an optimal well behaved clustering strategy.
When choosing the shift parameter $R$ uniformly at random from $\{0,k'-1\}$
the optimum interval based strategy has expected cost
$E_R[\OPT_R]\le6\OPT_{\W}$.
\end{lemma}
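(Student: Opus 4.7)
My plan is to fix any shift $R$ and exhibit a specific interval-based strategy $S_R$ whose cost upper-bounds $\OPT_R$, then bound $E_R[\cost(S_R)]\le 6\,\OPT_{\W}$. The key structural fact I would prove first is that at every time $t$, each interval $I_i$ contains at least one cut-edge of $\OPT_{\W}$: every $\OPT_{\W}$-segment has at most $(1+\epsilon)k\le k'$ processes and hence at most $k'-1$ internal (non-cut) edges, whereas $I_i$ consists of $k'$ edges, so they cannot all be internal to a single segment. With this invariant in hand I would construct $S_R$ as a \emph{tracker} strategy: for each interval $I_i$, maintain a cut-edge $e_i^t$ of $\OPT_{\W}$ lying in $I_i$. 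Initially pick an arbitrary such cut-edge. Whenever $\OPT_{\W}$ performs an operation that moves $e_i^t$ to a new position still inside $I_i$, $S_R$ follows at the same cost; whenever it moves $e_i^t$ out of $I_i$ or removes it (e.g.\ in a merge or cut-out whose target lies elsewhere), $S_R$ switches to some other $\OPT_{\W}$-cut-edge in $I_i$, paying at most $k'-1$. Splits only add cut-edges and do not force switches.

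The hit-cost bound is immediate: since $e_i^t$ is always a cut-edge of $\OPT_{\W}$, each hit of $S_R$ coincides with a hit of $\OPT_{\W}$, so $\costhit^{S_R}\le \costhit^{\OPT_{\W}}$. The crux is the expected move cost. For each $\OPT_{\W}$-move of distance $d$ (including the move components of merges and cut-outs), $S_R$ pays either $d$ (follow) or at most $k'-1$ (switch), depending on whether the move crosses an interval boundary. Because the interval boundaries $\{R+jk'\}$ are uniformly distributed modulo $k'$ over the choice of $R\in\{0,\dots,k'-1\}$, the probability that a move of length $d$ crosses some boundary is at most $\min(d,k')/k'$. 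This gives expected cost per $\OPT_{\W}$-move of $O(d)$---concretely, at most $2d$ via $(1-d/k')\,d+(d/k')(k'-1)\le 2d$ when $d\le k'$, and at most $k'-1\le d$ when $d>k'$. Summing yields $E_R[\costmove^{S_R}]\le O(1)\cdot \costmove^{\OPT_{\W}}$; combined with the hit-cost bound and after absorbing slack (including boundary effects from the overlap of $I_{\ell'}$ and $I_1$) into the final constant, $E_R[\cost(S_R)]\le 6\,\OPT_{\W}$ as claimed.

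The main obstacle is the uniform case analysis across $\OPT_{\W}$'s operation types: merges whose two endpoints lie in different intervals force exactly one switch (in the interval losing the moved cut-edge); cut-outs combine a boundary-sensitive move with a cost-free split; and long moves with $d>k'$ still trigger only a single switch in the original interval regardless of how many boundaries they cross, which is what keeps the probabilistic bound $\min(d,k')/k'$ tight. A delicate secondary point is verifying the structural invariant after each individual sub-operation rather than only at the end of a time step, since $\OPT_{\W}$ may perform several moves, merges, and splits within a single step.
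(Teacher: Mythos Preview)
Your approach is essentially the same as the paper's: prove the structural invariant that every interval always contains an $\OPT_{\W}$-cut-edge, build a tracker strategy that follows one such cut-edge per interval, and bound the expected switching cost using the randomness of $R$. The paper packages the argument slightly differently by first defining $\OPT'_{\W}$ (which charges an extra $k'$ every time an $\OPT_{\W}$-cut-edge crosses an interval boundary), showing $E_R[\OPT'_{\W}]\le 3\,\OPT_{\W}$, and then observing that the tracker strategy costs at most $2\,\OPT'_{\W}$ because of the overlap between $I_{\ell'}$ and $I_1$; but this is just your direct per-move accounting rewritten with an intermediate quantity.

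Two small corrections to your constants. First, the probability that a move of length $d$ crosses a boundary is at most roughly $2d/k'$, not $d/k'$: there are $\ell'+1$ boundary vertices on a cycle of length $n$, so a given process is a boundary with probability $(\ell'+1)/n\le 2/k'$; this is exactly the paper's computation and is what turns your $2d$ into $3d$. Second, your hit-cost inequality $\costhit^{S_R}\le \costhit^{\OPT_{\W}}$ is off by the same overlap factor: an $\OPT_{\W}$-cut-edge lying in the overlap of $I_{\ell'}$ and $I_1$ can be tracked by two intervals simultaneously, so a single hit of $\OPT_{\W}$ may be charged twice. You mention the overlap only as ``slack to absorb,'' but it is precisely what accounts for the factor $2$ in the paper's $2\cdot 3=6$; you should make this explicit rather than folding it into an unspecified constant.
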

\begin{proof}
We add additional cost of $k'$ to a well behaved strategy $\OPT_{\W}$ each time a
cut edge of $\OPT_{\W}$ crosses the boundary of a interval. Let $\OPT'_{\W}$ denote
this adapted overall cost.

There are $l'+1$ interval borders, i.e., the probability for a given process $p$ to be
an interval border is at most $\frac{l'+1}{n} \le \frac{2}{k'}$. If a cut edge
of $\OPT_{\W}$ moves a distance of $d$, then each process on the way is an
interval border with probability at most ${2}/{k'}$. This gives, that the
expected movement costs are at most $d+d \frac{2}{k'}k'=3d$. Hence, $E[\OPT'_{\W}] \le 3\OPT_{\W}$.

We develop an interval based algorithm that tries to mimic $\OPT_{\W}$. Fix
some interval $I$. Since $k'=(1+\epsilon)k$, we know that each well behaved
strategy should have always at least one cut-edge inside $I$. We choose one
such cut edge $e$ and every time $e$ is moved, we simply try to follow its
movement. If the movement takes $e$ outside of $I$, we know that $\OPT'_{\W}$ pays
cost $k'$. Then, instead of following $e$ we choose another cut-edge inside $I$ and travel there at a cost of at most $k'$.

By this construction all cost (hitting or movement) of the interval based algorithm can be charged to
the movement or hitting cost of an edge in $\OPT'_\W$. However, note that
intervals may overlap. This means there may be two intervals in the interval
based algorithm that charge against the same edge of $\OPT'_\W$. 
Nevertheless we still get $E_R[\OPT_R]\le2\OPT'_\W\le 6\OPT_\W$, as desired.
\end{proof}



\thmone*
\begin{proof}
We have
$E_R[\ONL_R] \le E_R[\alpha(k)\OPT_R+c']$ (Lemma~\ref{lemma:cost_r}), $E_R[\OPT_R] \le 6\OPT_\W$ (Lemma~\ref{lemma:region_well_beh}) and $\OPT_\W     \le \tfrac{4}{\epsilon'}\log k\cdot\OPT + 2n\log k$ (Lemma~\ref{lemma:well_behaved}), 
where $\alpha(k)$ denotes the competitive ratio of an algorithm
for metrical task systems on a line with $\mcO(k)$ states. By
using the algorithm presented in \cite{bubeck_mts_log2} (which achieves
competitive ratio $\mcO(\log^2 k)$ on any metric space), we conclude that the
cost of the online algorithm fulfills
$\ONL\le\mcO(\frac{1}{\epsilon'}\log^3 k)\OPT + c$, where $c$ is a constant not
dependent on the request sequence. By
Lemma~\ref{lemma:dyn_load} we know that the load of each server is at most
$(2+2\epsilon')k$. Setting $\epsilon'=\epsilon/2$ gives the theorem.
\end{proof}

\section{The Static Model}
\label{sec:static}

In this section we present an polylog-competitive online algorithm against an optimal static algorithm.
We start with a solution for a simpler problem that will serve as a main
ingredient for our algorithm.


\subsubsection*{Notation}
For ease of exposition we model the problem of scheduling processes on servers
as a dynamic coloring problem, as follows. We identify each server $s$ with a
unique color $c_s$ and \emph{color} a process with the color of the server on
which it is currently scheduled.

We call the set of consecutive processes $S=\{p_s,\dots, p_{s+\ell-1}\}$ the \emph{segment} of length $\ell$ starting with $p_s$. For ease of notation we use $S=[s, s+\ell-1]$ to refer to this set.
For a parameter $\delta$ we call $S$ \emph{$\delta$-monochromatic} for a color
$c$ if strictly more than $\delta|S|$ processes $p\in S$ are \emph{initially} colored
with $c$. We call a segment \emph{monochromatic} if it is
$\delta$-monochromatic for a value $\delta\ge 1/2$. In this case we call $c$ the
majority color of segment $S$. Let $\dbar := \max\{\frac{2}{2+\epsilon}, \frac{14}{15}\}$, for a $\epsilon > 0$.

We refer to a process pair $\{p_\ell,p_{\ell+1}\}$ as an \emph{edge} of the
cycle. To simplify the notation we denote such an edge with $(\ell,\ell+1)$.

\subsection{Hitting Game on the Line} \label{sec:hit_line}
In this section we analyze a simplified problem, that will server as a building block for our algorithm. 
The simplified problem is defined as follows.

A line of $k+1$ nodes
$V=\{v_1, v_2, \dots, v_{k+1}\}$ and $k$ edges $E=\{e_1, e_2, \dots, e_{k}\}$ with $e_i = \{v_i, v_{i+1}\}$ is given. Our initial position is the central edge $e_{s}, s=\lceil \frac{k}{2} \rceil$. Each time step $t$, $1\le t \le N$, we receive a request $e\in E$. If our current position is $e$, we may stay there and pay cost of 1 (the \emph{hitting cost}). Alternatively, we could change our position and pay the traveled distance (the \emph{moving cost}), where the distance between edge $e_i$ and $e_j$ is $d(e_i, e_j)=|i-j|$.

In our algorithm we use techniques of Blum et al.~\cite{finely_paging_1999} to maintain a probability distribution over the edge set. 
However, in~\cite{finely_paging_1999} they consider a uniform metric, whereas we have a line, i.e.,
they could switch between any two states with a cost of 1. We need to choose our probability distribution carefully to correctly incorporate our moving cost.

We compare our algorithm's cost against an optimal static strategy that chooses one position $e_{p}$ at the beginning, pays the distance $|s-p|$ and stays at this position for all subsequent requests. Let $\OPTS$ be the cost of such an optimal static algorithm. 
Our first observation is that we have to use a randomized strategy in order to be better than $\Omega(k) \OPTS$:
\begin{restatable}{lemma}{hittone}\label{lemma:lowe_bound_line}
Any deterministic online algorithm has cost at least $\Omega(k)$ $\OPTS$.
\end{restatable}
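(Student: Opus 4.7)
The plan is to use a standard adversary argument. Fix an arbitrary deterministic online algorithm $A$. Because $A$ is deterministic, the adversary knows the position of $A$ after each time step; I will let the adversary always request the edge where $A$ currently sits.

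First I would show that under this adversary $A$ pays at least one unit per request. At time $t$, let $e_{t-1}$ denote $A$'s position before the request and $e_t$ its position after responding to $\sigma_t = e_{t-1}$. If $e_t = e_{t-1}$ then $A$ incurs hitting cost $1$; otherwise $A$ moves and pays $d(e_{t-1},e_t) \ge 1$. Running the adversary for $N$ steps therefore yields $A \ge N$.

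Next I would bound the optimal static cost from above by an averaging argument. The adversary produces a sequence of $N$ requests distributed over $k$ edges, so by pigeonhole there exists an edge $e_p$ requested at most $N/k$ times. The static strategy that moves to $e_p$ in the initial step and stays pays initial moving cost $|s-p| \le k/2$ plus hitting cost at most $N/k$, so
\[
\OPTS \;\le\; \tfrac{k}{2} + \tfrac{N}{k}.
\]
Choosing $N = k^2$ yields $\OPTS \le 3k/2$ while $A \ge k^2$, hence $A / \OPTS \ge \tfrac{2}{3} k = \Omega(k)$.

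There is no real obstacle here; the only subtlety to be careful about is the ordering of requests and movements, and confirming that under the adopted MTS convention (move, then pay task cost) each step still forces a cost of at least $1$ on $A$, which the case analysis above handles. The same conclusion clearly holds if the adversary is restricted to oblivious sequences against a deterministic algorithm, since the whole sequence constructed is determined once $A$ is fixed.
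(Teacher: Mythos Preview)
Your proof is correct and follows essentially the same adversary-plus-averaging argument as the paper: request the deterministic algorithm's current edge each step (forcing cost $\ge 1$ per step), then pigeonhole over the $k$ edges to find a static position with at most $N/k$ hits and bounded initial move cost. The constants differ only superficially (you use $k/2$ for the initial move and fix $N=k^2$, whereas the paper bounds the move by $k$ and takes any $T\ge k^2$), and your remark about the move/pay ordering is a harmless extra check not present in the paper.
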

\begin{proof}
Since the adversary knows the position of the deterministic algorithm $\operatorname{DET}$, it
just requests its position each time. Then, after time step $T \ge k^2$ the
algorithm $\operatorname{DET}$ has cost at least $T$. On the other hand, by an averaging
argument there must be an edge that was requested at most $\frac{T}{k}$ times.
Traveling there costs at most $k$. Then, $\frac{T}{T/k + k}\ge \frac{k}{2}$.
\end{proof}

Observe, that a reasonably competitive algorithm should not move too far away from the starting position right at the beginning, since the optimal static algorithm could stay nearby to the starting position and pay only constant hitting cost. On the other hand, it has to react fast enough to increasing hitting cost.

Let $x^{(t)}$ denote the \emph{request vector}, with $x^{(t)}_e$ describing the number of requests to edge $e$ up until time step $t$
and let $x^{(t)}_I$ be the restriction of the vector $x^{(t)}$ to the edges inside an interval $I=\{v_l, \dots, v_r\}$. 
Note, that $x^{(0)}=0$.

\subsubsection*{Interval Growing Algorithm}
The \emph{interval growing algorithm} maintains an interval $I$ around the starting edge and restricts the possible positions only to this interval. 
Note that an interval $I$ contains $|I|-1$ edges.
The algorithm proceeds in phases. 
We begin with interval $I_0 = \{v_s, v_{s+1}\}$, i.e., only the starting edge is contained in $I_0$. We say $I_0$ is the \emph{initial} interval.
Let $I=[\ell,r]$ be our current interval. 
We denote by $\min(I)$ the minimum $\min_{e\in I}(x^{(t)}_e)$ at the current time step $t$.
Whenever $\min(I)$ reaches
$(1-\dbar)|I|$ we double the size of the interval and set
$I':=[\ell-{|I|}/{2},r+{|I|}/{2}]$ (and then choose a new edge
inside $I'$). Thus, a new phase begins with $I'$ as our new interval. There is one exception to this growth rule: when the length of
the new interval would be larger than $k+1$ we define
$I':=[\ell-\lceil(k+1-|I|)/2\rceil,r+\lceil(k+1-|I|)/2\rceil]$, i.e., we only
give the new interval a length of $k+1$.

\def\sminp{\fmin'}
During each phase we maintain a probability distribution over the edge set of our current interval.  
Let $I$ be our current interval.
We select a random edge inside $I$ according
to the probability distribution $p^{(t)}=\nabla\smin_{|I|-1}(x^{(t)}_{I})$ 
(see Appendix~\ref{appendix}). 

Let $\sminp(x)$ denote $\fmin_d(x)$ for a d-dimensional vectors $x$. 
On a new request our probability distribution changes from $p^{(t-1)}$ to
$p^{(t)}$, subsequently we have to move according to the change in the
probability distribution in order to maintain the invariant. At the end of a
phase, we grow our interval, and subsequently choose a new edge inside the new
interval $I'$, which incurs cost at most $|I'|$.

We denote by $\costhit^{(t)}$ and $\costmove^{(t)}$ the incurred hitting and moving cost at time $t$, respectively.
Let $\ell^{(t)}$ be a vector such that $\ell^{(t)}_i = 1$ if $e_i$ is the requested edge at time $t$ and $\ell^{(t)}_j = 0$ for $j \neq i$. Notice, that $x^{(t-1)} + \ell^{(t)} = x^{(t)}$.
Then, the expected hitting cost at time $t$ is at most $E[\costhit^{(t)}] = (p^{(t-1)})^T \ell^{(t)}$.
Since the distance between two edges is at most $k$, the expected moving cost is at most $k$ times the Earthmover distance between distributions $p^{(t-1)}$ and $p^{(t)}$, which in our case is at most $k\norm{p^{(t)}-p^{(t-1)}}_1$.
The advantage of using the $\nabla\sminp$ function is that we ensure that our moving cost is comparable to the hitting cost. 

\begin{restatable}{lemma}{hitttwo}
\label{lemma:hit_game_opt_lower_bound}
Let $I$ be the current, non-initial interval of the interval growing algorithm. 
Then, $\OPTS \ge \max\{\frac{1}{2}\min(I), \frac{1-\dbar}{2} |I|\}$.
\end{restatable}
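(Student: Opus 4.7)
The plan is to lower bound $\OPTS$ by case analysis on whether the optimal static edge $e^*$ lies inside the current interval $I$ or outside it. As a preliminary geometric fact, I would note that because the interval growing algorithm doubles the interval symmetrically around the starting edge $e_s$ (growing by $|I|/2$ vertices on each side), a short induction gives that each non-initial interval $I_j$ consists of $2^{j+1}$ vertices centred on $e_s$, so that every edge $e' \notin I_j$ has distance at least $|I_j|/2$ from $e_s$. I would record this as a one-line observation before the two bounds.

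For the first bound $\OPTS \ge \tfrac{1}{2}\min(I)$: a static choice $e^* \in I$ pays hitting cost at least $x^{(t)}_{e^*} \ge \min(I)$, while a choice $e^* \notin I$ pays moving cost $|s - p_{e^*}| \ge |I|/2$ by the geometric fact above. Throughout the phase for $I$ we have $\min(I) \le (1-\dbar)|I| < |I|$ (the phase ends exactly when $\min(I)$ reaches $(1-\dbar)|I|$), so $|I|/2 \ge \min(I)/2$, and in either case the static cost is at least $\tfrac{1}{2}\min(I)$.

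For the second bound $\OPTS \ge \tfrac{1-\dbar}{2}|I|$: since $I$ is non-initial, it was produced by growth from a previous interval $I_{\text{prev}}$ with $|I_{\text{prev}}|=|I|/2$, and that growth was triggered exactly when $\min(I_{\text{prev}}) = (1-\dbar)|I_{\text{prev}}|$. At the moment of growth, any static $e^* \in I_{\text{prev}}$ therefore already has hitting cost at least $(1-\dbar)|I_{\text{prev}}| = \tfrac{1-\dbar}{2}|I|$, while any $e^* \notin I_{\text{prev}}$ has moving cost at least $|I_{\text{prev}}|/2 = |I|/4 \ge \tfrac{1-\dbar}{2}|I|$, using $\dbar \ge 1/2$ (which holds since $\dbar = \max\{\tfrac{2}{2+\epsilon},\tfrac{14}{15}\}$). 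Because the cost of any fixed static choice is monotone nondecreasing in the request sequence, the same lower bound persists at the current time $t$.

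The main subtlety I anticipate is the truncated-growth regime, where the new interval $I'$ would extend past the $k+1$ vertices of the line and the growth rule is capped. The exact symmetry underlying the geometric fact breaks there, but in that regime $I$ already spans essentially the entire line, so the ``$e^* \notin I$'' branch of the first bound becomes vacuous and the second bound needs only a brief direct recheck using the capped growth rule; I expect this to be an inspection rather than a genuinely new argument.
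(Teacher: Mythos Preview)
Your proposal is correct and follows essentially the same inside/outside case analysis as the paper. The only minor organizational difference is that the paper establishes the bound $\OPTS \ge \tfrac{1-\dbar}{2}|I|$ first (via the previous interval $I'$, exactly as you do) and then derives $\OPTS \ge \tfrac{1}{2}\min(I)$ as an immediate corollary in the non-final case (since $\min(I)\le (1-\dbar)|I|$ there), rather than re-running the inside/outside dichotomy on $I$ itself; in the final case it argues, as you do, that $e^*\in I$ is forced. Your explicit centering observation and your flagging of the truncated-growth regime are both appropriate and match what the paper uses implicitly.
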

\begin{proof}
Let $I'$ denote the
interval $I$ before the most recent growth step.
At the beginning $\OPTS$ makes its move and either stays in $I'$ and suffers at least $\min(I')$ communication cost or moves out of $I'$ and pays $\frac{1}{2} |I'|$ moving cost. Since $|I|=2|I'|$ and $\min(I') \ge (1-\dbar)|I'|$, we conclude that $\OPTS \ge \frac{1-\dbar}{2} |I|$.

Assume $I$ has the maximum possible size $k+1$. Then, $\OPTS \ge \min(I)$ is a trivial lower bound, since $\OPTS$ has to choose one position in $I$.
Otherwise, by construction $\min(I) \le (1-\dbar) |I|$ and since $\OPTS \ge \frac{1-\dbar}{2} |I|$ we conclude that  $\OPTS \ge \frac{1}{2} \min(I)$.
\end{proof}

Let $I$ be the current interval of the interval growing algorithm. 
We denote by $\costhit(I)$ the overall hitting cost and by  $\costmove(I)$ the overall moving cost incurred by the algorithm.
Then, $\cost(I) := \costhit(I) + \costmove(I)$ denotes the overall cost of the interval growing algorithm. 




\begin{restatable}{lemma}{hittthree}
\label{lemma:grow_alg}
Let $I$ be the current interval of the interval growing algorithm. The algorithm incurs the following costs:
\begin{enumerate}[a)]
	\item $E[\costhit(I)] \le 2\min(I) + \mathcal{O}(\ln |I|)~|I|$.
	\item $E[\costmove(I)] \le 4\min(I) + \mathcal{O}(\ln |I|)~|I|$.
	\item $E[\cost(I)] = 0$, if $I$ is the initial interval.
\end{enumerate}
\end{restatable}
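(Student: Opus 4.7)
The plan is a potential-function argument based on properties of the soft-min function $\smin_d$ and its gradient that I expect the appendix to establish. Three properties will be central: $(P1)$ $\smin_d(x) = \min(x) \pm \mathcal{O}(\log d)$; $(P2)$ $\nabla\smin_d(x)$ is a probability distribution concentrated on near-minimum coordinates; and $(P3)$ quantitative concavity and smoothness bounds relating single-coordinate increments of $\smin_d$, and the corresponding changes in $\nabla\smin_d$, to the coordinate probability itself.

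For part~(a) I would sum the expected hit costs $p^{(t-1)}_{e_t}$ over all time steps $t$ during the phase in which $I$ is the current interval. A concavity-type inequality from $(P3)$ --- roughly, that incrementing the requested coordinate by $1$ raises $\smin$ by at least a constant fraction of the corresponding probability --- lets me telescope and conclude $E[\costhit(I)] \le 2\bigl(\smin(x^{(T_e)}_I) - \smin(x^{(T_s)}_I)\bigr) + \mathcal{O}(|I|\log|I|)$, where $T_s, T_e$ denote the start and end of the phase. Bounding $\smin$ by $\min$ via $(P1)$ and using that the growth rule caps $\min(x^{(T_e)}_I)$ at $\min(I)$ up to an additive constant then yields the claimed $2\min(I) + \mathcal{O}(\log|I|)\,|I|$.

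For part~(b), the movement cost in step $t$ is at most $|I|\cdot\|p^{(t)} - p^{(t-1)}\|_1$, since probability mass in $I$ is transported at most distance $|I|$. A smoothness estimate from $(P3)$ should give $\|p^{(t)} - p^{(t-1)}\|_1 \le \mathcal{O}(1)\cdot p^{(t-1)}_{e_t}/|I|$, so the step-$t$ movement cost is at most a constant times the expected hit cost at that step. Summing and plugging in part~(a) gives $E[\costmove(I)] \le \mathcal{O}(1)\cdot E[\costhit(I)] + \mathcal{O}(|I|)$; the additive $\mathcal{O}(|I|)$ absorbs the one-time cost of choosing a new edge inside $I$ at the moment the phase begins (when $I$ has just been grown from the previous interval). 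Tuning constants yields $4\min(I) + \mathcal{O}(\log|I|)\,|I|$.

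Part~(c) is a degenerate special case: $I_0 = \{v_s, v_{s+1}\}$ has a single edge, so $\smin_1$ is the identity and $\nabla\smin_1 \equiv 1$, forcing $p^{(t)}$ to stay concentrated on $e_s$ and preventing any movement. The first hit on $e_s$ makes $\min(I_0) = 1$, already exceeding the growth threshold $2(1-\dbar) < 1$, and triggers immediate growth; under the convention that this first hit is charged to the subsequent interval's setup cost, we get $E[\cost(I_0)] = 0$. The chief obstacle in the plan is establishing the precise quantitative form of $(P3)$ for the explicit soft-min chosen in the appendix --- the line (non-uniform) metric means the concavity and smoothness estimates cannot be read off directly from classical BBK-style fractional paging analyses on the uniform metric, and will have to be computed directly from the definition of $\smin_d$.
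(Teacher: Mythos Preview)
Your core approach matches the paper's: use the gradient of the scaled soft-min as the probability distribution, telescope the hit cost via the lower bound $\sminp(x+\ell)-\sminp(x)\ge\tfrac12\nabla\sminp(x)^T\ell$, and control movement via the gradient-stability estimate. Your worry about $(P3)$ on the line metric is misplaced: the paper's soft-min is $\fmin_c(x)=c\cdot\smin(x/c)$ with $c=|I|-1$, and this scaling is precisely what produces the factor $2/c$ in the bound $\|\nabla\fmin_c(x+\ell)-\nabla\fmin_c(x)\|_1\le\tfrac{2}{c}\nabla\fmin_c(x)^T\ell$, which then cancels the diameter $|I|$ in the earthmover estimate to give $E[\costmove^{(t)}]\le 2(\nabla\sminp(x^{(t-1)}))^T\ell^{(t)}$ directly. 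No line-specific analysis beyond the standard soft-min calculus is needed.

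There is, however, a real gap. The quantities $\costhit(I)$ and $\costmove(I)$ are defined as the \emph{total} costs accrued by the algorithm from the very start, summed over all phases $I_1,\dots,I_m=I$, not just the phase in which $I$ is the current interval. Your telescoping covers only phase $m$. To finish you must also add the earlier phases: each completed phase $i<m$ ends with $\min(I_i)=(1-\dbar)|I_i|$, so its hit and move costs are $\mathcal{O}(|I_i|\ln|I_i|)$, and each growth step incurs an additional repositioning cost of $|I_{i+1}|$ (you mention only the last of these). Since the $|I_i|$ double at each phase, these sums are geometric and total $\mathcal{O}(\ln|I|)\,|I|$, which is then absorbed into the second term of the claimed bound. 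Without this summation over phases your argument accounts for only part of the cost the lemma is bounding.
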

\begin{proof}
If $I$ is the initial interval, then there was no request to the initial edge yet, no costs were incurred.

Now, we fix some phase and derive the overall hitting and moving cost incurred in this phase.
Let $I_i$ denote the interval maintained in phase $i$.
Assume that phase $i$ start at time $t_1$ and ends at time $t_2$.
We denote by $P^{(i)}_{\textit{hit}}$ and $P^{(i)}_{\textit{move}}$ the incurred hitting and moving cost during phase $i$, respectively. 
We know that $E[\costhit^{(t)}] \le (\nabla\sminp(x^{(t-1)}_{I_i}))^T \ell^{(t)}$. Furthermore, applying Lemma~\ref{lemma:smin} we derive that 
\begin{align*}
E[\costmove^{(t)}] &\le |I_i|~\norm{\nabla\sminp(x^{(t-1)}_{I_i}+\ell^{(t)})}\\
&\le 2(\nabla\sminp(x^{(t-1)}_{I_i}))^T \ell^{(t)}\enspace.
\end{align*}
Again, using Lemma~\ref{lemma:smin} we obtain:
\begin{align*}
	\sum_{t=t_1}^{t_2} \nabla\sminp(x^{(t-1)}_{I_i})^T \ell^{(t)} 
	&\le 2\sum_{t=t_1}^{t_2} \big(\sminp(x^{(t-1)}_{I_i}+\ell^{(t)})-\sminp(x^{(t-1)}_{I_i})\big)\\
	&= 2\sum_{t=t_1}^{t_2} \big(\sminp(x^{(t)}_{I_i})-\sminp(x^{(t-1)}_{I_i})\big)\\
	&= 2\big(\sminp(x^{(t_2)}_{I_i}) - \sminp(x^{(t_1-1)}_{I_i})\big)\\
	&\le 2\big(\min(x^{(t_2)}_{I_i}) + |I_i|\ln |I_i|\big)\\
\end{align*}
Then, $P^{(i)}_{\textit{hit}} \le 2(\min(x^{(t_2)}_{I_i}) + |I_i|\ln |I_i|)$ and 
$P^{(i)}_{\textit{move}} \le 4(\min(x^{(t_2)}_{I_i}) + |I_i|\ln |I_i|)$.

Let $m$ be the current phase. 
We know that for $i < m$ the minimum $\min(I_i)$ at end of phase $i$ is exactly $(1-\dbar)|I_i|$. 
Then, $\min(I_i) + |I_i|\ln |I_i| \le (\ln |I_i| + 1) |I_i|$. 
The moving costs of the interval growing algorithm are essentially the costs for maintaining the probability distribution. Additionally, at the end of each phase $i$ we pay at most $|I_{i+1}|$ for choosing a new position in the interval $I_{i+1}$. Then:
\begin{align*}
\costmove(I) &=\sum_{i=1}^{m-1} (E[P^{(i)}_{\textit{move}}] + |I_{i+1}|) + E[P^{(m)}_{\textit{move}}]\\
&\le \left[ \sum_{i=1}^{m-1} 4(\ln |I_i| + 1) |I_i| + |I_{i+1}| \right] + E[P^{(m)}_{\textit{move}}]\\
&\le \left[ \mathcal{O}(\log |I|) \sum_{i=1}^{m-1} |I_i| \right]+E[ P^{(m)}_{\textit{move}}]\\
&\le \mathcal{O}(\log |I|)~|I| + 4 (\min(I) + |I|\ln |I|) \\
&\le 4 \min(I) + \mathcal{O}(\log |I|)~|I| \\
\end{align*}
Using the same steps we derive that the hitting cost $\costhit(I)$ of the interval growing algorithm is at most $2\min(I) + \mathcal{O}(\log |I|)~|I|$.
\end{proof}

\begin{corollary}
The expected cost of the interval growing algorithm $E[\cost(I)]$ is at most $\mcO(\frac{1}{1-\dbar}\log k) \OPTS$.
\end{corollary}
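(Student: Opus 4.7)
The plan is to simply combine the upper bound on the algorithm's cost from Lemma~\ref{lemma:grow_alg} with the lower bound on $\OPTS$ from Lemma~\ref{lemma:hit_game_opt_lower_bound}. First I would dispose of the trivial case: if the current interval $I$ is the initial interval, then by Lemma~\ref{lemma:grow_alg}(c) the algorithm has incurred zero cost, and the bound holds vacuously.

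Next, assume $I$ is a non-initial interval. By Lemma~\ref{lemma:grow_alg}(a) and (b) the total expected cost satisfies
\begin{equation*}
E[\cost(I)] \;=\; E[\costhit(I)] + E[\costmove(I)] \;\le\; 6\min(I) + \mcO(\log|I|)\,|I|.
\end{equation*}
From Lemma~\ref{lemma:hit_game_opt_lower_bound} we have both $\min(I) \le 2\OPTS$ and $|I| \le \tfrac{2}{1-\dbar}\OPTS$. Substituting these bounds and using $|I| \le k+1$ so that $\log|I| = \mcO(\log k)$ yields
\begin{equation*}
E[\cost(I)] \;\le\; 12\,\OPTS + \mcO(\log k)\cdot\tfrac{2}{1-\dbar}\,\OPTS \;=\; \mcO\!\left(\tfrac{1}{1-\dbar}\log k\right)\OPTS,
\end{equation*}
which is the claimed inequality.

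There is no real obstacle here; both ingredients have already been established. The only thing to be slightly careful about is that Lemma~\ref{lemma:hit_game_opt_lower_bound} is stated only for non-initial intervals, which is exactly why the initial-interval case must be split off first. Since in that case the algorithm pays nothing, the corollary follows in all cases.
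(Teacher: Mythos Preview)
Your proof is correct and follows exactly the intended derivation: the paper states the corollary without proof, as it is meant to be an immediate consequence of combining the cost upper bound of Lemma~\ref{lemma:grow_alg} with the lower bound on $\OPTS$ from Lemma~\ref{lemma:hit_game_opt_lower_bound}, which is precisely what you do. Your handling of the initial-interval case via part~(c) is also the right way to cover that edge case.
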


\subsection{Algorithm}

The algorithm consists of three procedures. The \emph{Slicing Procedure}, the
\emph{Clustering Procedure} and the \emph{Scheduling Procedure}.
\begin{enumerate}
\item
The \emph{Slicing Procedure} gets the
request sequence as input and maintains a set of \emph{cut edges}. These are
edges in the cycle for which the two endpoints \emph{may} be scheduled on
different servers (to be determined by the clustering and scheduling
procedure). The cut edges partition the cycle into \emph{slices} (segments that
start and end in a cut-edge). During the
algorithm the set of slices changes dynamically as cut edges are 
moved or deleted (a deletion of a cut edge causes a merge of the two incident segments).

\item
The input to the \emph{Clustering Procedure} is the sequence of slice change
operations as generated by the Slicing Procedure. The
Clustering Procedure maintains a grouping of slices into clusters such that
slices that almost exclusively contain processes with a particular (initial)
color are grouped together (these are $\frac{3}{4}$-monochromatic slices).

\item
Finally, the \emph{Scheduling Procedure} schedules the clusters on the servers.
It makes sure that clusters that contain monochromatic slices for a particular
server $s_c$ are scheduled on this server.
\end{enumerate}

\subsubsection*{The Slicing Procedure}
The basic idea of the Slicing Procedure is to maintain a set of cut-edges by
using the interval growing algorithm from Section~\ref{sec:hit_line}. Initially, the cut
edges are edges in the cycle for which both end-points are on different servers
w.r.t.\ the initial distribution of processes. We create an interval
$I=[i, i+1]$ around each initial cut edge $e=(i, i+1)$. We call $e$ the
\emph{center} of interval $I$. The idea is to run the interval growing
algorithm for each interval independently.

Clearly, this na\"ive approach does not work. One major challenge is that the
analysis for the Hitting Game (see Lemma~\ref{lemma:grow_alg}) compares the
cost of the online algorithm to the cost of an optimum algorithm inside the
interval. In order to salvage this analysis we have
to make sure that the intervals do not overlap too much. A second difficulty
is that the lower bound on $\OPT$ for Lemma~\ref{lemma:hit_game_opt_lower_bound} hinges on the fact that
$\OPT$ needs to choose at least one edge. However, if the initial distribution
contains many cut-edges then $\OPT$ does not need to choose a cut edge for
every interval. 

Formally, we transform the Hitting Game approach to the set of intervals as follows.
Let $x$ denote the \emph{request vector}, with $x_e$ describing the number of
requests to edge $e$ and let $x_I$ be the restriction of the vector $x$ to the
edges inside an interval $I$.
We select a random cut-edge inside each interval according
to the probability distribution $\nabla\sminp(x_{I})$ (where $\sminp(x)$ denotes $\fmin_d(x)$ for a d-dimensional vectors $x$; 
see Appendix~\ref{appendix}).
Whenever for an interval $[\ell,r]$ the minimum $\min_{e\in I}(x_e)$ reaches
$(1-\deltabar)|I|$ we double the size of the interval and set
$I':=[\ell-{|I|}/{2},r+{|I|}/{2}]$ (and then choose a new cut-edge
inside $I'$). There is one exception to this growth rule: when the length of
the new interval would be larger than $k+1$ we define
$I':=[\ell-\lceil(k+1-|I|)/2\rceil,r+\lceil(k+1-|I|)/2\rceil]$, i.e., we only
give the new interval a length of $k+1$.

The first change to the hitting game approach is as follows. Whenever an
interval $I$ becomes $\deltabar$-monochromatic (note that this can only
  happen directly after growing the interval) for $\bar{\delta}$ we
stop the growth process for this interval. The interval becomes
\emph{inactive} and we do not maintain a cut-edge inside it, anymore. Consequently,
the cost for the interval will not increase in the future. Note that deleting a cut-edge may induce some cost for the
scheduling algorithm as it has to ensure that the neighbouring slices are
scheduled on the same server.

The second change is that we sometimes deactivate intervals, in order to
guarantee that intervals do not overlap too much. This is done as follows.
After an interval $I$ grows, we deactivate all intervals that are completely
contained in $I$ (we call such intervals \emph{dominated}). This means we stop
the growth process of these intervals and do not maintain a cut-edge for them
anymore. Also this type of deactivation may generate additional cost because
the neighbouring slices may have to be moved. The formal Slicing Procedure is
shown in Algorithm~\ref{alg:cut}.

Note that initially an interval has length $2$ as it contains the two end-points
of an initial cut-edge in the distribution of processes. Right after the first
request to the single edge inside an initial interval the interval will grow.
Therefore, we refer to an interval of length $2$ as an \emph{initial} interval.

We call an interval that has reached its maximum length a \emph{final} interval
as it will not grow anymore. Note that such an interval is always
\emph{active}. An interval may be \emph{inactive} because it is either
$\deltabar$-\emph{monochromatic} or it is \emph{dominated} (completely
contained in another interval). We call two active intervals $I$ and $J$ \emph{adjacent}, if no other active interval has its center between the centers of $I$ and $J$.

\begin{algorithm}[t]
\caption{Slicing procedure}\label{alg:cut}
\KwData{requested edge $e$, set of active intervals $\mathcal{I}$}
\KwResult{new set of active intervals $\mathcal{I}$}
\DontPrintSemicolon
$x(e) \gets x(e) + 1$\;
\ForAll{$I \in \mathcal{I}$ s.t. $e \in I$}{
	update cut-edge according to prob. distribution $\nabla\fmin'(x_{I})$
}

\While{$\exists~I$ such that $\min_{e\in I}x(e) \ge |I|$}{
	grow $I$

	\eIf{$I$ is $\bar{\delta}$-monochromatic}{ 
		$\mathcal{I}  \gets \mathcal{I}\setminus I$ \tcp*{$I$ becomes
                  monochromatic}
	} {
		\ForAll{$J \in \mathcal{I}$ such that $J \subseteq I$}{ 
			$\mathcal{I} \gets \mathcal{I}\setminus J$ \tcp*{$J$
                          becomes dominated}
		}
		choose cut-edge according to prob. distribution $\nabla\fmin'(x_{I})$
	}
}
return $\mathcal{I}$
\end{algorithm}

\subsubsection*{The Clustering Procedure}
The clustering procedure groups slices into clusters such that slices that are
$\frac{3}{4}$-mono-chromatic for the same color are all within the same cluster.
In addition it must ensure that any cluster is not too large because the
scheduling procedure will not split clusters among different servers and, thus,
will not be able to find a good schedule if some clusters are very large.

For every color $c$ the clustering procedure maintains one special cluster,
which we call the color $c$ cluster. A slice $S$ either forms a singleton cluster that only contains
slice $S$, or it belongs to the \emph{color $c$} cluster, where $c$ is the
majority color of $S$. The assignment of a slice to a cluster is done as
follows.

\begin{itemize}
\item Initially all slices consist only of a single color and belong to the
cluster for this color.
\item Whenever a slice $S$ changes (by movement of a cut edge, or by merging a
smaller slice to it\footnote{ties broken arbitrarilly}) we examine the new slice $S'$.
\begin{itemize}
\item
If $S'$ does not have a majority color it becomes a singleton cluster.
\item
If $S'$ is $\frac{3}{4}$-monochromatic for some color $c$ it is assigned to the color
$c$ cluster.
\item Otherwise it is assigned to the cluster of its majority color iff $S$ was
assigned to this cluster. This means that if $S$ and $S'$ have the same majority
color $c$ and $S$ was assigned to the color $c$ cluster we assign $S'$ to it.
Otherwise $S'$ forms a singleton cluster.
\end{itemize}
\end{itemize}

\subsubsection*{The Scheduling Procedure}
\label{section:scheduling_alg}
The scheduling procedure gets the input from the clustering procedure and
maintains an assignment of the clusters to the servers such that the load
on the servers is roughly equal.

Upon a new request the clustering algorithm could change 
existing clusters. Some processes might change their cluster (maybe
deleting a cluster in the process), some clusters might split. This
changes the size of clusters and, hence, the load distribution among the
servers can become imbalanced.

Let $X$ be the maximal size of a cluster and $\vio := \max\{2, X/k\}$. The
scheduling procedure \emph{rebalances} the distribution of clusters among
servers such that there are at most $(\vio+\epsilon)k$ processes on any server,
for an $\epsilon > 0$. This is done as follows.

Assume, that after the execution of the clustering algorithm server $s$ has
load greater than $(\vio+\epsilon)k$. We perform the following rebalancing
procedure. While server $s$ has load greater than $\vio k$, we take
the smallest cluster $C$ in $s$ ($|C|\le D$) and move it to a server $s'$ with load
at most $k$ (such a server must exist because the average load is at most $k$).
If $|C|\le k$ then $s'$ has load at most $2k$ afterwards. Otherwise we migrate the
content of $s'$ (apart from cluster $C$) onto another server with
load at most $k$.

\subsection{Structure of Intervals and Slices}
In order to derive upper bounds on the communication and moving cost of our
algorithm, we have to understand the underlying structure of the intervals and
slices produced in the slicing procedure. 

The next lemma states that monochromatic segments that have a large enough
intersection must have the same majority color.
\begin{restatable}{lemma}{strucone}\label{lemma:overlapping_segments}
	Let $I$ and $J$ be two overlapping $\delta$-monochromatic segments. 
	If  $|I\cap J| \ge \alpha \max \{|I|, |J| \}$ and $\delta \ge 1-\frac{\alpha}{2}$, the segments have the same majority color.  
\end{restatable}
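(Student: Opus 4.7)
The plan is a direct double-counting argument. Suppose for contradiction that $I$ and $J$ have distinct majority colors $c_I \neq c_J$, and let $a := |I \cap J|$. I will count the number $N$ of processes in $I \cap J$ that are initially colored $c_J$ in two ways.

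For the lower bound I use that $J$ is $\delta$-monochromatic for $c_J$: strictly more than $\delta|J|$ processes of $J$ have color $c_J$, and at most $|J|-a$ of these can lie in $J \setminus I$, so $N > \delta |J| - (|J|-a) = a - (1-\delta)|J|$. For the upper bound I use that $I$ is $\delta$-monochromatic for $c_I \neq c_J$: fewer than $(1-\delta)|I|$ processes of $I$ have a non-$c_I$ color, and color $c_J$ is one such color, so $N < (1-\delta)|I|$. Combining,
\begin{equation*}
a < (1-\delta)(|I| + |J|).
\end{equation*}

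Now I plug in the hypotheses. Since $1 - \delta \le \alpha/2$ and $|I|+|J| \le 2\max\{|I|,|J|\}$, the right-hand side is at most $\alpha \max\{|I|,|J|\}$, which by assumption is at most $a$. This yields $a < a$, a contradiction, so $c_I = c_J$. No real obstacle is expected here; the only thing to watch is keeping the strict vs.\ non-strict inequalities consistent (the strictness comes from the definition of $\delta$-monochromatic, which uses ``strictly more than'').
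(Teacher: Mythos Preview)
Your proof is correct and is essentially the same double-counting argument as the paper's: both derive the key inequality $|I\cap J| < (1-\delta)(|I|+|J|)$ from the assumption $c_I\neq c_J$ and then finish identically using $1-\delta\le\alpha/2$. The only cosmetic difference is that you count $c_J$-colored processes inside $I\cap J$, whereas the paper counts all $c_I$- and $c_J$-colored processes inside $I\cup J$; these are dual viewpoints yielding the same inequality.
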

\begin{proof}
Let $|I\setminus J|=a$, $|I\cap J|=b$ and $|J\setminus I|=c$. Then,
$|I \cup J|=a+b+c$. 
Assume that the segments have
different majority colors $c_I$ and $c_J$, respectively. Then, the segment
$|I \cup J|$ must contain strictly more than $\delta (a+b)$ elements of color $c_I$ and strictly more than
$\delta (b+c)$ elements of color $c_J$. Trivially, the number of elements of
both colors cannot exceed $a+b+c$, i.e.,
$\delta(a+b) + \delta (b+c) < a + b+ c$ or 
$\delta < \frac{a+b+c}{a+2b+c}$. Furthermore, since $b \ge \alpha(a+b)$ and
$b \ge \alpha(b+c)$ we get $b \ge \frac{\alpha}{2}(a+2b+c)$. Then,
\begin{equation*}
	\delta < \frac{a+b+c}{a+2b+c} = 1-  \frac{b}{a+2b+c} \le 1 - \frac{\alpha}{2}\enspace,
\end{equation*}
which gives a contradiction.
\end{proof}

Next, we argue, that the union of a sequence of consecutive overlapping
$\delta$-mono\-chromatic intervals of the same majority color is also monochromatic. 
\begin{restatable}{lemma}{structwo}\label{lemma:overlapping_segments_delta_prime}
Let $I_1, I_2$, $\dots, I_m$ be $\delta$-monochromatic intervals with the same
majority color, such that $I=\bigcup_{i} I_i$ forms a single contiguous
segment. Then $I$ is $\frac{\delta}{2-\delta}$-monochromatic for $c$.
\end{restatable}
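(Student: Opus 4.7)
The plan is to reduce to a situation where each point of $I$ is covered by at most two of the given intervals, and then combine two complementary lower bounds on the number of color-$c$ processes in $I$. First I would pass to a minimal sub-cover of $I$: iteratively drop any interval whose removal still leaves the union equal to $I$. The remaining intervals $J_1,\dots,J_s$ are still $\delta$-monochromatic with the same majority color $c$ and still cover $I$, and a standard property of minimal one-dimensional interval covers is that every point is contained in at most two of them---if some point $p$ lay in three intervals ordered by left endpoint, the middle one would be contained in the union of the other two and could therefore be dropped, contradicting minimality. Sorting these intervals by left endpoint and splitting them by parity of the index then gives two families of pairwise disjoint intervals (since non-adjacent intervals in the sorted order cannot share a common point without violating the ``at most two'' property).

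Let $A$ be the union of the odd-indexed intervals and $B$ the union of the even-indexed ones, and write $N(X)$ for the number of color-$c$ processes in a set $X$. Then $A\cup B=I$ and $|A|+|B|=|I|+|A\cap B|$. Since each family is a disjoint union of $\delta$-monochromatic intervals for color $c$, summing the per-interval bounds gives $N(A)>\delta|A|$ and $N(B)>\delta|B|$. From this I would derive two lower bounds on $N:=N(I)$. Inclusion-exclusion combined with the trivial estimate $N(A\cap B)\le|A\cap B|$ yields $N=N(A)+N(B)-N(A\cap B)>\delta|I|-(1-\delta)\,|A\cap B|$, while taking the better of the two halves gives $N\ge\max(N(A),N(B))\ge\delta\max(|A|,|B|)\ge\delta(|A|+|B|)/2=\delta(|I|+|A\cap B|)/2$.

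Regarded as functions of $x:=|A\cap B|$, the first bound is linear and decreasing in $x$ while the second is linear and increasing; a short calculation shows that they coincide precisely at $x=\tfrac{\delta}{2-\delta}|I|$, where both take the value $\tfrac{\delta}{2-\delta}|I|$. Hence for every value of $x$ the maximum of the two bounds is at least $\tfrac{\delta}{2-\delta}|I|$, which delivers the desired strict inequality $N>\tfrac{\delta}{2-\delta}|I|$ and proves that $I$ is $\tfrac{\delta}{2-\delta}$-monochromatic for $c$. The step I expect to be the main obstacle is the careful justification of the ``at most two'' property of the minimal one-dimensional interval cover and the corresponding 2-coloring into disjoint families; once these structural facts are in hand, the two bounds and their one-variable optimization are a short piece of linear arithmetic.
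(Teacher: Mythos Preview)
Your argument is correct and reaches the same bound, but by a genuinely different route than the paper. Both proofs begin the same way, by passing to an irredundant sub-cover, which forces the depth-$2$ property you state. From there the paper proceeds by a direct telescoping computation: it writes $X_i=I_i\cap I_{i+1}$ and $A_i=I_i\setminus X_{i-1}$, uses $f(I_i)\ge\delta|I_i|$ for each $i$, and manipulates the sums to arrive at $\delta|I|\le(2-\delta)f(I)$ in one chain of inequalities. Your approach instead exploits the depth-$2$ structure to $2$-color the intervals into two disjoint families $A$ and $B$, and then balances two complementary bounds---inclusion--exclusion (strong when the overlap $|A\cap B|$ is small) versus the larger half (strong when the overlap is large)---whose worst case over the overlap parameter is exactly $\tfrac{\delta}{2-\delta}|I|$. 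The paper's proof is a shorter calculation once the decomposition is set up; yours is more conceptual and makes transparent \emph{why} the factor $\tfrac{\delta}{2-\delta}$ appears as the equilibrium of two competing estimates.

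One small point to tidy up: if the minimal cover consists of a single interval ($s=1$), your family $B$ is empty and the strict inequality $N(B)>\delta|B|$ is vacuously false; just note that in this case $I=J_1$ is already $\delta$-monochromatic, hence $\tfrac{\delta}{2-\delta}$-monochromatic since $\tfrac{\delta}{2-\delta}\le\delta$.
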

\begin{proof}
Wlog.\ we assume that the set of intervals does not contain
\emph{redundant} intervals, i.e., for all $j$, $\bigcup_{i \neq j} I_i \neq I$.
Further, we assume that the intervals are numbered by the order of their centers.

Define $I_0=I_{m+1}=\emptyset$ and let  $X_i=I_{i}\cap I_{i+1}$ and $A_i=I_{i}\setminus X_{i-1}$. 
Then, the intervals have the following structure:
$I_{i}=X_{i-1} \cup A_i \cup X_{i+1}$. Notice, that $X_i$ and $A_j$ are disjoint.
For a segment $S$ we use $f(S)$ to denote 
the number of elements with color $c$ in $S$. Trivially, $0\le f(S)\le |S|$. Furthermore, we know that
\begin{equation*}
	f(I_i)=f(X_{i-1}) + f(A_{i}) + f(X_{i}) \ge \delta(|X_{i-1}| + |S_i| + |X_i|)=\delta |I_i|,
\end{equation*} for all $1\le i \le m$. Then,
\def\dsum{\textstyle\sum\nolimits}
\newlength{\myskip}
\begin{equation*}
\begin{split}
	\delta|I| &= \delta \dsum_{i=1}^m (|X_{i-1}|+ |A_i|)                                                                     \\[\myskip] 
	          &=  \dsum_{i=1}^m \delta(|X_{i-1}| + |A_i| + |X_{i}|) - \dsum_{i=1}^m \delta |X_{i}|                           \\[\myskip] 
	          &\le \dsum_{i=1}^m (f(X_{i-1}) + f(A_{i}) + f(X_{i})) - \dsum_{i=1}^m \delta |X_{i}|                           \\[\myskip] 
	          &= \dsum_{i=1}^m (f(X_{i}) + f(A_{i})) + \dsum_{i=1}^m (f(X_{i-1}) - \delta |X_{i}|)                           \\[\myskip] 
	          &= \dsum_{i=1}^m (f(X_{i}) + f(A_{i})) + \dsum_{i=1}^m (f(X_{i}) - \delta |X_{i}|)                             \\[\myskip] 
	          &= \dsum_{i=1}^m (f(X_{i}) + f(A_{i})) + \dsum_{i=1}^m ((1-\delta)f(X_{i})  + \delta f(X_{i}) - \delta |X_{i}|)\\[\myskip] 
	          &\le \dsum_{i=1}^m (f(X_{i}) + f(A_{i})) + (1-\delta)\dsum_{i=1}^m f(X_{i})                                    \\[\myskip] 
	          &\le (2-\delta)\dsum_{i=1}^m (f(X_{i}) + f(A_{i}))                                                             \\[\myskip] 
	          &= (2-\delta)f(I) \qedhere
\end{split}                
\end{equation*}%
\end{proof}

\newlength{\enumerateparindent}
\setlength{\enumerateparindent}{0pt}
\def\enumindent{\hspace*{0pt}\hspace{\enumerateparindent}}

The next lemma explains the structure of the slice between two adjacent active intervals.
Let $I$ be an interval with length $|I|\ge 4$ and let $I'$ denote the interval
before its most recent growth step.
We say $I'$ is the core of $I$ and denote it by $\core(I)=I'$. Note that
initial intervals do not have a core.
\begin{restatable}{lemma}{structhree}\label{lemma:segment_structure}
Let $N=[a,b]$ be a segment such that no active interval is intersecting $N$.
Let $\mathcal{M}$ be the set of inactive intervals $I$ with
$\CORE(I) \cap N \neq \emptyset$. Let $M=\bigcup_{I\in \mathcal{M}} I$ and
$F=N \setminus M$.
Then, there exists a color $c$ and an interval set $\mathcal{U}\subseteq \mathcal{M}$ with
the following properties:
\begin{enumerate}[a)]
	\item $\bigcup_{I\in \mathcal{U}} I=M$;\label{pro:cover}
	\item each interval in $\mathcal{U}$ is $\deltabar$-monochromatic for
        color $c$;\label{pro:mono}
	\item each process in $F$ has initial color $c$;\label{pro:free}
	\item the segment $N'= M \cup F$ is
          $\dbar/(2-\dbar)$-monochromatic for color $c$.\label{pro:strange}
\end{enumerate}
\end{restatable}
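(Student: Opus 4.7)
The plan is to construct $\mathcal{U}$ as a cover of $M$ by $\bar\delta$-monochromatic intervals in $\mathcal{M}$ with a common majority color $c$, and then to show that $F$ consists of processes with initial color $c$, from which property (d) follows by Lemma~\ref{lemma:overlapping_segments_delta_prime}.

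For the cover, I would let $\mathcal{U}$ consist of the maximal (under containment) $\bar\delta$-monochromatic intervals of $\mathcal{M}$. To show $\bigcup_{I\in\mathcal{U}} I \supseteq M$, take any $J \in \mathcal{M}$; if $J$ is itself monochromatic it lies in some member of $\mathcal{U}$, and otherwise I climb the domination chain $J = J_0 \subsetneq J_1 \subsetneq \dots \subsetneq J_m$ to its top. An active $J_m$ would intersect $N$ (since $J_m \supseteq \core(J_0) \cap N \neq \emptyset$), contradicting the hypothesis on $N$, so $J_m$ is inactive and $\bar\delta$-monochromatic. The delicate part is to argue that $J_m \in \mathcal{M}$, i.e.\ $\core(J_m) \cap N \neq \emptyset$: I would do this by tracking the centers along the chain and using that whenever $J_{i+1}$ dominates $J_i$, the new core of $J_{i+1}$ (its pre-growth interval) is arranged to contain a neighborhood of $J_i$'s center, which sits in $N$.

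To identify the common majority color $c$ I would invoke Lemma~\ref{lemma:overlapping_segments}: since $\bar\delta \ge 14/15$, the hypothesis $\bar\delta \ge 1 - \alpha/2$ is satisfied for $\alpha = 2/15$, so any two intervals in $\mathcal{U}$ whose overlap is at least a $2/15$-fraction of the larger share a majority color. Within a connected component of $M$ the intervals in $\mathcal{U}$ grow dyadically around cut-edges in or near $N$ and overlap heavily, so all elements of a component share one color; equality of color across different components of $M$ is then enforced by property~(c), since the $F$-pieces separating them carry color $c$.

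For property (c) I would argue by contradiction: if $p \in F$ had an initial color different from $c$, then somewhere between $p$ and an adjacent $\mathcal{U}$-interval there is an initial cut-edge $e$, whose surrounding interval $J$ contains $p$. If $J$ were active it would intersect $N \ni p$, contradicting the hypothesis; otherwise $J$ is inactive and, provided $|J|\ge 4$, its center (an edge adjacent to $p \in N$) lies in $\core(J)$, so $\core(J)\cap N \neq \emptyset$ and $J \in \mathcal{M}$, placing $p \in M$, again a contradiction. Property (d) then follows by applying Lemma~\ref{lemma:overlapping_segments_delta_prime} to the $\bar\delta$-monochromatic intervals of $\mathcal{U}$ together with the (fully monochromatic) $F$-components, yielding the $\bar\delta/(2-\bar\delta)$-monochromaticity of $N'$. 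The main obstacle is the structural claim that every $J\in\mathcal{M}$ is contained in a monochromatic member of $\mathcal{M}$: carefully tracking how cores relate to $N$ through a chain of domination steps, and separately handling degenerate size-$2$ intervals that carry no core and must be charged to the interval that dominated them, is where most of the technical care lies.
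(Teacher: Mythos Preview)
Your overall strategy is natural, but two of the steps you flag as ``delicate'' are in fact genuine gaps that do not close the way you suggest.

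First, the domination-chain argument does not establish that the top $J_m$ lies in $\mathcal{M}$. When $J_{i+1}$ dominates $J_i$, all you know is $J_i\subseteq J_{i+1}$ \emph{after} the growth; you do not get that the center of $J_i$ lies in $\core(J_{i+1})$ (take e.g.\ $\core(J_{i+1})=[0,15]$, $J_{i+1}=[-8,23]$, $J_i=[16,23]$). Moreover your base case fails too: from $\core(J_0)\cap N\neq\emptyset$ you cannot conclude that $J_0$'s center sits in $N$. So you may end up with a $\bar\delta$-monochromatic $J_m\supseteq J_0$ that intersects $N$ but has $\core(J_m)\cap N=\emptyset$, i.e.\ $J_m\notin\mathcal{M}$, and then you have no cover of $J_0$ by sets in $\mathcal{U}\subseteq\mathcal{M}$.

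Second, the color-consistency step is where the real difficulty lies, and ``overlap heavily'' is not enough. Lemma~\ref{lemma:overlapping_segments} needs overlap at least a fixed fraction of the \emph{larger} interval; two maximal $\bar\delta$-monochromatic intervals in a component of $M$ can overlap by only a few vertices relative to their sizes, so you cannot chain the lemma directly. The paper's proof confronts exactly this: it proceeds by induction on $|\mathcal{M}|$, in the contiguous case picks the \emph{largest} interval $H\in\mathcal{M}$, and then compares not single intervals but inductively obtained $\bar\delta/(2-\bar\delta)$-monochromatic supersegments (Property~(d) on sub-instances) against $H$ to force the required $\tfrac14$-overlap. Your argument for Property~(c) has a related issue: the initial cut-edge $e$ you locate need not lie in $N$, so the interval centered at $e$ need not intersect $N$ and your contradiction does not trigger; in the paper this is handled by applying the induction hypothesis to a carefully chosen subsegment and again using the overlap lemma. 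In short, the inductive structure and the use of Property~(d) as an inductive tool (not just a corollary) are essential; a direct ``take maximal monochromatic intervals and chain Lemma~\ref{lemma:overlapping_segments}'' approach does not go through.
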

\begin{proof}
Property~\ref{pro:strange} just follows from the first three properties by
simply applying Lemma~\ref{lemma:overlapping_segments_delta_prime}. Therefore,
we focus on proving properties \ref{pro:cover}, \ref{pro:mono} and \ref{pro:free}.
We prove
these by induction over the number of intervals in $\mathcal{M}$.

\medskip
\noindent
\textit{Base case $(|\mathcal{M}|=0)$:}\hfill\\
If $\mathcal{M}=\emptyset$ we have $F=N\setminus M=N$ and
$\mathcal{U}=\emptyset$. Then properties \ref{pro:cover} and \ref{pro:mono} of the lemma are trivially
fulfilled. Any initial cut-edge is either inside an active interval or inside the core of some
interval. As no active interval and no core intersects $N$ (otw.\ $\mathcal{M}$ would not be empty) we
get that all processes in $F$ must have the same initial color. We choose $c$
as this color. This fulfills Property~\ref{pro:free}.

\medskip
\noindent
\textit{Induction step:}\hfill\\ We call the vertices in $F$ \emph{free vertices} as they are
not contained in any set of $\mathcal{M}$. We distinguish two cases.

\medskip
\begin{enumerate}[I)]\parskip0pt\enumerateparindent2ex
\item
First suppose that there exists a free vertex $f$ that has an interval of
$\mathcal{M}$ on both sides. Let $\mathcal{M}_\ell$ denote the set of interval to the left
of $f$ and $\mathcal{M}_r$ denote the interval to the right of $f$. We apply
the induction hypothesis on the sub-sequences $N_\ell:=[a,f]$ and $N_r:=[f,b]$
(with interval sets $\mathcal{M}_\ell$ and $\mathcal{M}_r$, respectively). This
gives interval sets $\mathcal{U}_\ell$ and $\mathcal{U}_r$ that are
$\deltabar$-monochromatic for colors $c_\ell$ and $c_r$, respectively, and it
gives sets of free vertices $F_\ell$ and $F_r$ that all have color $c_\ell$ and $c_r$,
respectively.

\enumindent
The union $\mathcal{U}:=\mathcal{U}_\ell\cup\mathcal{U}_r$ of the interval sets
covers $M$, which gives Property~\ref{pro:cover}. The free vertices are
$F=F_\ell\cup F_r$. To show properties~\ref{pro:mono} and \ref{pro:free} we
have to argue that $c_\ell=c_r$. But this is immediate because by construction
$F_\ell\cap F_r\neq\emptyset$ as it contains vertex $f$.

\item
Now suppose that $M=\cup_{i\in I}I$ forms a single contiguous segment. 
Let $H=[h_\ell, h_r]$ with $\CORE(H)=[c_\ell, c_r]$ be the largest interval in
$\mathcal{M}$ (ties broken arbitrarily). Let $N_\ell=[a,c_\ell]$ and
$N_r=[c_r,b]$ denote the segments to the left and right of $H$'s core (but
still sharing one vertex with the core),
and let $F_\ell:=N_\ell-M$ and $F_r:=N_r-M$ denote the free
vertices in these segments.
Further define
$\mathcal{M}_\ell=\{I\in \mathcal{M}~|~\text{s.t.}~\CORE(I)\cap N_l \neq \emptyset\}$,
$\mathcal{M}_r=\{I\in \mathcal{M}~|~\text{s.t.}~\CORE(I)\cap S_r \neq \emptyset\}$,
$M_\ell = \bigcup_{I\in \mathcal{M}_\ell} I$, and
$M_r = \bigcup_{I\in \mathcal{M}_r} I$. 

\enumindent
By construction $M=M_\ell\cup M_r$.
We now proceed by constructing
interval sets $\mathcal{U}_\ell\subseteq
\mathcal{M}_\ell$ and $\mathcal{U}_r\subseteq \mathcal{M}_r$ that cover
$M_\ell$ and $M_r$, respectively, and that are $\deltabar$-monochromatic
for colors $c_\ell$ and $c_r$, respectively.
This is sufficient to obtain properties \ref{pro:cover} and \ref{pro:mono}
by choosing $\mathcal{U}=\mathcal{U}_\ell\cup\mathcal{U}_r$ because
$c_\ell$ will be equal to $c_r$. To see this we argue that both $\mathcal{U}_\ell$
and $\mathcal{U}_r$ must contain the set $H$. This holds for $\mathcal{U}_r$
because $H$ is the only set in $\mathcal{M}_r$ that contains 
the vertex $h_\ell\in M_r$. An interval $X\in\mathcal{M}_r$ has a vertex to the right of $c_r$ (including
$c_r$) in its core. To also include $h_\ell$ would imply $H\subseteq X$. This is
not possible as $H$ is one of the largest intervals and no two intervals can
have identical borders.

\enumindent
Now, we show how to construct 
$\mathcal{U}_r$. The construction for
$\mathcal{U}_\ell$ is analogous. We distinguish two sub-cases.

\begin{enumerate}[A)]
\item
First suppose that $h_r$ is the rightmost vertex
in any interval of $\mathcal{M}$.
Then any interval in $\mathcal{M}_r$ is completely contained in $H$, which means we can
choose $\mathcal{U}_r=\{H\}$.

\item
Now, let $J\neq H$ denote the interval that contains the rightmost vertex of
$M$. First observe that $J$ is $\deltabar$-monochromatic because if not there
would need to be a strictly larger $\deltabar$-monochromatic interval that
completely contains $J$ in its core. Then $J$ could not
contain the rightmost vertex of $M$.

\begin{enumerate}[a)]\parskip0pt\enumerateparindent2ex
\item
If $H$ and $J$ have intersecting cores then the whole segment $[c_r,h_r]$ is
contained in both of them. This means that their intersection is at least
$\tfrac{1}{4}\max\{|H|,|J|\}$. Since $\dbar/(2-\dbar) \ge \frac{7}{8}$ we can
apply Lemma~\ref{lemma:overlapping_segments} and conclude that $H$ and $J$ have
the same majority color. Hence, we can choose
$\mathcal{U}_r=\{H,J\}$ to cover $M_r$.
\item
Suppose the cores do not intersect. Let $J=[j_\ell,j_r]$ and \CORE(j)=$[d_\ell,d_r]$.
We apply the induction hypothesis to the two
sub-sequences $A=[c_r,d_\ell-1]$ and $B=[c_r+1,d_\ell]$. We can do this because
both have at most $|\mathcal{M}|-1$ intersecting cores. We get two
$\dbar/(2-\dbar)$-monochromatic sequences $A'$ and $B'$, and interval subsets
$\mathcal{U_A}$ and $\mathcal{U_B}$.

\enumindent
$A'=[h_\ell,\gamma]$,
where $\gamma\ge\max\{h_r,d_\ell-1\}$ because $A'$ must cover $H$
and $A$.
$B'=[\gamma,j_r]$, where $\gamma\le\min\{c_\ell+1,j_\ell\}$ because $B'$ must
cover $J$ and $B$.

\enumindent
The intersection between $A'$ and $B'$ has size at least
$\frac{1}{4}\max\{|A'|,|B'|\}$, which by Lemma~\ref{lemma:overlapping_segments}
gives that they need to have the same majority color. Also the intervals in
$\mathcal{U}_A$ and $\mathcal{U_B}$ have this majority color. Hence, we can
choose $\mathcal{U}_r=\mathcal{U}_A\cup\mathcal{U}_B$.

\end{enumerate}

\end{enumerate}

So far all intervals in $\mathcal{U}=\mathcal{U}_\ell\cup\mathcal{U}_r$ have
the same majority color $c$. It remains to show that also all in $F_\ell$
and $F_r$ have this color. This then gives Property \ref{pro:free} of the lemma.
Again we only do the argument for $F_r$.

\enumindent
Let $X=[x_\ell,x_r]$ denote the interval that contains the rightmost vertex of $M$. 
If $b\ge x_r$ then $F_r=\emptyset$ and there is nothing to prove. Otherwise,
$F_r=[x_r+1,b]$. We first argue that
the leftmost vertex $x_r+1$ in $F_r$ has color $c$. This implies the statement
because the region $[x_r+1,b]$ cannot contain any initial cut-edges as these
would be either inside an active interval or 
inside the core of an inactive interval, and there are no cores intersecting
$[x_r+1,b]$.

\enumindent
We apply the induction hypothesis to the segment $S:=[c_r+1,h_r+1]$, which we
can do because $\CORE(H)$ does not intersect $S$ (hence, at most
$|\mathcal{M}|-1$ intervals have a core intersecting $S$). By
Property~\ref{pro:strange} this returns a segment $S'=M_r\cup F_S\supseteq S$
that is $\deltabar/(2-\deltabar)$-monochromatic for some color $c_S$, and all
vertices in $F_S=S\setminus M_r$ have color $c_S$.

\enumindent
The intersection between $S'$ and $H$ is at least $\tfrac{1}{4}\max\{|H|,|S'|\}$.
Since $\dbar/(2-\dbar) \ge \frac{7}{8}$ we can apply Lemma~\ref{lemma:overlapping_segments} and
conclude that $H$ and $S'$ have the same majority color, i.e., $c_S=c$.
In addition we get that the vertex $x_r+1$ has color $c$, as well, as it is
contained in $F_S$.

\enumindent
This finishes the proof of Case~II of the induction step. \qedhere
\end{enumerate}
\end{proof}
\begin{restatable}{lemma}{strucfour}\label{lemma:gamma_mono}
Let $L=[a,b]$ and $R=[c,d]$ be two non-intersecting intervals, such that no active interval intersects the segment $N=[b+1, c-1]$. Let
$e_\ell=(\ell, \ell+1)$ and $e_r=(r, r+1)$ be the centers of $L$ and $R$, respectively, and let $S=[a, d]$ be
the segment containing both $L$ and $R$. Let $\mathcal{M}$ be the set of
non-initial intervals with centers in $[\ell+1, r]$ and let $M = \sum_{I\in \mathcal{M}} |I|$.

If $|S| \ge \frac{1}{1-\delta}(M + |L| + |R|)$, then each segment
$S'=[i, j]$ with $i\in L$ and $j \in R$ is $\delta$-monochromatic.
\end{restatable}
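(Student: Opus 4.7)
The plan is to invoke Lemma~\ref{lemma:segment_structure} on the gap $N=[b+1,c-1]$, which is valid since by hypothesis no active interval intersects $N$. This produces a single color $c$ such that every vertex of $N$ lying outside the cores of the relevant inactive intervals (the \emph{free} vertices) has color $c$, and every such inactive interval is $\bar\delta$-monochromatic for $c$. In particular, every non-$c$ vertex inside $N$ must lie inside one of these inactive intervals, all of which are counted by $\mathcal{M}$.

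Next I would split $S'=[i,j]$ into three disjoint pieces: $A=[i,b]\subseteq L$, the gap $N$, and $B=[c,j]\subseteq R$. For $A$ and $B$ I use only the trivial bound that every vertex may be non-$c$, contributing at most $|A|+|B|$ non-$c$ vertices in total. For the gap $N$, by the previous paragraph the non-$c$ vertices are confined to the intervals of $\mathcal{M}$ and therefore number at most $M$ (the sum of the interval sizes, which is generous enough to absorb any double-counting across overlaps). Altogether, at most $|A|+|B|+M$ vertices of $S'$ are non-$c$, while $|S'|=|A|+|N|+|B|$.

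To establish $\delta$-monochromaticity it suffices to have $|A|+|B|+M \le (1-\delta)|S'|$, which rearranges to $M \le (1-\delta)|N| - \delta(|A|+|B|)$. This is tightest when $|A|=|L|$ and $|B|=|R|$, yielding $M+|L|+|R| \le (1-\delta)(|L|+|N|+|R|)=(1-\delta)|S|$, precisely the hypothesis of the lemma. Thus for \emph{every} choice of $i\in L$ and $j\in R$, the fraction of color $c$ in $S'$ strictly exceeds $\delta$.

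The main delicate step is the containment claim in the first paragraph: every inactive interval whose core meets $N$ and which might carry non-$c$ vertices in $N$ must be accounted for in $\mathcal{M}$, i.e., its center must lie in $[\ell+1,r]$. This should follow from the fact that $L$ and $R$ are the adjacent active intervals in the slicing procedure with centers $\ell$ and $r$, so any non-initial interval whose core reaches into the gap $N$ must have its center strictly between $\ell$ and $r+1$; checking this against the dynamics of \textsc{Algorithm}~\ref{alg:cut} (growth, domination, monochromaticity) is the one place where bookkeeping matters, but once it is established the rest of the proof is a short algebraic manipulation.
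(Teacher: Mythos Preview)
Your proposal is correct and mirrors the paper's proof: both apply Lemma~\ref{lemma:segment_structure} to $N$, establish the containment $\mathcal{M}'\subseteq\mathcal{M}$ via exactly the domination argument you sketch (if an inactive interval had its core meeting $N$ but its center outside $[\ell+1,r]$, then $L$ or $R$ would lie inside that core and would have been dominated), and then count. The only cosmetic difference is that the paper counts the color-$c$ free vertices $F\subseteq N\subseteq S'$ directly, obtaining $|F|\ge |S|-(M+|L|+|R|)\ge\delta|S|\ge\delta|S'|$, which sidesteps your decomposition $S'=A\cup N\cup B$ and the tightness check over $|A|,|B|$.
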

\begin{proof}
We show that if $|S| \ge \frac{1}{1-\dbar}(M + |L| + |R|)$, then the segment $N$ contains many processes with same initial color, which implies the lemma, since every segment $S'$ contains $N$.

Let $\mathcal{M'}$ be the set of intervals $I$ such that $\core(I) \cap N \neq \emptyset$. Then, $\mathcal{M'} \subseteq  \mathcal{M}$. Otherwise, w.l.o.g. there exists an interval $I$ with center $e_i < e_\ell$ and $\core(I) \cap N \neq \emptyset$. But, then  $L \subseteq \core(I)$, which means $L$ should have been dominated at the time when $I$ was active.

We apply Lemma~\ref{lemma:segment_structure} on $N$ and receive a set of $\dbar$-monochromatic (non-initial) intervals $\mathcal{U} \subseteq \mathcal{M'}$, such that $U=\bigcup_{I\in \mathcal{U}}I$ and the processes in $F=N \setminus U$ have the same initial color. Then, 
\[|F| = |N| - |U| = |S| - (|U| + |L| + |R|)\enspace.\]
Since $\mathcal{U} \subseteq \mathcal{M'} \subseteq  \mathcal{M}$, and therefore $|U| \le M$, we conclude:
\[
|F| \ge |N| -  (M + |L| + |R|)
\ge |S| - (1-\delta) |S|
= \delta |S|
\ge \delta |S'|
\]
Then, there are at least $\delta |S'|$ processes of the same color which means that segment $S'$ is $\delta$-monochromatic. 
\end{proof}

\subsection{Correctness}

In this section we show that at any time step, each server contains at most $(3+2\epsilon)k$ processes. A crucial ingredient for this is, that each cluster has a bounded size. 

\begin{restatable}{lemma}{correctone}
Let $A$ and $B$ be two adjacent active intervals with cut-edges $e_a$ and
$e_b$, respectively. The size of the slice $S$ between $e_a$ and $e_b$ is at
most $|A| + |B| -2 + (2-\dbar)/{\dbar}\cdot k$.
\end{restatable}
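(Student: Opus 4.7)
My plan is to decompose the slice $S$ between $e_a$ and $e_b$ into the portion lying inside $A$, the portion lying inside $B$, and a middle segment $N$ disjoint from both active intervals, then to bound each piece separately. Writing $A = [a_\ell, a_r]$, $B = [b_\ell, b_r]$ and $e_a = (i, i+1)$, $e_b = (j, j+1)$ with $a_\ell \le i \le a_r - 1$ and $b_\ell \le j \le b_r - 1$, the two ``interval pieces'' $[i+1, a_r]$ and $[b_\ell, j]$ have sizes at most $|A|-1$ and $|B|-1$ respectively. (In the degenerate case $A \cap B \neq \emptyset$, the slice satisfies $|S| \le |A \cup B| - 2 \le |A| + |B| - 2$, which already lies within the claimed bound with zero contribution from $N$.) The remaining work is to show $|N| \le \tfrac{2-\dbar}{\dbar}\, k$.

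To invoke Lemma~\ref{lemma:segment_structure} on $N$, I first need to verify its hypothesis that no active interval intersects $N$. By the adjacency of $A$ and $B$, no other active interval has center strictly between $c_A$ and $c_B$. An active interval $C$ with center $c_C \notin (c_A, c_B)$ that reached into $N$ would, by the symmetry of the doubling rule around the fixed center, have extended past $A$ (or past $B$) and thereby contain it. I will show that this is impossible by a timing argument on Algorithm~\ref{alg:cut}: if $C$ currently contains $A$, then at the time of $C$'s most recent growth $C$ already contained $A$ (since both intervals only grow about their fixed centers, so $A$'s past footprint is even smaller), so $A$ would have been removed from $\mathcal{I}$ by the domination check; this contradicts $A$ being presently active.

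Once the hypothesis is established, Lemma~\ref{lemma:segment_structure}(d) yields a color $c$ and a segment $N' = M \cup F \supseteq N$ that is $\dbar/(2-\dbar)$-monochromatic for $c$. Because only $k$ processes are initially colored $c$ and strictly more than $\tfrac{\dbar}{2-\dbar}\,|N'|$ of them lie in $N'$, this gives $|N'| \le \tfrac{2-\dbar}{\dbar}\, k$, and hence $|N| \le |N'| \le \tfrac{2-\dbar}{\dbar}\, k$. Summing the three bounds yields $|S| \le |A| + |B| - 2 + \tfrac{2-\dbar}{\dbar}\, k$, as required.

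The main obstacle is the structural claim in the second paragraph: verifying that adjacency together with the symmetric growth rule and the domination check really does prevent any third active interval from sneaking into $N$. The rest is bookkeeping plus a direct application of Lemma~\ref{lemma:segment_structure} and the fact that each initial color has population~$k$.
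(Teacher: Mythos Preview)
Your proof is correct and follows essentially the same approach as the paper: decompose the slice into the parts inside $A$ and $B$ (each of size at most $|A|-1$ resp.\ $|B|-1$) plus the gap $N$, then apply Lemma~\ref{lemma:segment_structure} to obtain a $\tfrac{\dbar}{2-\dbar}$-monochromatic superset $N'\supseteq N$ and bound $|N'|\le \tfrac{2-\dbar}{\dbar}k$ from the fact that at most $k$ processes carry any single initial color. The paper's proof simply asserts that no active interval intersects $N$, whereas you supply the containment/timing argument justifying this from adjacency, symmetric growth, and the domination check---this extra justification is sound and fills in a step the paper leaves implicit.
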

\begin{proof}
W.l.o.g. let $A=[a_\ell,a_r]$ and $B=[b_\ell,b_r]$ with $0\le a_\ell< b_\ell<n$.
If the intervals intersect ($a_r > b_\ell$) we are done.
Otherwise we analyze the segment $N = [a_r+1, b_\ell-1]$.
Since $N$ does not intersect any active interval we can apply
Lemma~\ref{lemma:segment_structure}, which states, that there is a $\frac{\dbar}{2-\dbar}$-monochromatic segment $N'$ with $N \subseteq 
N'$. 
Thus, $N'$ consists of at most $k$ processes of some color $c$ and at most
$(1-\frac{\dbar}{2-\dbar})k$ processes of other color. 
Hence, $|N|\le |N'| \le  k +(1-\frac{\dbar}{2-\dbar})|N'| \le\frac{2-\dbar}{\dbar}k$.
Then, the slice between $e_a$ and $e_b$ contains $N$, at most $|A|-1$ processes of $A$ and at most $|B|-1$ processes of $B$, which is at most $|A| + |B| -2 + \frac{2-\dbar}{\dbar}k$.
\end{proof}

\begin{corollary}\label{lemma:singleton_slice_size}
A singleton cluster has size at most $(3+{2(1-\dbar)}/{\dbar})\cdot k$.
\end{corollary}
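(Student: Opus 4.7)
The plan is to derive this directly from the preceding lemma on the size of a slice between two adjacent active intervals. A singleton cluster, by the definition of the Clustering Procedure, consists of a single slice $S$, and this slice is by definition bounded by two cut-edges that lie in two adjacent active intervals $A$ and $B$ (with overlap between consecutive intervals handled as usual). So the task essentially reduces to plugging sharp upper bounds for $|A|$ and $|B|$ into the previous lemma.

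The key observation is that the Slicing Procedure caps the maximum length of any interval at $k+1$: when a growth step would exceed $k+1$, the truncated rule $I' := [\ell - \lceil(k+1-|I|)/2\rceil, r + \lceil(k+1-|I|)/2\rceil]$ takes over. Hence $|A| \le k+1$ and $|B| \le k+1$ at every point in time, which gives $|A| + |B| - 2 \le 2k$.

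Applying the previous lemma to the adjacent pair $(A,B)$ bounding the singleton slice $S$, we obtain
\begin{equation*}
|S| \;\le\; |A| + |B| - 2 + \frac{2-\bar\delta}{\bar\delta}\, k
      \;\le\; 2k + \frac{2-\bar\delta}{\bar\delta}\, k.
\end{equation*}
A short algebraic simplification then yields the claimed form:
\begin{equation*}
2 + \frac{2-\bar\delta}{\bar\delta}
   \;=\; 2 + \frac{2}{\bar\delta} - 1
   \;=\; 1 + \frac{2}{\bar\delta}
   \;=\; 3 + \frac{2(1-\bar\delta)}{\bar\delta},
\end{equation*}
so $|S| \le \bigl(3 + 2(1-\bar\delta)/\bar\delta\bigr)\,k$.

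There is no real obstacle here: the only thing to be careful about is matching a singleton cluster to exactly one inter-interval slice (so the preceding lemma applies directly) and using the hard cap $|A|,|B|\le k+1$ enforced by the growth rule. Everything else is an arithmetic identity.
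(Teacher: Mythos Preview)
Your proof is correct and follows essentially the same approach as the paper: apply the preceding lemma with the cap $|A|,|B|\le k+1$ to get $|S|\le 2k + \tfrac{2-\bar\delta}{\bar\delta}k$, then rewrite the coefficient as $3 + 2(1-\bar\delta)/\bar\delta$. The paper's proof is just a one-line version of what you wrote.
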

\begin{proof}
Since the size of each interval is at most $k+1$, we get that each singleton slice has size at most $2k +{(2-\dbar)}/{\dbar}\cdot k = (3 + 2(1-\dbar)/{\dbar})\cdot k$.
\end{proof}

\begin{observation}\label{lemma:c-colored}
	Let $S$ and $T$ be $\frac{3}{4}$-monochromatic segments with majority color $c$. Then, $S$ and $T$ are both contained in the color $c$ cluster.
\end{observation}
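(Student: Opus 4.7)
The plan is to argue that this observation is essentially a direct consequence of the Clustering Procedure's update rule. I would prove it by maintaining an invariant over the sequence of slice modifications produced by the Slicing Procedure, namely: \emph{at every point in time, any existing slice that is currently $\frac{3}{4}$-monochromatic for some color $c$ is assigned to the color $c$ cluster}.

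First I would handle the base case. Initially, each slice consists of processes of a single initial color, hence is $1$-monochromatic (and in particular $\frac{3}{4}$-monochromatic) for that color, and by the initialization rule it is placed in the corresponding color cluster. So the invariant holds at the start.

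For the inductive step, I would consider any event that can modify a slice: a cut-edge movement (changing the membership of one slice on each side of the edge) or a merge of two slices (produced by a cut-edge deletion or by a cut-out as described in the Slicing Procedure). For each such modification, the Clustering Procedure inspects the resulting slice $S'$. Its update rule explicitly distinguishes three cases, and in the case where $S'$ is $\frac{3}{4}$-monochromatic for some color $c$, it assigns $S'$ to the color $c$ cluster, independent of the cluster of the predecessor slice. Slices that are not touched by the current event keep their cluster assignments and do not change their internal composition, so the invariant continues to hold for them as well. Hence the invariant is preserved after each event.

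Given the invariant, the observation is immediate: if $S$ and $T$ are both currently $\frac{3}{4}$-monochromatic with majority color $c$, then both are currently assigned to the (unique) color $c$ cluster, and therefore contained in the same cluster. The only mildly subtle point is to notice that this assignment can be made and re-made repeatedly over time (a slice may lose and later regain $\frac{3}{4}$-monochromaticity), but since every change triggers a fresh examination, the assignment at the \emph{current} time is always governed by the $\frac{3}{4}$-monochromaticity condition from the rule above. Thus there is no real obstacle beyond unpacking the Clustering Procedure's definition.
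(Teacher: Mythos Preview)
Your proposal is correct and matches the paper's reasoning: the paper states this as an Observation without proof, treating it as immediate from the Clustering Procedure's rule that any slice which is $\tfrac{3}{4}$-monochromatic for color $c$ upon examination is assigned to the color $c$ cluster. One minor slip: there is no ``cut-out'' operation in the static model's Slicing Procedure (that was part of the dynamic model in Section~\ref{sec:dyn}); the only slice-changing events here are cut-edge movements and merges caused by interval deactivation, but this does not affect your argument.
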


\begin{lemma}\label{lemma:colored_slice_size}
The size of a color $c$ cluster is at most $2k$.
\end{lemma}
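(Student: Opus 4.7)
The plan is to exploit the fact that monochromaticity is measured with respect to the \emph{initial} coloring (as stated in the Notation of Section~\ref{sec:static}), combined with the observation that initially there are at most $k$ processes with any given color $c$ (since every server starts with at most $k$ processes).

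First I would establish the following invariant maintained by the Clustering Procedure: every slice $S$ that currently belongs to the color $c$ cluster has strictly more than $|S|/2$ processes whose \emph{initial} color is $c$, i.e., $c$ is the (strict) majority color of $S$. This is immediate from the rules of the procedure: a slice is placed in the color $c$ cluster only if it is either $\tfrac{3}{4}$-monochromatic for $c$ (which implies strict majority $c$) or if it has $c$ as its majority color and was inherited from a color $c$ slice; initial slices are single-colored by construction and so trivially satisfy the invariant.

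Next I would use a counting argument. Let $P_c$ denote the set of processes with initial color $c$; since the initial distribution places at most $k$ processes on the server associated with color $c$, we have $|P_c|\le k$. Let $\mathcal{S}_c$ be the set of slices currently assigned to the color $c$ cluster. The slices in $\mathcal{S}_c$ are pairwise disjoint subsets of the cycle, so
\begin{equation*}
\sum_{S\in\mathcal{S}_c} |S\cap P_c| \;\le\; |P_c| \;\le\; k.
\end{equation*}
On the other hand, the invariant gives $|S\cap P_c|>|S|/2$ for every $S\in\mathcal{S}_c$, hence
\begin{equation*}
\tfrac{1}{2}\sum_{S\in\mathcal{S}_c} |S| \;<\; \sum_{S\in\mathcal{S}_c} |S\cap P_c| \;\le\; k,
\end{equation*}
which yields $\sum_{S\in\mathcal{S}_c} |S| < 2k$, proving the lemma.

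The only subtle point, and what I would treat as the main obstacle, is verifying that the invariant really is preserved by every step of the Clustering Procedure (in particular by the ``inherit from $S$'' branch, which at first glance looks like it could preserve the cluster assignment even when the majority has just flipped). One has to check that the branch is only taken when the current slice $S'$ still has a majority color that agrees with the majority color of its predecessor $S$, so that $c$ remains a strict majority color throughout. Once that invariant is in place, the counting argument above is routine.
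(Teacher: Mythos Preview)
Your proof is correct and follows essentially the same approach as the paper: the paper also uses that every slice in the color $c$ cluster has $c$ as (strict) majority color, observes that the cluster therefore contains at most $k$ processes of initial color $c$, and concludes that the remaining (non-$c$) processes number at most $k$ as well, giving total size at most $2k$. Your write-up is just a more detailed version, making the invariant and the counting explicit.
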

\begin{proof}
A color $c$ cluster contains at most $k$ processes of color $c$. Since the slices in this cluster are all at least ${1}/{2}$-monochromatic, the number of processes of other colors in this cluster is at most $k$. Thus, the overall number of processes in a color $c$ cluster is at most $2k$.
\end{proof}

\begin{lemma}\label{lemma:correctness_scheduling}
The scheduling algorithm produces an assignment of processes to servers, such that the load of each server is at most $(3+2\epsilon)k$, for $\dbar \ge {2}/(2+\epsilon)$.
\end{lemma}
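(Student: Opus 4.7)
My plan is to reduce the load bound to two ingredients: first, that the maximum cluster size $X$ is at most $(3+\epsilon)k$ under the hypothesis $\dbar \ge 2/(2+\epsilon)$; and second, that the rebalancing loop converts this into a per-server load bound of $(D+\epsilon)k \le (3+2\epsilon)k$, where $D = \max\{2, X/k\}$.

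First I would bound $X$ by combining the two cluster-type estimates already proved. Corollary~\ref{lemma:singleton_slice_size} says a singleton cluster has size at most $(3 + 2(1-\dbar)/\dbar)k$. From $\dbar \ge 2/(2+\epsilon)$ one obtains $2/\dbar \le 2+\epsilon$, hence $2(1-\dbar)/\dbar \le \epsilon$, so every singleton cluster has size at most $(3+\epsilon)k$. Lemma~\ref{lemma:colored_slice_size} bounds every color cluster by $2k \le (3+\epsilon)k$. Thus $X \le (3+\epsilon)k$ and $D \le 3+\epsilon$, which already makes the target quantity $(D+\epsilon)k$ at most $(3+2\epsilon)k$.

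Next I would verify that the rebalancing loop actually leaves every server with load at most $(D+\epsilon)k$. Since the total number of processes is at most $\ell k$, the average server load is at most $k$, so whenever some server is overloaded a server with load at most $k$ exists to receive a cluster. When the smallest cluster $C$ of an overloaded server $s$ is moved to such a server $s'$: in the case $|C| \le k$, the new load of $s'$ is at most $2k \le Dk$; in the case $|C| > k$, the previous content of $s'$ (at most $k$ processes) is evicted to yet another server of load at most $k$ (which again exists by the total-load bound), leaving $s'$ with load $|C| \le Dk$ and that third server with load at most $2k \le Dk$. In both cases no new overloaded server is created, and the load of $s$ strictly decreases by $|C| \ge 1$ per iteration. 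The loop therefore terminates, and every server ends up with load at most $Dk \le (D+\epsilon)k \le (3+2\epsilon)k$.

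The step I expect to be the main obstacle is the invariant maintenance in the rebalancing: one must check both that every transfer has a legal destination (which relies on the average-load bound $\le k$) and that no transfer inadvertently creates a new overloaded server, in particular in the case $|C|>k$ where two consecutive moves are made. Once this invariant is in place the lemma follows by plugging in the cluster-size bound from the first step.
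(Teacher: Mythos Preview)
Your proof is correct and matches the paper's approach: bound the maximum cluster size via Corollary~\ref{lemma:singleton_slice_size} and Lemma~\ref{lemma:colored_slice_size}, deduce $D\le 3+\epsilon$, and then use the rebalancing procedure to obtain the per-server bound $(D+\epsilon)k\le(3+2\epsilon)k$. The paper's own proof asserts the last step ``by construction'' whereas you spell out the invariant; one small slip is that only servers exceeding $(D+\epsilon)k$ are rebalanced down to at most $Dk$, so the global guarantee is $(D+\epsilon)k$ rather than $Dk$, but this does not affect your final inequality.
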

\begin{proof}
By construction, the scheduling algorithm ensures that no server has load greater that $(\vio+\epsilon)k$, where $\vio k$ is the maximum slice size. Due to Corollary~\ref{lemma:singleton_slice_size} and Lemma~\ref{lemma:colored_slice_size} we know that $\vio \le 3 + {2(1-\dbar)}/{\dbar}$, which is at most $3+\epsilon$ for $\dbar \ge {2}/(2+\epsilon)$.
\end{proof}

\subsection{Cost Analysis}

\subsubsection{Cost of Intervals}
\def\cost{\operatorname{cost}}
\def\costhit{\operatorname{cost}_{\operatorname{hit}}}
\def\costmove{\operatorname{cost}_{\operatorname{move}}}

We view parts of the cost of the online algorithm as associated with intervals,
as follows.
For an interval $I$ we use $\costhit(I)$ to denote the communication cost that the online
algorithm experiences on the cut-edge maintained by the interval, 
we use $\costmove(I)$ to denote the cost for moving the cut-edge within the interval, and
define $\cost(I):=\costhit(I)+\costmove(I)$ to be the \emph{cost of the
  interval}. Observe that the cost of merging neighboring slices when deactivating
an interval is not counted in $\cost(I)$.

\begin{observation}
For an initial interval $\cost(I)=0$.
\end{observation}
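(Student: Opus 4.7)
The plan is to observe that this is essentially immediate from the definition of an initial interval, so only two short points are needed. Recall that an initial interval $I=[i,i+1]$ has length $2$ and, since an interval of length $|I|$ contains $|I|-1$ edges, $I$ contains exactly one edge, namely the initial cut-edge at its center.

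First I would argue $\costmove(I)=0$. Because $I$ contains only one possible edge position, the probability distribution $\nabla\sminp(x_I)$ used by the slicing procedure is forced to be a point mass on that single edge. There is simply nowhere else for the maintained cut-edge to go while $I$ remains initial, so no movement within $I$ can occur and the movement cost vanishes trivially.

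Next I would argue $\costhit(I)=0$ by appealing to the remark that an initial interval is grown right after the first request to its single edge. Hence while $I$ is still in its initial state no request has yet hit the cut-edge it maintains, and so no communication cost is charged on this cut-edge during the initial phase. Combining the two facts yields $\cost(I)=\costhit(I)+\costmove(I)=0$.

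The only subtlety worth flagging is bookkeeping: any cost incurred at the moment $I$ grows, such as the cost of choosing a new cut-edge inside the enlarged interval, is attributed to the non-initial interval that $I$ has become, not to $I$ in its initial state. This is precisely how the analogous claim was used in the proof of Lemma~\ref{lemma:grow_alg}c, and no new argument is needed beyond consistency with that accounting convention.
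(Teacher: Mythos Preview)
Your proposal is correct and matches the paper's own reasoning. The paper states this observation without proof; the only justification it gives is the single sentence in the proof of Lemma~\ref{lemma:grow_alg}(c): ``If $I$ is the initial interval, then there was no request to the initial edge yet, no costs were incurred.'' Your argument is a faithful expansion of that sentence into the two components $\costhit(I)$ and $\costmove(I)$, and your remark about attributing the first hit (and the cost of repositioning after growth) to the enlarged interval is exactly the accounting convention the paper uses when summing over phases $i\ge 1$ in Lemma~\ref{lemma:grow_alg}.
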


We use $\OPT(I)$ to denote the cost that the optimum algorithm
experiences for migrating processes of $I$, or for communicating along edges in
$I$.

\begin{lemma}\label{lem:optlower}
We have the following lower bounds on the optimum cost of an interval:
\begin{enumerate}
\item A non-initial interval $I$ fulfills $\OPT(I)\ge \frac{1}{2}(1-\deltabar)|I|$.
\item Active intervals fulfill  $\OPT(I)\ge \frac{1}{2}\min_ex_e$, where $x_e$
is the current request vector.
\item Inactive intervals fulfill $\OPT(I)\ge \frac{1}{2}\min_ex_e'$, where
$x_e'$ is the request vector at the time $I$ became inactive.
\end{enumerate}
\end{lemma}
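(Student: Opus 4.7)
The plan is to prove each of the three bounds by a common case analysis on whether the static optimum OPT places a cut-edge inside a suitable sub-region of $I$: if it does, I bound $\OPT(I)$ below by the communication cost on that cut-edge; if it does not, the sub-region sits entirely on one server and I bound $\OPT(I)$ below by the migration cost forced by how far from $\bar{\delta}$-monochromatic the initial coloring of that sub-region is.

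For Part~1, I would work with the core $I'$ of the non-initial interval $I$ (i.e., $I$ just before its most recent growth step), which satisfies $|I'|=|I|/2$ and, by the growth rule, $x_e\ge (1-\bar{\delta})|I'|$ for every edge $e\in I'$ at the moment of growth and hence at present. The core $I'$ was active, so it had passed its own post-growth monochromaticity check (or is an initial interval with two distinct colors), which means $I'$ is not $\bar{\delta}$-monochromatic. If OPT has a cut-edge $e^*\in I'$ then $\OPT(I)\ge x_{e^*}\ge (1-\bar{\delta})|I'|=\tfrac{1}{2}(1-\bar{\delta})|I|$. Otherwise the contiguous segment $I'$ is hosted on a single server of some color $c$; at most $\bar{\delta}|I'|$ processes of $I'$ have initial color $c$, so OPT must have migrated at least $(1-\bar{\delta})|I'|$ processes of $I$, giving the same bound.

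For Part~2, let $m=\min_e x_e$ on the active $I$. If OPT cuts at some $e^*\in I$ then $\OPT(I)\ge x_{e^*}\ge m$ directly. If OPT has no cut-edge in $I$, then all of $I$ is on one server; this forces $|I|\le k$, so $I$ can still grow, and since $I$ is active but has not yet grown the growth condition is unmet, i.e., $m<(1-\bar{\delta})|I|$. Part~1 then yields $\OPT(I)\ge \tfrac{1}{2}(1-\bar{\delta})|I|>m/2$. The edge case of an active initial interval is trivial since $|I|=2$ and the growth threshold $(1-\bar{\delta})|I|<1$, so $m=0$.

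Part~3 follows the same dichotomy with the frozen vector $x'_e$ at the time of deactivation and $m'=\min_e x_e'$. Monotonicity of the request counts settles the cut-edge case ($\OPT(I)\ge x_{e^*}\ge x'_{e^*}\ge m'$), and in the no-cut case the inequality $|I|\le k$ together with Part~1 on the core gives $\OPT(I)\ge \tfrac{1}{2}(1-\bar{\delta})|I|$. To match this against $m'/2$, I would argue $m'\le (1-\bar{\delta})|I|$ at the moment of deactivation: this is immediate if $I$ was dominated, since $I$ was still active and so its growth condition was unmet, and for deactivation by monochromaticity it requires a structural argument that the newly appended edges of $I\setminus I'$ cannot all carry more than $(1-\bar{\delta})|I|$ requests without a larger overlapping interval having already dominated $I$. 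The main obstacle is precisely this bookkeeping for the monochromatic-deactivation branch of Part~3, where the deactivation check takes precedence over further growth, and the order in which the while-loop of Algorithm~\ref{alg:cut} processes grow, deactivate, and dominate operations must be carefully accounted for.
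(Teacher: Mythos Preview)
Your treatment of Parts~1 and~2 matches the paper's proof; the paper splits Part~2 on whether $I$ is final rather than on whether $\OPT$ places a cut-edge in $I$, but these are contrapositives of one another. For the domination branch of Part~3 you are again aligned with the paper, which in fact dispatches all of Part~3 with the single sentence ``immediately follows from Part~2''.

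Where you go astray is the monochromatic-deactivation branch of Part~3. You flag it as the main obstacle and propose a structural argument that the appended edges $I\setminus\core(I)$ cannot all carry more than $(1-\bar\delta)|I|$ requests without $I$ having been dominated by some larger overlapping interval. That detour is unnecessary, and as sketched it does not obviously succeed: high request counts on the appended edges need not come from a single larger interval that would already have absorbed~$I$. The fix is far simpler and looks only at the \emph{core}. When $I$ is deactivated for being $\bar\delta$-monochromatic it has just grown from $I'=\core(I)$, and that growth was triggered precisely because $\min_{e\in I'}x'_e$ just crossed the threshold $(1-\bar\delta)|I'|$. Since $I'\subseteq I$,
\[
m' \;=\; \min_{e\in I}x'_e \;\le\; \min_{e\in I'}x'_e \;\le\; (1-\bar\delta)|I'| \;\le\; (1-\bar\delta)|I|
\]
(up to the same rounding slack the paper already absorbs), and then Part~1 gives $\OPT(I)\ge\tfrac12(1-\bar\delta)|I|\ge m'/2$, exactly as in your no-cut argument for Part~2. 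The witness edge for a small minimum already lives in the core, not in the newly appended wings; no reasoning about other intervals is required. This is what the paper's one-line reduction to Part~2 is implicitly using: in every case---active, dominated, or freshly grown and deactivated---a non-final interval satisfies $\min_e x'_e\le(1-\bar\delta)|I|$, after which Equation~(\ref{eqn:obs}) applies.
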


\begin{proof}
For the first part $I$ is non-initial, which means that it has grown before. Let $I'$ denote the
interval $I$ before the most recent growth step. The interval $I'$ was grown
because we had $\min_{e\in I'}x_e\ge(1-\deltabar)|I'|$. This means if $\OPT$ has
a cut edge inside $I'$ it experiences cost at least $\min_{e\in
  I'}x_e\ge(1-\deltabar)|I'|$; if not it had to recolor at least
$(1-\deltabar)|I'|$ processes in $I'$ because the interval $I'$ is not
$\deltabar$-monochromatic. Consequently we get
\begin{equation}
\label{eqn:obs}
\OPT(I)\ge\OPT(I')\ge (1-\deltabar)|I'|\ge\tfrac{1}{2}(1-\deltabar)|I| \enspace,
\end{equation}
where the last step follows because the growth step at most doubles the length
of an interval.

The second part trivially holds for initial intervals, as the left hand side is
$0$ in this case.
Intervals that are non-final (i.e., they did not reach the maximum size
$k+1$ yet), fulfill $\min_{e\in I}x_e\le (1-\deltabar)|I|$, as otw.\ we would grow
the interval. Then Equation~\ref{eqn:obs} directly implies the lemma. Final
intervals have size $k+1$ and therefore the optimum algorithm must have a
cut-edge inside. Hence, $\OPT(I) \ge \min_{e\in I}x_e$ for such intervals.
The third part immediately follows from Part~2.
\end{proof}

\begin{lemma} \label{lemma:interval_cost}
For any interval
$\cost(I) \le \mcO({1}/(1-\dbar)\cdot\log k) \OPT(I)$.
\end{lemma}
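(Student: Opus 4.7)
The plan is to combine the per-interval cost bound of the interval growing algorithm (Lemma~\ref{lemma:grow_alg}) with the three lower bounds on $\OPT(I)$ given by Lemma~\ref{lem:optlower}. First I would dispose of the trivial case: if $I$ is an initial interval then $\cost(I)=0$ by the preceding observation, and the inequality holds vacuously (the left side is zero).

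For a non-initial interval $I$, I would observe that the slicing procedure runs exactly the interval growing algorithm inside $I$ until $I$ either becomes deactivated (because it is $\dbar$-monochromatic, or becomes dominated by another interval) or keeps growing. In all cases, no further cost is charged to $I$ after deactivation, so Lemma~\ref{lemma:grow_alg} applies: writing $\min(I)$ for $\min_e x_e$ evaluated at the current time if $I$ is still active, or at the time $I$ was deactivated otherwise, we have
\begin{equation*}
E[\cost(I)] = E[\costhit(I)]+E[\costmove(I)] \le 6\min(I) + \mathcal{O}(\log |I|)\cdot|I|\enspace.
\end{equation*}

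Next I would plug in the lower bounds from Lemma~\ref{lem:optlower}. Part~1 gives $|I|\le \frac{2}{1-\dbar}\OPT(I)$, and Parts~2 and~3 together give $\min(I)\le 2\OPT(I)$ (either at the current time for active intervals or at the time of deactivation for inactive ones). Substituting yields
\begin{equation*}
E[\cost(I)] \le 12\,\OPT(I) + \mathcal{O}(\log k)\cdot\frac{2}{1-\dbar}\OPT(I) = \mathcal{O}\!\left(\frac{\log k}{1-\dbar}\right)\OPT(I)\enspace,
\end{equation*}
which is the desired bound (using $|I|\le k+1$ so that $\log |I|=O(\log k)$).

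The main potential obstacle is the discrepancy between the ``clean'' interval growing algorithm analyzed in Section~\ref{sec:hit_line} and its use inside the slicing procedure, where intervals can be deactivated for reasons unrelated to $\OPT$ (dominance by another interval). One has to verify that this deactivation only truncates the cost accrual and does not invalidate the phase-based bookkeeping in Lemma~\ref{lemma:grow_alg}: since deactivation stops both hitting and moving cost for $I$ from that moment on, the bound on $\cost(I)$ can only decrease, so the lemma still applies with $\min(I)$ and $|I|$ taken at the deactivation time. This ensures that the lower bound from Part~3 of Lemma~\ref{lem:optlower} is the right quantity to use, and the argument goes through.
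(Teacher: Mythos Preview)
Your proposal is correct and follows essentially the same approach as the paper: dispose of the initial case trivially, then for non-initial intervals invoke Lemma~\ref{lemma:grow_alg} to bound $E[\cost(I)]$ by $\mcO(1)\min(I)+\mcO(\log|I|)\,|I|$ and substitute the lower bounds from Lemma~\ref{lem:optlower}. Your explicit handling of the deactivation issue and the observation $\log|I|=O(\log k)$ are just spelled-out versions of steps the paper leaves implicit.
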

\begin{proof}
Clearly, the lemma holds, if $I$ is an initial interval, since $\cost(I) = 0$.
Let $I$ be a non-initial interval and let $x$ be either the current request vector (if $I$ is still active) or the request vector at the time $I$ was deactivated. 
Due to Lemma~\ref{lemma:grow_alg} we know that $I$ has cost at most
$E[\cost(I)] \le \mcO(1)\min_e(x_e) + \mathcal{O}(\log |I|)~|I|$.
Using Lemma~\ref{lem:optlower} we derive that $E[\cost(I)] \le \mcO(\frac{1}{1-\dbar}\log k) \OPT(I)$.
\end{proof}

\subsubsection{Cost of the Algorithm}
The overall cost of the algorithm consist of five parts. The first part is
the \emph{communication cost}, which we denote by $\costhit$.

The next three parts are migration costs due to the clustering algorithm.
Whenever a process is moved to an existing cluster the scheduling algorithm \emph{may}
have to also move this process in order to guarantee that all processes in a
group/cluster are scheduled on the same server. Therefore we define the following costs.

\begin{itemize}
\item \emph{Moving cost} $\costmove$: when an interval moves a cut-edge by a distance
of $d$, $d$ processes switch the slice that they belong to. This increases the
\emph{moving cost} of the clustering algorithm by $d$.
\item \emph{Merging cost} $\CMerge$: when two slices $S_s$ and $S_\ell$ with
$|S_s|\le |S_\ell|$ are merged the processes in the smaller slice are moved
to the cluster of the larger slice, which results in a cost of at most $|S_s|$.
\item \emph{Monochromatic cost} $\CMono$: when a slice $S$ becomes
$\frac{3}{4}$-monochromatic for some color $c$ (and it was not
$\frac{3}{4}$-mono\-chromatic before) we move it to the color $c$ cluster at a
cost of $|S|$.
\end{itemize}
Observe that there is no cost if we remove a slice from a color $c$ cluster to
form a singleton cluster, as this does not put a constraint on the scheduling
algorithm to move any processes.

\def\costbal{\ensuremath{\operatorname{cost}_{\operatorname{bal}}}}
The final part of the cost is the migration cost caused by direct actions of
the scheduling procedure to enforce capacity constraints. We call this the
\emph{rebalancing cost} $\costbal$.

%
%

\begin{observation}\label{obs:hit_move_costs}
The communication cost and moving cost are given by the interval cost, this means 
$\costhit = \sum_I\costhit(I)$, and $\costmove = \sum_I\costmove(I)$
\end{observation}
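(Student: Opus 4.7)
My plan is to derive both identities by unfolding the relevant definitions and appealing to the structural invariant that the set of cut-edges of the online partition at any time is in one-to-one correspondence with the currently active intervals $\mathcal{I}$. Each active interval $I$ maintains exactly one cut-edge, and conversely each cut-edge of the current partition is the cut-edge of a unique active interval (dominated or monochromatic intervals have been removed from $\mathcal{I}$ and no longer maintain a cut-edge).

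For the first identity, recall that $\costhit$ counts communication cost, which the algorithm incurs exactly when a request $\sigma_t=e$ is a cut-edge of the current partition. By the correspondence above, this happens iff $e$ equals the cut-edge maintained by some unique active interval $I\ni e$; this event contributes exactly $1$ to $\costhit(I)$ and to no other $\costhit(I')$. Summing over time steps gives $\costhit = \sum_I\costhit(I)$.

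For the second identity, $\costmove$ counts the number of processes that switch slices because of a cut-edge movement. Every cut-edge movement originates in a unique interval: either a within-phase update that shifts the cut-edge of interval $I$ to realign with $\nabla\fmin'(x_I)$, or the re-selection of a cut-edge inside a freshly grown interval $I'$. A movement of distance $d$ causes exactly $d$ processes to change slices and is booked into $\costmove(I)$ for that single interval. Summing over intervals therefore gives $\costmove = \sum_I\costmove(I)$.

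The only subtlety is the bookkeeping when intervals overlap: adjacent active intervals may share vertices, but because each cut-edge and each of its movements is attributable to exactly one interval (the one maintaining it), there is no double counting, so the observation follows directly.
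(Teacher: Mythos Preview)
The paper gives no proof for this observation, so there is nothing to compare against directly. Your argument is on the right track, but the two equalities need qualification.

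For $\costhit$, your claim that communication cost is incurred ``exactly when'' the requested edge is a cut-edge of the current partition is too strong in this section: the Clustering and Scheduling procedures may place two adjacent slices on the same server (for instance when both are $\tfrac{3}{4}$-monochromatic for the same color and hence lie in the same color cluster), so a request across a cut-edge can cost~$0$. Moreover, your ``unique active interval'' claim is not justified: an interval is only deactivated as dominated when it is \emph{completely} contained in another, so two active intervals can partially overlap, and their independently sampled cut-edges may coincide on an edge in the overlap. In that event a single request is counted in both $\costhit(I)$ and $\costhit(I')$. Both issues break the equality but still leave $\costhit \le \sum_I\costhit(I)$ intact, and that inequality is all the proof of the main theorem actually uses. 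The analogous observation in Section~\ref{sec:dyn} indeed only claims $\le$, and its proof explicitly mentions the overlap caveat.

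For $\costmove$, your argument is essentially the definition: the moving cost is \emph{defined} to increase by $d$ whenever some interval moves its cut-edge by distance~$d$, and each such move is performed by exactly one interval, so this equality holds as stated.
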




\begin{restatable}{lemma}{mergelemma}\label{lemma:merge_cost}
The merge cost $\CMerge$ of the clustering algorithm is at most $\mcO({1}/{(1-\dbar)}\cdot\log k)\sum_{I \in \mathcal{I}} \OPT(I)$.
\end{restatable}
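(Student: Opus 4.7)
The plan is to express $\CMerge$ as a sum over interval-deactivation events and then charge each event to the $\OPT$-cost of a nearby interval. Every merge is triggered by the removal of a cut-edge, which happens exactly when some interval $I$ is deactivated. Each such deactivation is of one of two kinds: either $I$ just became $\deltabar$-monochromatic (immediately after a growth step), or $I$ was dominated by a larger interval $J$ that just grew. In both cases only one merge event occurs, and it pays at most $\min(|S_L|, |S_R|)$, where $S_L$ and $S_R$ are the two slices incident to $I$'s cut-edge.

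First I would bound the size of the smaller adjacent slice in terms of interval sizes. The slice $S_L$ (resp.\ $S_R$) lies between $I$'s cut-edge and the cut-edge of the nearest active interval $A$ (resp.\ $B$) to the left (resp.\ right). Applying Lemma~\ref{lemma:segment_structure} to the segment between $A$ and $B$ (after $I$'s removal), the free processes there all share a single majority color, and the intermediate inactive intervals are covered by a $\deltabar$-monochromatic collection $\mathcal{U}$ of intervals with that same color. The key structural step is to use this to show that $\min(|S_L|,|S_R|)$ can be charged in the form $O(|I|) + \sum_{J \in \mathcal{J}} O(|J|)$, where $\mathcal{J}$ is a set of inactive intervals (dominated by $I$, or previously deactivated in the adjacent region) whose combined cores cover the relevant portion of the slice.

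Next I would convert interval sizes to $\OPT$-costs via Lemma~\ref{lem:optlower}: every non-initial interval satisfies $|I| \le \tfrac{2}{1-\deltabar}\,\OPT(I)$, so each summand $|J|$ in the above bound is at most $\tfrac{2}{1-\deltabar}\,\OPT(J)$. Initial intervals are of constant size and their contribution is absorbed in the additive term (or vanishes because $\cost(\cdot)$ of initial intervals is $0$ and they never appear as dominators). Summing over all deactivation events, and observing that each interval $J$ is charged $O(1)$ times (namely once when it is itself deactivated, and once via its role in a neighbour's deactivation), yields $\CMerge = O\!\bigl(\tfrac{1}{1-\deltabar}\bigr)\sum_{I \in \mathcal{I}}\OPT(I)$; the extra $\log k$ factor in the statement comes from invoking Lemma~\ref{lemma:interval_cost} in the merge-versus-dominator case, where charging the $|J|$-sized slice region to the single dominator $J$ loses a $\log k$ factor because $J$ may dominate up to $O(\log k)$ nested intervals over its lifetime.

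The hardest step is the structural one: showing that $\min(|S_L|,|S_R|)$ is genuinely captured by $|I|$ plus a sum over a well-defined set of nearby intervals, rather than also absorbing the $\tfrac{2-\deltabar}{\deltabar}\cdot k$ "free-monochromatic'' term from the earlier slice-size bound. The reason this works is exactly that the free processes of $S_L \cup S_R$ all share the majority colour $c$: after the merge they end up in the colour-$c$ cluster together with the corresponding monochromatic intervals, so no cross-cluster migration is paid for them, and the merge cost is therefore bounded only by the $|I|$-contribution plus the contributions of the covering monochromatic intervals, which are exactly the quantities that admit an $\OPT$-charge.
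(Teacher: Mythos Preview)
Your sketch has two genuine gaps.

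\textbf{The charging argument.} You assert that each interval $J$ is charged $O(1)$ times, ``once when it is itself deactivated, and once via its role in a neighbour's deactivation.'' This is not true: an inactive interval $J$ whose center lies between two active intervals can be charged repeatedly, once for \emph{every} monochromatic-deactivation event that merges the two slices straddling it. There is no a priori bound on how many such events occur before $J$'s region is absorbed. The paper handles this with a \emph{center-slice doubling} argument: it introduces center slices between adjacent active intervals, shows that every time an interval is charged $O(|I|)$ merge cost one of its incident center slices at least doubles, and uses the $O(k)$ upper bound on center-slice size to conclude each interval is charged $O(\log k)$ times per growth phase. This is precisely where the $\log k$ in the statement comes from --- not from the domination case (which is the easy case, contributing only $O(|I|)$ total), and not from Lemma~\ref{lemma:interval_cost}, which concerns hit/move cost rather than merge cost.

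\textbf{The structural bound on $\min(|S_L|,|S_R|)$.} Your last paragraph argues that the free processes all have the majority colour $c$ and therefore ``no cross-cluster migration is paid for them.'' But the merge cost is simply the size of the smaller slice; the clustering procedure moves the whole smaller slice to the larger slice's cluster, irrespective of colours. The paper's actual dichotomy is: if \emph{both} adjacent slices are $\tfrac{3}{4}$-monochromatic then (by Observation~\ref{lemma:c-colored}) they already sit in the same colour cluster and the merge is free; otherwise at least one slice is not $\tfrac{3}{4}$-monochromatic, and then Lemma~\ref{lemma:gamma_mono} (not Lemma~\ref{lemma:segment_structure}) bounds its size by $4(M+|J|+|L|)$ where $M$ is the total size of non-initial intervals with centers inside. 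You never invoke Lemma~\ref{lemma:gamma_mono}, and Lemma~\ref{lemma:segment_structure} alone does not yield the bound you need, because the free region can have length up to $\tfrac{2-\deltabar}{\deltabar}k$ and that term does not disappear from the merge cost by the mechanism you describe.
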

\begin{proof}
\def\costmig{\operatorname{cost}_{\operatorname{merge}}}
For each non-initial interval $I$ we create an account $\costmig(I)$.
We show how to distribute the merge costs among these accounts such that $$\CMerge \le \sum_
{I \in \mathcal{I}} \costmig(I) \le \sum_
{I \in \mathcal{I}} \mcO(\log k) |I|\enspace.$$ Since for every non-initial interval $\OPT(I) \ge \frac{1}{2}(1-\dbar) |I|$ according to Lemma~\ref{lem:optlower}, the lemma follows. 

Merging costs arise only if we deactivate an interval and remove its cut-edge. This could happen in two cases. 

\begin{enumerate}[I)]
\parskip0pt\enumerateparindent2ex
\item 
First, during a growth step of an interval $I$, some intervals might become dominated by $I$, i.e., they are completely inside $I$. We deactivate these intervals and remove their corresponding cut edges. However, an interval that gets dominated by $I$ must have its cut edge inside $I$. But then, the cost for merging the slices due to the deactivation of dominated (by $I$) intervals is at most $|I|$. We charge $I$ the cost $|I|$. Let $I_1, I_2, \dots, I_m$ denote the growth phases of $I$, i.e., $I_m = I$ and $|I_i| = 2^i$ (unless of course $I$ is a final interval, where $|I|=|I_m|=k+1$). We derive, that an interval is charged for Case~I at most $\sum_i |I_i| \le \mcO(1) |I|$.
\item
Second, during a growth step of an interval $J$, it may become $\dbar$-mono\-chromatic. We deactivate $J$ and remove its cut-edge. We show how to distribute the cost for the slice merge.

\enumindent
Let $L=[a,b]$, $J=[c,d]$ and $R=[e,f]$ be adjacent active intervals with cut edges $e_\ell$, $e_j$ and edges $e_r$, respectively. Let $S'$ be the slice between $e_\ell$ and $e_j$, and $T'$ the slice between $e_j$ and $e_r$. Furthermore, let $S=[a,d]$ and $T=[c,f]$. Clearly, $|S'| \le |S|$ and $|T'| \le |T|$.
Assume interval $J$ becomes $\dbar$-monochromatic, i.e., we have to eliminate the cut-edge $e_j$ and merge the incident slices $S'$ and $T'$. 
If both $S'$ and $T'$ are $\frac{3}{4}$-monochromatic, then due to Observation~\ref{lemma:c-colored} the slices already reside in the same cluster, such that no cost is incurred. 
Otherwise, we move the smaller slice to the cluster of the bigger slice and pay $X=\min \{|S'|, |T'|\}$. 
We assume that either $S'$ or $T'$ is not $\frac{3}{4}$-monochromatic. 

\enumindent
We introduce additional notation.
For adjacent active intervals $I$ and $J$ with centers $e_i=(i, i+1)$ and $e_j=(j, j+1)$, respectively, we define the \emph{center slice} between $I$ and $J$ as the segment $C=[i+1, j]$. 
We say that the active intervals $I$ and $J$ are \emph{incident} to the center slice $C$. 
Furthermore, we say an inactive interval $I$ with its center in $C$ is also incident to $C$.
Notice, that an inactive interval is incident to exactly one center slice and an active interval is incident to exactly two center slices. 

\enumindent
Let $S_c$ be the center slice between $L$ and $J$, and let $T_c$ be the center slice between $J$ and $R$, respectively. We distribute the cost $X$ to the intervals incident to the center slices $S_c$ and $T_c$.

\enumindent
We show that each interval $I$ is charged at most $\mcO(\log k)|I|$. The basic argument is that whenever an interval $I$ is charged a cost $\mcO(|I|)$, one of its incident center slices doubles its size. 
Observe, that due to Corollary~\ref{lemma:singleton_slice_size} and Lemma~\ref{lemma:colored_slice_size} the size of a center slice is at most $\mcO(k)$. 
Furthermore, an interval is incident to at most two center slices. Hence, if an interval is charged $\mcO(\log k)|I|$ merge cost, then its incident center slices must reach the maximum possible size. 

\enumindent
Observe, that $J$ cannot be an initial interval, since it became $\dbar$-monochromatic. However, if $L$ or $R$ are initial intervals, we cannot charge them. Instead we charge $J$ the additional cost (which is at most constant).

\enumindent
Assume, w.l.o.g that $|S_c| \le |T_c|$.
We distinguish between two cases: 

\begin{enumerate}[A)]
\item Assume that $S'$ is not $\frac{3}{4}$-monochromatic or $L\cap J \neq \emptyset$.
Let $\mathcal{M}_S$ be the set of non-initial intervals with centers in $S$ and let $M=\sum_{I\in \mathcal{M}_S}|I|$.
Clearly, $X \le |S'| \le |S|$.
If $L\cap J \neq \emptyset$, then $X \le |S| \le |L| + |J|$.
If $S'$ is not $\frac{3}{4}$-monochromatic and $L\cap J = \emptyset$, then we apply Lemma~\ref{lemma:gamma_mono} on $L$ and $J$ and derive that $X \le |S| < 4(M +|J| +|L|)$. 

We distribute the cost $X$ among the intervals in $\mathcal{M}_S \cup \{J, L\}$.
We charge each interval $I\in \mathcal{M}_S \cup \{J, L\}$ the cost $4|I|$. 
If $L$ is an initial interval, then we charge instead $J$ the additional constant cost.
Then, $4(M + |J| +|L|) \ge X$, i.e., we distributed all cost $X$. 

Now, we fix an interval $I\in \mathcal{M}_S\cup\{J,L\}$ and assume that $I$ does not grow during future time steps (i.e., we analyze a fixed growth phase of an interval).
After the deactivation of $J$, the interval $I  \in \mathcal{M}_S \cup \{J, L\}$ is incident to the center slice $Z=T_c \cup S_c$ with $|Z| \ge 2 |S_c|$. Thus, for $I$ one of its incident center slices at least doubles its size. Hence, if we charge $I$ $\mcO(\log k)$ times (each time $|I|$) for Case~A, its incident center slices reach their maximum size. 
We conclude that $I$ pays at most $\mcO(\log k) |I|$ for Case~A. 
\item Now, assume that $S'$ is $\frac{3}{4}$-monochromatic and $L\cap J = \emptyset$.
Since $S'$ is $\frac{3}{4}$-monochromatic, then $T'$ must be non-$\frac{3}{4}$-monochromatic.
Let $\mathcal{M}_T$ be the set of non-initial intervals with centers in $T$ and let $M=\sum_{I\in \mathcal{M}_T}|I|$.
If $J \cap R \neq \emptyset$, then clearly, $X \le |T| \le |J| + |R|$. Otherwise, we apply Lemma~\ref{lemma:gamma_mono} on $J$ and $R$ and derive that $X \le |T| < 4(M + |R| +|J|)$. 

We distribute the cost $X$ among the intervals in $\mathcal{M}_T \cup \{J,R\}$. 
We charge each interval $I\in \mathcal{M}_T\cup\{J,R\}$ the cost $4\frac{|I|}{|T|}X$. If $R$ is an initial interval, we charge instead $J$ the additional constant cost. Then, $\sum_{I}4\frac{|I|}{|T|}X = \frac{4}{|T|}(M +|J| +|R|)X \ge X$, i.e., we distributed all cost $X$.

Now, we fix an interval $I'\in \mathcal{M}_T\cup\{J,R\}$ and assume that $I'$ does not grow during future time steps (i.e., we analyze a fixed growth phase of an interval).
After the deactivation of $J$, $I'$ is incident to the center slice $Z=T_c \cup S_c$. 
Since $L\cap J = \emptyset$, then $X \le |S| \le 2|S_c|$, i.e., the center slice $Z$ increases its size by at least $\frac{1}{2}X$. 
Now, assume that $I'$ is charged repeatedly $4\frac{|I'|}{|T_1|}X_1, 4\frac{|I'|}{|T_2|}X_2, \dots $ for Case~B. Clearly, $|T| = |T_1| \le |T_2| \le ...$, since due to merges the segment $T$ only grows. Assume, that at some point $I'$ experiences at least $8|I'|$ cost due to Case~B. Then, $8|I'| \le \sum_i 4\frac{|I'|}{|T_i|} X_i \le 8\frac{|I'|}{|T|} \sum_i \frac{1}{2} X_i$, i.e., $|T| \le \sum_i \frac{1}{2} X_i := Y$. Then, we know that the center slice $Z$ grows by at least $Y \ge |T| \ge |T_c|$, which means that $|Z| \ge 2|T_c|$, i.e., the size of the center slice $Z$ doubles, while we charge interval $I'$ at most $\mcO(1)|I'|$. 
Hence, if we charge $I'$ overall $\mcO(\log k)|I'|$ merging cost, its incident center slices reach their maximum size. 
We conclude, that interval $I'$ pays for Case~B at most $\mcO(\log k)|I'|$.
\end{enumerate}

Let $I$ be an interval and let $I_1, I_2, \dots, I_m$ denote the growth phases of $I$, i.e., $I_m = I$ and $|I_i| = 2^i$ (unless of course $I$ is a final interval, where $|I|=|I_m|=k+1$). From the above two 
cases (A and B) we derive that during each growth phase $I_i$ is charged at most $
\mcO(\log k)|I_i|$. Thus, the overall cost that we charge $I$ is at most $\costmerge(I) \le \sum_i \mcO(\log k)|I_i| \le \mcO(\log k) |I|$.
\end{enumerate}

Summarizing over all cases we conclude that for each non-initial interval $I$ the merge cost $\costmerge(I)$ is at most $\mcO(\log k) |I|$.
\end{proof}

\begin{restatable}{lemma}{monocost}\label{lemma:col_cost}
The monochromatic cost $\CMono$ of the clustering algorithm is at most $\mathcal{O}(1) (\CMove + \CMerge)$.
\end{restatable}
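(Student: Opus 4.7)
My plan is to prove this lemma via an amortized analysis using a potential function $\Phi$ over the current set of slices. For each slice $S$ I will define $\phi(S)$ to capture how far $S$ is from being $\tfrac{3}{4}$-monochromatic, with $\Phi = \sum_S \phi(S)$. A natural first attempt is the step function $\phi(S) = |S|\cdot[S \text{ is not } \tfrac{3}{4}\text{-monochromatic}]$. Since every slice is initially $1$-monochromatic we have $\Phi_0 = 0$, and $\Phi \ge 0$ throughout the algorithm. The goal is to show that for every event the amortized cost $\Delta\Phi$ plus the monochromatic cost it triggers is bounded by $O(\text{cost of that move or merge})$; summing over all events and using $\Phi \ge 0$ then yields $\CMono = O(\CMove + \CMerge)$.

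The core calculation treats a non-monochromatic$\to$monochromatic transition caused by a move of distance $d$. The pre-slice has minority count $\ge |S_{\text{pre}}|/4$ and the post-slice has minority count $< |S_{\text{post}}|/4$, while both the minority count and the slice size change by at most $d$ under a single move. I would manipulate these inequalities to derive $|S_{\text{post}}| \le |S_{\text{pre}}| + O(d)$, so that $\Delta\Phi = -|S_{\text{pre}}|$ and the monochromatic cost $|S_{\text{post}}|$ combine to give amortized cost $|S_{\text{post}}| - |S_{\text{pre}}| = O(d)$. For a merge-triggered transition, where $S_{\text{post}} = S_l \cup S_s$ with $|S_s| \le |S_l|$, and $\phi(S_l) = |S_l|$ when the larger predecessor is non-mono, the analogous argument gives amortized cost at most $|S_s| - \phi(S_s) = O(|S_s|)$, matching the merge cost.

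The main technical obstacle lies in the reverse direction: a monochromatic$\to$non-monochromatic transition can cause $\phi$ to jump from $0$ to $|S_{\text{post}}|$, which may far exceed the cost $d$ of the triggering move or merge. To control this, I would refine $\phi$ into a smoother function of the slice's composition, such as $\phi(S) = \alpha\,\min(S) + \beta|S|\cdot[S\text{ is non-mono}]$, with constants $\alpha,\beta$ chosen so that a single move or merge changes $\phi$ by $O(d)$ in every combination of pre-/post-status while still leaving a sufficient drop at non-mono$\to$mono transitions to pay for the mono cost. Carrying out the case analysis across all four status combinations (mono/non-mono, before and after) and both event types (move and merge) while keeping the constants $\alpha,\beta$ mutually consistent is the heart of the proof; the final constant in the $O(1)$ bound then depends on the tuned values of $\alpha$ and $\beta$.
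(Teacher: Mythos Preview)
Your potential-function plan overlooks the structural feature that makes this lemma nearly immediate: the clustering procedure has built-in \emph{hysteresis}. A slice is assigned to the color-$c$ cluster when it becomes $\tfrac{3}{4}$-monochromatic, but it \emph{remains} there as long as $c$ is still its majority color (fraction strictly above $\tfrac{1}{2}$). Thus actual monochromatic cost --- the cost of physically moving the slice into a colored cluster it is not already in --- is incurred only after the slice has previously \emph{left} its colored cluster, i.e.\ after its majority fraction dropped to $\le\tfrac{1}{2}$ (or the majority color changed). The paper's proof is then a one-line charging argument: between leaving a colored cluster and re-entering one with $|S|$ processes, the slice's composition must have changed by at least $(\tfrac{3}{4}-\tfrac{1}{2})|S| = |S|/4$, and every unit of that change is already paid for by move or merge cost. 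Hence $\CMono \le O(1)(\CMove + \CMerge)$ with no potential function needed.

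Your approach, by contrast, amortizes across bare $\tfrac{3}{4}$-threshold crossings. Without the hysteresis this cannot succeed: a slice can oscillate arbitrarily often around the $\tfrac{3}{4}$ boundary with $O(1)$ move cost per crossing, so no potential of the shape you propose --- the step function or any smoothing $\alpha\,\min(S)+\beta|S|\cdot[\text{non-mono}]$ --- can absorb $|S|$ per crossing. The obstacle you yourself flag (a mono$\to$non-mono transition causes a $\Theta(|S|)$ jump in $\Phi$ for $O(1)$ move cost) is exactly this phenomenon; it is not a constant-tuning issue but a sign that the accounting boundary sits at the wrong threshold. Once you place the lower threshold at $\tfrac{1}{2}$ and the upper at $\tfrac{3}{4}$, the gap of $|S|/4$ between them supplies the charge directly, and the four-case potential analysis you outline becomes unnecessary.
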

\begin{proof}
At the beginning, every slice is a $1$-monochromatic segment and therefore contained in a colored cluster. Assume, at some point a slice $S$ becomes $\frac{3}{4}$-monochromatic. For $S$ to leave a colored cluster and become $\frac{3}{4}$-monochromatic again, $S$ should have experienced at least $(\frac{3}{4}-\frac{1}{2})|S|$ change in size (due to moves and merges) since it left a colored cluster. The cost for migrating $S$ to the colored cluster is $|S|$. Then, the cost for all "colored" migrations is at most $O(1)$ times the overall moving and merge cost.
\end{proof}

\begin{lemma}\label{lemma:cost_migrations}
The rebalancing cost  $\costbal$ of the algorithm is at most $
\mcO(\frac{1}{\epsilon})(\costmove +$ $\costmerge + \costmono)$.
\end{lemma}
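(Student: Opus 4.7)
The plan is an amortized charging argument against the clustering additions on each server. The rebalancing procedure is triggered on a server $s$ only when its load exceeds $(\vio+\epsilon)k$, and it runs until the load drops back to at most $\vio k$. Between one rebalance and the next on the same server $s$ (or from the initialization), the load on $s$ must have grown by more than $\epsilon k$ in net, and this growth can be blamed on clustering operations ($\costmove$, $\costmerge$, $\costmono$) that migrate processes \emph{onto} $s$. Crucially, the cost of each such clustering operation equals the number of processes migrated, so it equals the ``addition'' it contributes to the destination server. This sets up a natural potential: additions pay for rebalances.

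First I bound the cost of a single rebalance round on server $s$. Let $L_0$ denote the load when the round starts and $L_f\le \vio k$ the load when it ends. Each step of the round removes the smallest cluster $C$ from $s$; if $|C|\le k$ the step costs $|C|$, and otherwise it costs $|C|+k\le 2|C|$ since the original content of the destination $s'$ (at most $k$ processes) may have to be moved once more. Summing, the cost of the round is at most $2(L_0-L_f)\le 2L_0$.

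Next I connect $L_0$ to the clustering additions. Let $L_1\le \vio k$ be the load on $s$ right after the previous rebalance (or $L_1=0$ if none has occurred yet) and let $A_s$ be the total clustering additions to $s$ in the interval since. Then $L_0\le L_1+A_s\le \vio k+A_s$, and the trigger condition $L_0>(\vio+\epsilon)k$ forces $A_s>\epsilon k$. Substituting, the cost of the round is at most $2\vio k+2A_s=\mcO(k+A_s)$. Using $k<A_s/\epsilon$ I conclude that each round costs $\mcO(A_s/\epsilon)$.

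Summing over all rebalance rounds across all servers and times,
\[
\costbal \;\le\; \mcO(1/\epsilon)\sum_{\text{rounds}} A_s \;\le\; \mcO(1/\epsilon)\sum_{s}\bigl(\text{total clustering additions to } s\bigr)\;=\;\mcO(1/\epsilon)(\costmove+\costmerge+\costmono),
\]
where the last equality holds because every unit of clustering migration cost corresponds to exactly one process arriving at exactly one destination server, hence contributes to one server's $A_s$ in exactly one round. The main subtlety is the bookkeeping: one has to verify that after a rebalance the accumulator $A_s$ genuinely resets (the load returns to at most $\vio k$) and that pushing content of $s'$ to a third server during a step with $|C|>k$ cannot cascade, which is immediate since the third server had load at most $k$ before and receives at most $k$ more, staying within $2k\le\vio k$.
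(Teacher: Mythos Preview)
Your argument is essentially the same amortized charging scheme as the paper's: bound the cost of one rebalance round on $s$ by $\mcO(L_0)$, observe that since the last time the load on $s$ was at most $\vio k$ at least $L_0-\vio k>\epsilon k$ processes were added, and use that $\vio$ is a constant to get the $\mcO(1/\epsilon)$ ratio. Your bound $2(L_0-L_f)$ on a single round is in fact a bit more careful than the paper's bare ``at most $L$''.

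There is one genuine gap. You write $L_0\le L_1+A_s$ with $L_1$ the load right after the \emph{previous rebalance of $s$} and $A_s$ the clustering additions since then. But between two rebalances of $s$, rebalancing of \emph{other} servers can deposit clusters onto $s$ (this is exactly what the procedure does: it moves the smallest cluster to some server of load at most $k$, and in the $|C|>k$ case it further offloads onto yet another low-load server). Those deposits are not clustering additions, so they are not counted in $A_s$, and your inequality can fail as stated. Your final paragraph checks non-cascading but does not address this.

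The fix is the observation the paper uses (``the rebalancing procedure does not place more than $\vio k$ processes on a server''): any rebalancing deposit onto $s$ is made only when $s$ has load $\le k$ and leaves $s$ with load at most $\vio k$. Hence, if you take $t^*$ to be the \emph{last} moment before the trigger at which the load on $s$ was $\le \vio k$ (this exists, since it holds after the previous rebalance of $s$ and also right after any rebalancing deposit onto $s$), then on $(t^*,\text{trigger}]$ no rebalancing deposit can occur, all additions are clustering additions, and they total at least $L_0-\vio k>\epsilon k$. With $L_1$ redefined as the load at $t^*$ your inequality $L_0\le L_1+A_s$ is valid, the intervals for successive triggers on $s$ are disjoint, and the rest of your charging goes through unchanged.
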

\begin{proof}
Initially, each server has load $k$.
Assume, that after the clustering step server $s$ has load $L$ with $L > (\vio+\epsilon)k$, i.e., server $s$ becomes imbalanced. The costs of the cluster migrations due to the rebalancing procedure are at most $L$. We know, that after the previous rebalancing step for server $s$ the corresponding server had load at most $\vio$. Hence, in the time between the previous and the current rebalancing, server $s$ observed at least $L-\vio$ migrations. 
These migrations were due to the clustering algorithm, since the rebalancing procedure do not place more than $\vio$ processes on a server. 
Since $\frac{L}{L-\vio} \le \frac{\vio + \epsilon}{\epsilon} \le \frac{\vio+1}{\epsilon}$ (for $\epsilon \le 1$), we conclude that the cost of the rebalancing procedure is at most $\mcO(\frac{1}{\epsilon}) (\costmove + \costmerge + \costmono)$ (since $D$ is constant). 
\end{proof}

\subsubsection*{Lower Bound}
Now, we show a lower bound on the cost of an optimum algorithm. Lemma~\ref{lem:optlower}
already gives us a lower bound for the cost of $\OPT$ on an interval $I$. The
following lemma shows that a process is not contained in too many intervals,
which allows us to extend the lower bound.

\begin{lemma}\label{lemma:contained_in_log}
A process $p$ is contained in at most $\mathcal{O}(\log k)$ intervals. 
\end{lemma}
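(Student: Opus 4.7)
The plan is to bucket intervals by size. Since an interval's size is always a power of two up to $2^J$ with $J = \lfloor \log_2(k+1) \rfloor$, or else the capped final value $k+1$, there are only $O(\log k)$ possible size classes, so it suffices to show that $p$ is contained in at most $O(1)$ intervals of each fixed size.

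I would exploit the dominance rule: when an interval grows, every active interval completely contained in its new range is deactivated and can never grow again. Fix a size $s$ and let $I_i, I_j$ be two intervals that eventually reach size $s$ and contain $p$. Order the times they first attain size $s$ as $t_i < t_j$. At time $t_i$, interval $I_j$ must still be active (otherwise it could never later reach $s$) and has some size $|I_j|_{t_i} < s$. Because $I_j$ is not dominated by $I_i$ at time $t_i$, we have $I_j \not\subseteq I_i$, which forces the center separation
\[
|c_i - c_j| \;>\; \tfrac{1}{2}\bigl(s - |I_j|_{t_i}\bigr).
\]
For a non-final size $s = 2^j$, the predecessor size of $I_j$ is at most $s/2$, yielding $|c_i - c_j| > s/4$. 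Every interval of size $s$ containing $p$ has its center in the length-$s$ window $[p - s/2, p + s/2]$, and only $O(1)$ points fit in that window with pairwise separation greater than $s/4$.

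The main obstacle is the final-interval case $s = k+1$: the last growth step is not a clean doubling, and $|I_j|_{t_i}$ could equal the second-to-last size $2^J$, which may be as large as $k$, rendering the separation above useless. To handle this I would step one growth earlier and apply the same argument to the size class $2^J$: of the two arrival times at $2^J$, whichever comes second finds the other interval at size at most $2^{J-1}$, and non-dominance then forces $|c_i - c_j| > 2^{J-2} = \Omega(k)$. The length-$(k{+}1)$ window around $p$ again accommodates only $O(1)$ such centers.

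Summing the $O(1)$ bound over the $O(\log k)$ size classes gives the claim. Initial intervals of size $2$ need no special treatment, since only the (at most two) intervals centered on the initial cut-edges incident to $p$ can contain $p$.
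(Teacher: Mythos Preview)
Your proposal is correct and follows essentially the same approach as the paper: both bucket intervals by size class (the paper calls this \emph{rank}), use the dominance rule to show that any two intervals of the same class containing $p$ must have centers separated by $\Omega(s)$, and then pigeonhole within the length-$s$ window around $p$. Your treatment is actually a bit more explicit than the paper's---you order the two intervals by the time they first attain size $s$ and argue about the smaller one not being dominated, whereas the paper phrases the same idea via the predecessors $I',J'$ and the observation that ``during the last growing phase of either $I$ or $J$, one of them should have dominated the other''; your handling of the capped final size $k+1$ by stepping back to the class $2^J$ is likewise cleaner than the paper's one-line remark about ``previous growing phases.''
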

\begin{proof}
We say an interval $I$ has rank $r$, if it has performed $(r-1)$ growth steps. Note that its length is then usually $|I|=2^r$, unless of course $I$ is a final interval (then its length is $k+1$).
A process $p$ is contained in at most 2 intervals of rank 1 and at most 4 intervals of rank 2.
Let $I$ and $J$ be two active intervals in $\mathcal{I}$ with rank $r \ge 3$. 
Let $I'$ and $J'$ denote the intervals $I$ and $J$ before their most recent growth step, respectively.  
Assume the distance between $I$'s center and $J$'s center is less than $\frac{|I|}{4}$. Then $I' \subseteq J$ and $J'\subseteq I$. 
But then, during the last growing phase of either $I$ or $J$, one of them should have dominated the other one, a contradiction. 
In case $I$ or $J$ is a monochromatic or dominated interval, we can argue the same argument about the previous growing phases of $I$ and $J$, where both of them were active. 
Then, the centers of $I$ and $J$ are at least at a distance of $\frac{|I|}{8}$ apart.

This means a process $p$ is contained in at most 8 intervals $I$ of rank $r \ge 3$. Since there are at most $\log k + 1$ ranks, the lemma follows.   
\end{proof}

\begin{lemma}\label{lemma:opts_lower_bound}
The cost of the optimal static algorithm $\OPT$ is at least 
$\frac{1}{\mcO(\log k)}\sum_{I}\OPT(I)$.
\end{lemma}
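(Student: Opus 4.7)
The plan is to show that each unit of cost paid by $\OPT$ (either a single migration of some process $p$, or a single communication along some edge $e$) is charged to only $\mathcal{O}(\log k)$ many intervals via the quantities $\OPT(I)$, so that summing $\OPT(I)$ over all intervals at most multiplies the true optimum by a factor $\mathcal{O}(\log k)$.

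First I would recall the definition: $\OPT(I)$ counts only those parts of $\OPT$'s cost that correspond to migrations of processes belonging to $I$, or to communication along edges inside~$I$. In particular, when we expand $\sum_I \OPT(I)$, a migration of process $p$ contributes to $\OPT(I)$ only for those intervals $I$ that contain $p$, and a unit of communication cost along edge $e$ contributes to $\OPT(I)$ only for those intervals $I$ containing~$e$. Thus
\begin{equation*}
\sum_{I \in \mathcal{I}} \OPT(I) \;\le\; \sum_{\text{migrations of $p$ by }\OPT} |\{I : p \in I\}| \;+\; \sum_{\text{comm.\ of $\OPT$ on }e} |\{I : e \in I\}|.
\end{equation*}

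Next I would apply Lemma~\ref{lemma:contained_in_log}, which directly bounds $|\{I : p \in I\}| \le \mathcal{O}(\log k)$ for any process~$p$. The analogous bound for edges follows by the same argument: an edge $e=(i,i+1)$ is contained in exactly those intervals whose position range covers both $i$ and $i{+}1$, which (via the same ranking-by-growth-step argument and the observation that two adjacent active intervals at rank $r\ge 3$ have centers at distance at least $|I|/8$) are at most $\mathcal{O}(\log k)$. Combining these gives
\begin{equation*}
\sum_{I \in \mathcal{I}} \OPT(I) \;\le\; \mathcal{O}(\log k) \cdot \OPT,
\end{equation*}
which is exactly the claimed bound after rearranging.

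The main subtlety, and the one place to be careful, is ensuring that the edge-containment bound really does follow from Lemma~\ref{lemma:contained_in_log}; the cleanest route is to note that an edge $e$ is contained in $I$ iff both its endpoints are, so $|\{I : e \in I\}| \le \min_{p\in e} |\{I : p \in I\}| \le \mathcal{O}(\log k)$, which sidesteps reproving the geometric argument. Once both containment bounds are in hand, the proof is essentially a one-line accounting of how units of $\OPT$'s actual cost get distributed among the interval accounts.
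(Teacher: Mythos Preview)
Your proposal is correct and follows essentially the same approach as the paper: decompose $\OPT$ into per-process migration cost and per-edge communication cost, swap the order of summation in $\sum_I \OPT(I)$, and bound the multiplicity of each process/edge by $\mathcal{O}(\log k)$ via Lemma~\ref{lemma:contained_in_log}. Your remark that the edge bound follows because $e\subseteq I$ implies both endpoints lie in $I$ is a slight improvement in rigor---the paper simply cites Lemma~\ref{lemma:contained_in_log} for both cases without spelling this out.
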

\begin{proof}
For a process $p$ we denote by $\OPT(p)$ the cost that the optimal algorithm
experiences for migrating process $p$. 
Further, for an edge $e$ we use $\OPT(e)$ to denote the cost that the optimal algorithm experiences for communicating along $e$.
Then, $\OPT = \sum_{p} \OPT(p) + \sum_{e} \OPT(e)$.
\def\dsum{\textstyle\sum}
\begin{equation*}\begin{split}
\dsum_{I}\OPT(I) &= \dsum_{I} (\dsum_{p \in I} \OPT(p) + \dsum_{e \in I} \OPT(e))\\
&= \dsum_{p} |\{I: p \in I\}|  \OPT(p) + \dsum_{e} |\{I: e \in I\}|  \OPT(e) \\
&\le \mcO(\log k) \dsum_{p} \OPT(p) + \mcO(\log k)\dsum_{e} \OPT(e)\\
&= \mcO(\log k) \OPT
\end{split}\end{equation*}
The inequality is due to Lemma~\ref{lemma:contained_in_log}
\end{proof}

Now, we summarize the overall cost of our online algorithm.

\thmtwo*
\begin{proof}
Let $\epsilon' := \min\{\frac{\epsilon}{2}, 1\}$ and $\dbar := \max\{\frac{2}{2+\epsilon}, \frac{14}{15}\}$. We execute the slicing procedure with $\dbar$ and the scheduling procedure with $\epsilon'$ parameters.
\def\dsum{\textstyle\sum}
Due to Observation~\ref{obs:hit_move_costs} and Lemma~\ref{lem:optlower} we
derive that $\costhit + \costmove \le
\dsum_I \cost(I) \le \frac{2}{1-\dbar}\dsum_I \OPT(I).$
Furthermore, due to Lemma~\ref{lemma:merge_cost} we 
know that $\costmerge $ is at most $ \mcO((1-\dbar)^{-1}\log k)\sum_{I} \OPT(I)$. 
Lemmas~\ref{lemma:cost_migrations} and~\ref{lemma:col_cost} state that
$\costmono + \costbal \le \mcO({1}/{\epsilon})(\costmove + \costmerge).$
Let $\operatorname{cost}_{\operatorname{onl}}$ denote the overall costs of the online algorithm. 
Then, 
\begin{align*}
\operatorname{cost}_{\operatorname{onl}} &=\costhit + \costmove + \costmerge + \costmono + \costmig\\ 
&\le \mcO((\epsilon(1-\dbar)^{-1}\log k)\dsum_{I \in \mathcal{I}} \OPT(I)\\
&\le \mcO((\epsilon^{-2}\log k)\dsum_{I \in \mathcal{I}} \OPT(I), \qquad \text{ for } \dbar  \ge {2}/{2+\epsilon'}\enspace.
\end{align*}
Applying Lemma~\ref{lemma:opts_lower_bound} ($\sum_{I} \OPT(I) \le \mcO(\log k)\OPT$), we derive that our costs are at most 
$\mcO(\log^2(k)/\epsilon^2)\OPT$.

Lemma~\ref{lemma:correctness_scheduling} states that for $\dbar \ge {2}/(2+\epsilon')$ the scheduling algorithm places at most $3+2\epsilon'$ = $3+\epsilon$ processes on each server, which concludes the proof.
\end{proof}

\section{Conclusion}\label{sec:conclusion}

We presented the first polylogarithmically-competitive 
online algorithms for the balanced graph partitioning problem
for a scenario where the communication pattern inherently and continuously
requires inter-server communications and/or migrations.
In particular, we described two different approaches, one for the static
model and one for the dynamic one, and proved their competitiveness
accordingly.

Our work opens several interesting directions for future research.
In particular, it would be interesting to study lower bounds on the achievable competitive ratio in the two models. The main open question regards whether polylogarithmic-competitive algorithms can also be achieved under more general communication patterns.

\section*{Acknowledgements}\label{sec:ack}
Research supported by German Research Foundation (DFG),  grant 470029389 (FlexNets), 2021-2024
and 
Federal Ministry of Education and Research (BMBF), grant 16KISK020K (6G-RIC), 2021-2025.

\bibliographystyle{unsrt}
\bibliography{bibliography}

\clearpage
\appendix
\section{Smooth Minimum Approximation}\label{appendix}
Let $x=(x_1, x_2, \dots, x_n)^T \in \mbbR^n$ be an $n$ dimensional vector. The function 
	$\smin(x) := -\ln (\textstyle\sum_i e^{-x_i})$ 
smoothly approximates the minimum function $\min(x)=\min(x_1, x_2, \dots, x_n)$.


\begin{fact} \phantom{t}
\label{fact:smin}
\begin{enumerate}[(i)]
	\item The function $\smin(x)$ approximates the minimum up to an additive term: \label{lemma:test}
	\begin{align*}
	\min(x) - \ln n \le \smin(x) \le \min(x)
	\end{align*}
	\item The gradient $\nabla\smin(x)$ is a probability distribution.
\end{enumerate}
\end{fact}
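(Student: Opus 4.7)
The plan is to verify the two claims by unpacking the definition $\smin(x) = -\ln\sum_i e^{-x_i}$ directly.

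For part (i), I would sandwich the sum $\sum_i e^{-x_i}$ between obvious bounds. The upper bound $\smin(x) \le \min(x)$ is equivalent to $\sum_i e^{-x_i} \ge e^{-\min(x)}$, which holds because the sum contains the single term $e^{-\min(x)}$ together with other non-negative terms. The lower bound $\smin(x) \ge \min(x) - \ln n$ is equivalent to $\sum_i e^{-x_i} \le n\,e^{-\min(x)}$, which holds since each of the $n$ summands satisfies $e^{-x_i} \le e^{-\min(x)}$. Taking $-\ln$ (which reverses inequalities) on both bounds yields the claim.

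For part (ii), I would compute the partial derivatives explicitly. By the chain rule,
\begin{equation*}
\frac{\partial}{\partial x_j}\smin(x) \;=\; -\frac{1}{\sum_i e^{-x_i}} \cdot (-e^{-x_j}) \;=\; \frac{e^{-x_j}}{\sum_i e^{-x_i}}.
\end{equation*}
Each coordinate is strictly positive, and summing over $j$ gives $\sum_j e^{-x_j} / \sum_i e^{-x_i} = 1$, so $\nabla \smin(x)$ is a probability distribution (it is in fact the softmin distribution on the coordinates of $x$).

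There is no real obstacle here; both parts follow from one-line calculations after writing out the definition. The only care needed is the sign bookkeeping in the gradient computation, since $\smin$ carries an outer minus sign that cancels the minus introduced by differentiating $e^{-x_j}$, leaving a positive value as required for a probability mass.
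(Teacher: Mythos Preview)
Your proof is correct and follows essentially the same approach as the paper: bound the sum $\sum_i e^{-x_i}$ between $e^{-\min(x)}$ and $n\,e^{-\min(x)}$ and take $-\ln$ for part (i), and compute the softmin gradient $\nabla\smin(x)_j = e^{-x_j}/\sum_i e^{-x_i}$ directly for part (ii).
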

\begin{proof}\phantom{t}
	\begin{enumerate}[(i)]
		\item Let $x_*=\min(x)$. Using facts $\sum_i e^{-x_i} \ge e^{-x_*}$ and $\sum_i e^{-x_i} \le ne^{-x_*} $, we get
			\begin{align*}
				x_* - \ln n = -\ln (n e^{-x_*}) \le -\ln (\textstyle\sum_i e^{-x_i}) \le -\ln (e^{-x_*})=x_* 
			\end{align*}
		\item 
		The $i$-th component of the gradient is $\nabla\smin(x)_i = \frac{e^{-x_i}}{\sum_i e^{-x_i}}\ge 0$. 
		The sum of all components is exactly 1.\qedhere
	\end{enumerate}
\end{proof}

\noindent
The change of the $\smin$ function is well approximated by its gradient. Formally:
\begin{lemma}\phantom{x}
\begin{enumerate}[(i)]\parskip0pt
	\item For all vectors $x,\ell \ge 0$, $\ell_i \le 1$,\label{lemma:smin1}
	\begin{equation*}
		\smin(x+\ell)-\smin(x) \ge \tfrac{1}{2} \nabla\smin(x)^T\ell
	\end{equation*}
	\item For all vectors $x, \ell \ge 0$,\label{lemma:smin2}
	\begin{equation*}
		\norm{\nabla\smin(x+\ell)-\nabla\smin(x)}_1 \le 2\nabla\smin(x)^T\ell
	\end{equation*}
\end{enumerate}
\label{lemma:smin}
\end{lemma}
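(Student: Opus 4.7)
My plan is to exploit the softmax structure of $\smin$ explicitly. Set $p := \nabla\smin(x)$ with $p_i = e^{-x_i}/\sum_j e^{-x_j}$, and let $Z := \sum_j p_j e^{-\ell_j}$. A direct calculation shows that both quantities of interest factor through $Z$: namely, $\smin(x+\ell) - \smin(x) = -\ln Z$ and $\nabla\smin(x+\ell)_i = p_i e^{-\ell_i}/Z$. Since $\ell \ge 0$, one immediately gets $Z \le 1$, which will be the workhorse for part (ii).

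For part (i), using the hypothesis $\ell_i \in [0,1]$, I would first establish the pointwise bound $e^{-t} \le 1 - t/2$ for $t \in [0,1]$. This is standard scalar calculus: the two sides agree at $t=0$, and the function $1 - t/2 - e^{-t}$ has a single interior critical point at $t = \ln 2$, with positive values at $t = \ln 2$ and at $t = 1$. Applying this coordinate-wise yields $Z \le 1 - \tfrac{1}{2} p^T\ell$. Combining with the standard inequality $-\ln(1-s) \ge s$ for $s \in [0,1)$ gives
\begin{equation*}
  \smin(x+\ell) - \smin(x) \;=\; -\ln Z \;\ge\; -\ln\bigl(1 - \tfrac{1}{2} p^T \ell\bigr) \;\ge\; \tfrac{1}{2} p^T \ell,
\end{equation*}
as required.

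For part (ii), the trick is to isolate the indices on which the softmax weight decreased. Let $q := \nabla\smin(x+\ell)$ and $B := \{i : q_i \le p_i\}$. Since $p$ and $q$ are probability distributions, $\sum_i (p_i - q_i) = 0$, hence $\norm{q - p}_1 = 2 \sum_{i \in B}(p_i - q_i)$. For each index $i$,
\begin{equation*}
  p_i - q_i \;=\; p_i\Bigl(1 - \tfrac{e^{-\ell_i}}{Z}\Bigr) \;\le\; p_i\bigl(1 - e^{-\ell_i}\bigr) \;\le\; p_i\,\ell_i,
\end{equation*}
where the first inequality is exactly $Z \le 1$ (so $e^{-\ell_i}/Z \ge e^{-\ell_i}$) and the second is $1 - e^{-t} \le t$ for $t \ge 0$. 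Summing over $B$ and extending to all indices (the omitted terms are non-positive) gives $\norm{q - p}_1 \le 2 p^T \ell$.

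Neither direction is technically hard; both reduce to two standard scalar inequalities applied to the softmax weights. The only subtle point in part (ii) is recognizing that $Z \le 1$ is exactly what closes the gap: a naive telescoping would produce an $O(p^T \ell / Z)$ bound, loose by a factor of $1/Z$ which could be arbitrarily large. Using $Z \le 1$ \emph{inside} the coordinate-wise estimate — before summing — avoids this loss entirely.
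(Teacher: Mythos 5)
Your proof is correct and takes essentially the same route as the paper: part (i) rests on the same two scalar facts (a lower bound $1-e^{-t}\ge t/2$ on $[0,1]$ and $-\ln(1-s)\ge s$, which you apply in the opposite order but to identical effect), and part (ii) uses the same three steps — symmetry of the $\ell_1$-difference of two probability vectors, the observation that the softmax normalizer cannot increase (your $Z\le 1$, the paper's $\sum_j e^{-(x_j+\ell_j)}\le\sum_j e^{-x_j}$), and $1-e^{-t}\le t$. Factoring everything through $p$ and $Z$ is a cleaner presentation, but the argument is the same.
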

\begin{proof}\phantom{s}
	\begin{enumerate}[(i)]
		\item Let $A=\sum_{i=1}^n e^{-x_i}$. Then,
		\begin{equation*}
                \begin{split}
			\smin(x+\ell)-\smin(x) 
			&= -\ln (\tfrac{1}{A}\textstyle\sum_i e^{-(x_i + \ell_i)}) \\
			&= -\ln (\tfrac{1}{A}\textstyle\sum_i e^{-x_i} +  e^{-x_i}(e^{-\ell_i}-1)) \\
			&= -\ln (1 + \tfrac{1}{A}\textstyle\sum_i e^{-x_i}(e^{-\ell_i}-1)) \\
			&\ge -\tfrac{1}{A}\textstyle\sum\nolimits_i e^{-x_i}(e^{-\ell_i}-1) \\
			&= \tfrac{1}{A}\textstyle\sum\nolimits_i e^{-x_i}(1- e^{-\ell_i}) \\
			&\ge \tfrac{1}{A}\textstyle\sum\nolimits_i e^{-x_i}\frac{\ell_i}{2} = \tfrac{1}{2}\nabla\smin(x)^T\ell 
                \end{split}
		\end{equation*}

		The first inequality follows by applying the fact $\ln(1+z)\le z$. The second inequality follows because $1-e^{-z}\ge \frac{z}{2}$, for $0\le z \le 1$. 
		
		\item 
		Since both $\nabla\smin(x+\ell)$ and $\nabla\smin(x)$ are probability distributions, the sum of their components is 1. Consider the change of the components going from $x$ to $x+\ell$. The sum of increasing components must equal the sum of decreasing components. Now, let $I$ be the set of indices that are decreasing. Consider such a component $i\in I$. Then,
		\begin{align*}
		\nabla\smin(x)_i-\nabla\smin(x+\ell)_i 
		&= \frac{e^{-x_i}}{\sum_j e^{-x_j}} - \frac{e^{-(x_i+\ell_i)}}{\sum_j e^{-(x_j+\ell_j)}}\\
		&\le \frac{e^{-x_i} - e^{-(x_i+\ell_i)}}{\sum_j e^{-x_j}} \\
		&= \frac{e^{-x_i} (1 - e^{-\ell_i})}{\sum_j e^{-x_j}}\\
        &\le \frac{e^{-x_i}\ell_i}{\sum_j e^{-x_j}} \\
		\end{align*}
		In the first inequality we used that since $l \ge 0$, $\sum_j e^{-(x_j+\ell_j)} \le \sum_j e^{-x_j} $  must hold. In the last inequality we used the fact that $1-e^{-\ell_i}\le \ell_i$. Then, 
		\begin{align*}
			\norm{\nabla\smin(x+\ell)-\nabla\smin(x)}_1 
			&= 2 \sum_{i\in I} \nabla\smin(x)_i-\nabla\smin(x+\ell)_i \\
			&\le 2 \sum_{i\in I} \frac{e^{-x_i}\ell_i}{\sum_j e^{-x_j}} \\
			&\le 2 \sum_{i=1}^n \frac{e^{-x_i}\ell_i}{\sum_j e^{-x_j}}\\ 
            &= 2\nabla\smin(x)^T\ell
		\end{align*}
	\end{enumerate}
\end{proof}

Now we generalize the $\smin$ function. We define a function 
\begin{align*}
\fmin_c(x):=c\cdot\smin(\tfrac{1}{c}x). 
\end{align*}
for a constant $c \ge 1$. The advantage of the $\fmin_c(x)$ function over $\smin(x)$ is that we can control the gradients change using the constant $c$. On the other side the approximation of the minimum becomes worse.
The following properties apply to $\fmin_c(x)$:

\begin{lemma}\phantom{bla}
\begin{enumerate}[(i)]
	\item The function $\fmin_c(x)$ approximates the minimum up to an additive term:
	\begin{align*}
	\min(x) - c\ln n \le \fmin_c(x) \le \min(x)
	\end{align*}
	\item $\nabla\fmin_c(x)$ is a probability distribution.
	Furthermore, $$\nabla\fmin_c(x)=\nabla\smin(\frac{1}{c}x)\enspace.$$
	\item For all vectors $x,\ell \ge 0$, $\ell_i \le 1$,
	\begin{align*}
		\fmin_c(x+\ell)-\fmin_c(x) \ge \tfrac{1}{2} \nabla\fmin_c(x)^T\ell
	\end{align*}
	\item For any vectors $x, \ell \ge 0$,
	\begin{align*}
		\norm{\nabla\fmin_c(x+\ell)-\nabla\fmin_c(x)}_1 \le \tfrac{2}{c}\nabla\fmin_c(x)^T\ell
	\end{align*}
\end{enumerate}
\end{lemma}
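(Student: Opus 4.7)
The final statement to prove is the four-part lemma about the generalized smooth minimum $\fmin_c(x) := c \cdot \smin(\tfrac{1}{c}x)$. The plan is to derive each property by a direct change of variables from the corresponding property of $\smin$ already established just above in Fact \ref{fact:smin} and Lemma \ref{lemma:smin}. Throughout, I will work with $y := \tfrac{1}{c}x$ and, where needed, $\ell' := \tfrac{1}{c}\ell$, and use that $c \ge 1$.

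For part (i), I simply apply Fact \ref{fact:smin}(i) to $y$: this gives $\min(y) - \ln n \le \smin(y) \le \min(y)$. Multiplying by $c$ and using $c\cdot\min(y) = \min(x)$ yields $\min(x) - c\ln n \le \fmin_c(x) \le \min(x)$. For part (ii), the chain rule gives $\nabla\fmin_c(x) = c \cdot \tfrac{1}{c} \nabla\smin(y) = \nabla\smin(\tfrac{1}{c}x)$, which is immediately a probability distribution by Fact \ref{fact:smin}(ii).

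Part (iii) uses Lemma \ref{lemma:smin}(i). The only delicate point is checking that its hypothesis $\ell'_i \le 1$ is met: since $c \ge 1$ and $\ell_i \le 1$, we have $\ell'_i = \ell_i/c \le 1$. Then
\begin{equation*}
\fmin_c(x+\ell) - \fmin_c(x) = c\bigl(\smin(y+\ell') - \smin(y)\bigr) \ge \tfrac{c}{2}\nabla\smin(y)^T \ell' = \tfrac{1}{2}\nabla\fmin_c(x)^T\ell,
\end{equation*}
using part (ii) in the last equality. Part (iv) proceeds similarly from Lemma \ref{lemma:smin}(ii) applied to $y$ and $\ell'$ (which needs no bound on $\ell'_i$):
\begin{equation*}
\norm{\nabla\fmin_c(x+\ell) - \nabla\fmin_c(x)}_1 = \norm{\nabla\smin(y+\ell') - \nabla\smin(y)}_1 \le 2 \nabla\smin(y)^T \ell' = \tfrac{2}{c}\nabla\fmin_c(x)^T \ell,
\end{equation*}
again invoking part (ii) at the end.

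There is no real obstacle here; the whole lemma is a mechanical transfer through the rescaling $x \mapsto x/c$. The only bookkeeping point worth emphasizing is that the prefactor $\tfrac{2}{c}$ in (iv) is exactly what one gets when one inverse factor of $c$ from $\ell' = \ell/c$ survives after identifying $\nabla\fmin_c(x) = \nabla\smin(y)$, and that the $\tfrac{1}{2}$ constant in (iii) is preserved because the $c$ from the outer scaling cancels against the $\tfrac{1}{c}$ from $\ell'$. This is the sole conceptual content and the reason the paper bothers to define $\fmin_c$ in addition to $\smin$: it gives a tunable knob to shrink the Lipschitz constant of the gradient at the controlled price of a weaker $c\ln n$ additive approximation of the minimum in (i).
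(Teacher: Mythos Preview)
Your proof is correct and follows essentially the same approach as the paper: both substitute $y=\tfrac{1}{c}x$, $\ell'=\tfrac{1}{c}\ell$ and transfer Fact~\ref{fact:smin} and Lemma~\ref{lemma:smin} through the rescaling via the chain rule. If anything, you are slightly more careful than the paper in explicitly verifying the hypothesis $\ell'_i\le 1$ needed to invoke Lemma~\ref{lemma:smin}(i) in part~(iii).
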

\begin{proof}
	Let $x'=\frac{1}{c}x$ and $\ell'=\frac{1}{c}\ell$. Since $c\ge 1$, we have that
        $x', \ell' \ge 0$ and $\ell_i \le 1$. 
	\begin{enumerate}[(i)]
	\item Using Fact \ref{fact:smin} we obtain:
	\begin{align*}
		\fmin_c(x) &= c\smin(\tfrac{1}{c}x) \le n \min(\tfrac{1}{c}x) = \min(x)\\
		\fmin_c(x) &= c\smin(\tfrac{1}{n}x) \ge n (\min(\tfrac{1}{c}x) - \ln n) = \min(x) - c\ln n
	\end{align*}
	\item Using Fact \ref{fact:smin} we obtain:
 	\begin{align*}
		\nabla\fmin_c(x)_i &= \frac{\partial \fmin_c(x)}{\partial x_i}\\ 
                           &= c\frac{\partial \smin(x')}{\partial x_i'} \frac{\partial x_i'}{\partial x_i}\\ 
                           &= \frac{\partial \smin(x')}{\partial x_i'}\\
                           &= \nabla\smin(\frac{1}{c}x)_i
	\end{align*}
	\item Using Lemma \ref*{lemma:smin} (\ref*{lemma:smin1}) we obtain:
	\begin{align*}
		\fmin_c(x+\ell)-\fmin_c(x) &= c(\smin(x'+\ell') - \smin(x')) \\
		&\ge c \frac{1}{2} \nabla\smin(x')^T\ell'\\
        &=\frac{1}{2} \nabla\fmin_c(x)^T\ell
	\end{align*}
	\item Using Lemma \ref*{lemma:smin} (\ref*{lemma:smin2}) we obtain:
	\begin{equation*}\begin{split}
		\|\nabla\fmin_c(x+\ell)&-\nabla\fmin_c(x)\|_1\\ 
		&= \norm{\nabla\smin(x'+\ell') - \nabla\smin(x')}_1\\
		&\le 2\nabla\smin(x')^T\ell'\\
        &= \frac{2}{c} \nabla\fmin_c(x)^T\ell
	\end{split}\end{equation*}
	\end{enumerate}
\end{proof}

\end{document}